\setlist{noitemsep,nolistsep,leftmargin=1.7em}
\DeclareFontFamily{U}{mathx}{\hyphenchar\font45}
\DeclareFontShape{U}{mathx}{m}{n}{
      <5> <6> <7> <8> <9> <10>
      <10.95> <12> <14.4> <17.28> <20.74> <24.88>
      mathx10
      }{}
\DeclareSymbolFont{mathx}{U}{mathx}{m}{n}
\DeclareMathSymbol{\bigtimes}{1}{mathx}{"91}
\def\emptyset{{\centernot\ocircle}}
\definecolor{darkred}{rgb}{0.7,0.1,0.1}
\definecolor{darkblue}{rgb}{0.1,0.1,0.8}
\definecolor{darkgreen}{rgb}{0.1,0.7,0.1}
\def\Sym{\mathfrak{S}}
\def\CH{\mathcal{H}}
\def\CJ{\mathcal{J}}
\providecommand{\figures}{false}
{ \ifthenelse{\equal{\figures}{false}} {#1}{\[ {\rm Figure \ missing !} \]} }{}
\def\id{\mathrm{id}}
\definecolor{connection}{rgb}{0.7,0.1,0.1}
\colorlet{symbols}{black!50}
\colorlet{boundary}{green!75!black}
\tikzset{
root/.style={circle,fill=black!50,inner sep=0pt, minimum size=3mm},
        dot/.style={circle,fill=black,inner sep=0pt, minimum size=1.5mm},
        dotred/.style={circle,fill=black!50,inner sep=0pt, minimum size=2mm},
        var/.style={circle,fill=black!10,draw=black,inner sep=0pt, minimum size=3mm},
        kernel/.style={semithick,shorten >=2pt,shorten <=2pt},
        kernels/.style={snake=zigzag,shorten >=2pt,shorten <=2pt,segment amplitude=1pt,segment length=4pt,line before snake=2pt,line after snake=5pt,},
        rho/.style={densely dashed,semithick,shorten >=2pt,shorten <=2pt},
           testfcn/.style={dotted,semithick,shorten >=2pt,shorten <=2pt},
        renorm/.style={shape=circle,fill=white,inner sep=1pt},
        labl/.style={shape=rectangle,fill=white,inner sep=1pt},
        xic/.style={very thin,circle,fill=symbols,draw=black,inner sep=0pt,minimum size=1.2mm},
        xi/.style={very thin,circle,fill=blue!10,draw=black,inner sep=0pt,minimum size=1.2mm},
        xix/.style={crosscircle,fill=blue!10,draw=black,inner sep=0pt,minimum size=1.2mm},
	xib/.style={very thin,circle,fill=blue!10,draw=black,inner sep=0pt,minimum size=1.6mm},
	xie/.style={very thin,circle,fill=green!50!black,draw=black,inner sep=0pt,minimum size=1.6mm},
	xid/.style={very thin,circle,fill=symbols,draw=black,inner sep=0pt,minimum size=1.6mm},
	xibx/.style={crosscircle,fill=blue!10,draw=black,inner sep=0pt,minimum size=1.6mm},
	kernels2/.style={very thick,draw=connection,segment length=12pt},
	not/.style={thin,circle,fill=symbols,draw=connection,fill=connection,inner sep=0pt,minimum size=0.5mm},
	>=stealth,
        }
\newlength\mylen
\tikzset{
markstart/.style={
  decoration={
    markings,
    mark=at position 0.5 with {
      \node[draw=none,inner sep=0pt,fill=none,text width=0pt,minimum size=0pt] {\global\setlength\mylen{\pgfdecoratedpathlength}};
    },
  },
  preaction={decorate},
  postaction={
  	draw=boundary,ultra thick,>=space,
    dash pattern=on 0.4\mylen off 0.8\mylen,dash phase=0\mylen
  },
  postaction={
  	draw=black,
    dash pattern=on 0.6\mylen off 0.4\mylen,dash phase=0.6\mylen
  },
  }
}
\tikzset{
markend/.style={
  decoration={
    markings,
    mark=at position 0.5 with {
      \node[draw=none,inner sep=0pt,fill=none,text width=0pt,minimum size=0pt] {\global\setlength\mylen{\pgfdecoratedpathlength}};
    },
  },
  preaction={decorate},
  postaction={
  	draw=black,
    dash pattern=on 0.6\mylen off 0.8\mylen,dash phase=0
  },
  postaction={
  	draw=boundary,ultra thick,
    dash pattern=on 0.4\mylen off 0.8\mylen,dash phase=0.6\mylen
  },
  }
}
\tikzset{
markboth/.style={
  decoration={
    markings,
    mark=at position 0.5 with {
      \node[draw=none,inner sep=0pt,fill=none,text width=0pt,minimum size=0pt] {\global\setlength\mylen{\pgfdecoratedpathlength}};
    },
  },
  preaction={decorate},
  postaction={
  	draw=black,
    dash pattern=on 0.2\mylen off 0.8\mylen,dash phase=0.4\mylen
  },
  postaction={
  	draw=boundary,ultra thick,
    dash pattern=on 0.4\mylen off 0.8\mylen,dash phase=0.6\mylen
  },
  postaction={
  	draw=boundary,ultra thick,
    dash pattern=on 0.4\mylen off 0.8\mylen,dash phase=0
  },
  }
}
\def\DeclareSymbol#1#2#3{\expandafter\gdef\csname MH@symb@#1\endcsname{\tikz[baseline=#2,scale=0.15,draw=symbols,line join=round]{#3}}\expandafter\gdef\csname MH@symb@#1s\endcsname{\scalebox{0.7}{\tikz[baseline=#2,scale=0.15,draw=symbols,line join=round]{#3}}}}
\def\<#1>{\csname MH@symb@#1\endcsname}
\def\DeclarePicture#1#2#3{\expandafter\gdef\csname MH@symb@#1\endcsname{\tikz[baseline=#2,line join=round,style={thick}]{#3}}}
\def\CH{\mathcal{H}}
\def\CP{\mathcal{P}}
\def\CW{\mathcal{W}}
\def\CG{\mathcal{G}}
\def\CJ{\mathcal{J}}
\def\CA{\mathcal{A}}
\def\CE{\mathcal{E}}
\def\CC{\mathcal{C}}
\def\CQ{\mathcal{Q}}
\def\CB{\mathcal{B}}
\def\CM{\mathcal{M}}
\def\CT{\mathcal{T}}
\def\Labn{\mathfrak{n}}
\def\Labhom{\mathfrak{t}}
\def\Lab{\mathfrak{L}}
\def\Deltam{\Delta^{\!-}}
\def\Deltamp{\Delta^{\!\prime}}
\def\Deltak#1{\Delta^{\!\prime\,(#1)}}
\def\CS{\mathcal{S}}
\def\CR{\mathcal{R}}
\def\CF{\mathcal{F}}
\def\kk{\mathfrak{n}}
\def\mm{\mathfrak{m}}
\def\bn{\mathbf{n}}
\def\bT{\mathbf{T}}
\def\c{\mathfrak{d}}
\def\K{\mathfrak{K}}
\def\Contr{\mathfrak{G}}
\def\Del{\mathop{\mathrm{Del}}}
\def\line{%
\begin{tikzpicture}[style={thick}]
\node[dot] (l) at (0,0) {};
\node[dot] (r) at (1,0) {};
\draw[->] (l) -- (r) node[midway,above=-0.1] {\tiny$\Labhom$};
\draw[thick,red] (l) -- ++(180:0.5) node[midway,above=-0.1] {\tiny$0$};
\draw[thick,red] (r) -- ++(0:0.5) node[midway,above=-0.1] {\tiny$0$};
\end{tikzpicture}%
}
\def\linev{%
\begin{tikzpicture}[style={thick},baseline=-0.1cm]
\node[dot] (l) at (0,-0.3) {};
\node[dot] (r) at (0.4,0.3) {};
\draw[->] (l) -- (r) node[midway,left=-0.05] {\tiny$\Labhom$};
\draw[thick,red] (l) -- ++(180:0.5) node[midway,above=-0.1] {\tiny$0$};
\draw[thick,red] (r) -- ++(0:0.5) node[midway,above=-0.1] {\tiny$0$};
\end{tikzpicture}%
}
\newenvironment{DIFnomarkup}{}{} 
\newtheorem{property}{Property}
\newfont{\indic}{bbmss12}
\def\one{\mathbf{1}}
\def\T{\mathbf{T}}
\def\Deltam{\Delta^{\!-}}
\def\id{\mathrm{id}}
\def\NN{\mathfrak{N}}
\def\S{\mathbf{S}}
\def\M{\mathbb{M}}
\def\out{\mathop{\mathrm{out}}}
\newcommand{\inte}{\mathrm{int}}
\newcommand{\exte}{\mathrm{ext}}
\newcommand{\scale}{\mathrm{scale}}
\def\BPHZ{\textnormal{\tiny \textsc{bphz}}}
\def\full{\textnormal{\tiny {full}}}
\def\mmax{\textnormal{\tiny {max}}}
\def\cl{\mathrm{cl}}
\def\uuparrow{{\uparrow\uparrow}}
\def\simnot{\stackrel{\vbox to 0.15em{\hbox{\kern0.07em$^\circ$}}}{\sim}}
\begin{document}

\title{An analyst's take on the BPHZ theorem}
\author{M. Hairer}
\institute{Mathematics Institute, Imperial College London\\ \email{m.hairer@imperial.ac.uk}}

\maketitle

\begin{abstract}
We provide a self-contained formulation of the BPHZ theorem 
in the Euclidean context, which yields a systematic procedure 
to ``renormalise'' otherwise divergent integrals appearing in 
generalised convolutions of functions with a singularity of prescribed order 
at their origin. We hope that the formulation given in this article will
appeal to an analytically minded audience and that it will help to clarify
to what extent such renormalisations are arbitrary (or not). In particular,
we do not assume any background whatsoever in quantum field theory and we
stay away from any discussion of the physical context in which such problems
typically arise.
\end{abstract}

\setcounter{tocdepth}{2}
\tableofcontents

\section{Introduction}

The BPHZ renormalisation procedure named after 
Bogoliubov, Parasiuk, Hepp and Zimmerman \cite{BP,hepp1966,Zimmermann} (but see also 
the foundational results by Dyson and Salam \cite{Dyson,Dyson2,Salam,Salam2}) provides a consistent way
to renormalise probability amplitudes associated to Feynman diagrams in perturbative
quantum field theory (pQFT). The main aim of this article is to provide an
analytical result, Theorem~\ref{theo:BPHZ} below, which is a general form of the ``BPHZ theorem'' in the
Euclidean context. To a large extent, this theorem has been part of the folklore of
mathematical physics  since the publication of the abovementioned works (see for example 
the article \cite{FMRS85} which gives rather sharp analytical bounds and is close in formulation 
to our statement, as well as the series 
of articles \cite{CK,CK1,CK2} which elucidate some of the algebraic aspects of the theory, but focus
on dimensional regularisation which is not available in the general context considered here), but it seems
difficult to find precise  analytical statements in the literature that go beyond the specific
context of pQFT.
One reason seems to be that, in the context of the perturbative expansions arising in pQFT, 
there are three related problems. The first is to control the small-scale behaviour of the 
integrands appearing in Feynman diagrams (the ``ultraviolet behaviour''), the second is to control
their large scale (``infrared'') behaviour, and the final problem is to show that 
the renormalisation required to deal with the first problem can be implemented 
by modifying (in a scale-dependent way) the finitely many coupling constants appearing in 
the Lagrangian of the theory at hand, so that one still has a physical theory.

The approach we take in the present article is purely analytic and completely unrelated to any physical 
theory, so we do not worry about the potential physical interpretation of the renormalisation procedure. We do 
however show in Section~\ref{sec:props} that it has a number of very nice mathematical properties so that the
renormalised integrals inherit many natural properties from their unrenormalised counterparts. 
We also completely discard
the infrared problem by assuming that all the kernels (``propagators'') under consideration are 
compactly supported. For the reader who might worry that this could render our main result
all but useless, we give a simple separate argument showing how kernels 
with algebraic decay at infinity can be dealt with as well.
Note also, that contrary to much of the related theoretical and mathematical physics literature,
all of our arguments take place in configuration space, rather than in Fourier space.
In particular, the analysis presented in this article shares similarities with a number
of previous works, see for example \cite{MR0342091,CalRiv,FMRS85} and references therein.

The approach taken here is informed by some results recently obtained in the context of
the analysis of rough stochastic PDEs in \cite{Proceedings,BHZalg,BPHZana}. Indeed, 
the algebraic structure described in Sections~\ref{sec:Hopf} and~\ref{sec:twisted} below is 
very similar to the one described in \cite{Proceedings,BHZalg}, with the exception 
that there is no ``positive renormalisation'' in the present context. 
In this sense, this article can be seen as a perhaps gentler introduction to these results, with
the content of Section~\ref{sec:BPHZ} roughly parallel to  \cite{BHZalg},
while the content of Section~\ref{sec:analytical} is rather close to that of \cite{BPHZana}.
In particular, Section~\ref{sec:BPHZ} is rather algebraic in nature and allows to conceptualise
the structure of the counterterms appearing in the renormalisation procedure, while 
Section~\ref{sec:analytical} is rather analytical in nature and contains the multiscale
analysis underpinning our main continuity result, Theorem~\ref{theo:BPHZ}.
Finally, in Section~\ref{sec:largeScale}, we deal with kernels exhibiting only algebraic decay
at infinity. While the conditions given in this section are sharp in the absence of
any cancellations in the large-scale behaviour, we do not introduce an 
analogue of the ``positive renormalisation'' of \cite{BPHZana}, so that the argument remains
relatively concise.

\subsection*{Acknowledgements}

{\small
The author would like to thank Ajay Chandra and Philipp Sch\"onbauer 
for several useful discussions during the preparation of this article.
Financial support through ERC consolidator grant 615897 and a Leverhulme 
leadership award is gratefully acknowledged.
}

\section{An analytical form of the BPHZ theorem}
\label{sec:BPHZ}

Fix a countable set $\Lab$ of labels, a map 
$\deg \colon \Lab \to \R$, and an integer dimension $d > 0$. We assume that the set
of labels has a distinguished element which we denote by $\delta \in \Lab$ satisfying $\deg\delta = -d$ and
that, for every multiindex $k$, there is an
injective map $\Labhom \mapsto \Labhom^{(k)}$ on $\Lab$ with $\Labhom^{(0)} = \Labhom$ and such that
\begin{equ}[e:propLab]
(\Labhom^{(k)})^{(\ell)} = \Labhom^{(k+\ell)}\;,\qquad \deg \Labhom^{(k)} = \deg \Labhom - |k|\;.
\end{equ}
We also set $\Lab_\star = \Lab \setminus \{\delta^{(k)}\,:\, k \in \N^d\}$
and we assume that there is a finite set $\Lab_0 \subset \Lab$ such that
every element of $\Lab$ is of the form $\Labhom^{(k)}$ for some $k \in \N^d$ and
some $\Labhom \in \Lab_0$. We then give the following definition.

\begin{definition}
A \textit{Feynman 
diagram} is a finite directed graph $\Gamma = (\CV,\CE)$
endowed with the following additional data:
\begin{claim}
\item An ordered set of distinct vertices $\CL = \{[1],\ldots,[k]\} \subset \CV$ 
such that each vertex in $\CL$ has exactly one outgoing edge (called a ``leg'') 
and no incoming edge, and such that each connected
component of $\Gamma$ contains at least one leg. We will frequently use
the notation $\CV_\star = \CV \setminus \CL$, as well as 
$\CE_\star \subset \CE$ for the edges that are not legs.
\item A decoration $\Labhom\colon \CE \to \Lab$ of the edges of $\Gamma$ such that 
$\Labhom(e) \in \Lab_\star$ if and only if $e \in \CE_\star$.
\end{claim}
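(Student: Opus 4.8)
The final displayed item is a \emph{definition}, not a proposition, so there is strictly speaking no assertion to establish; the only honest ``proof proposal'' is to record the well-posedness checks that license later use of the notion, namely that the bundled data are internally consistent, that they are compatible with the label structure fixed at the start of Section~\ref{sec:BPHZ}, and that the resulting class of objects is non-empty. This is the programme I would carry out.

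First I would verify internal consistency of the combinatorial data. Since $\CL=\{[1],\dots,[k]\}$ is a set of $k$ distinct vertices, each of out-degree one and in-degree zero, the legs are exactly $k$ distinct edges, and consequently the complements $\CV_\star=\CV\setminus\CL$ and $\CE_\star\subset\CE$, the edges that are not legs, are unambiguously defined; the stipulation that every connected component of $\Gamma$ intersect $\CL$ is a genuine but plainly satisfiable constraint, merely ruling out components carrying no external edge. Next I would check that the edge decoration $\Labhom\colon\CE\to\Lab$ dovetails with the standing assumptions on $\Lab$: the requirement $\Labhom(e)\in\Lab_\star\iff e\in\CE_\star$ forces each leg to be decorated by some $\delta^{(k)}$, which is coherent because $\Lab_\star=\Lab\setminus\{\delta^{(k)}:k\in\N^d\}$ and because, by~\eqref{e:propLab}, $\deg\delta^{(k)}=\deg\delta-|k|=-d-|k|$ is precisely the homogeneity of the $k$-th derivative of the Dirac mass in dimension $d$, so that legs play the role of $\delta$-kernels while the interior edges carry genuine propagator labels in $\Lab_\star$. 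Finally, $\deg$ is defined on all of $\Lab$ via $\deg\Labhom^{(k)}=\deg\Labhom-|k|$ together with the assumption that every label arises as $\Labhom^{(k)}$ with $\Labhom\in\Lab_0$, so every edge of every Feynman diagram has a well-defined degree.

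For non-emptiness it suffices to exhibit examples, and those already present in the text do: the minimal diagram consisting of a single interior vertex, a leg vertex $[1]$, and the edge from $[1]$ decorated by $\delta$; or, with nontrivial interior, the graphs $\<loop>$ and $\<triangle>$ (a $2$-cycle, resp.\ a $3$-cycle, fitted with one leg), which become Feynman diagrams once every interior edge is decorated by an arbitrary element of $\Lab_\star$ and the leg by $\delta$. The one point I would flag as the ``main obstacle'' — inasmuch as a definition can have one — is that the substance lies not here but in checking that this class is stable under the operations used later, above all contraction of subdiagrams: one must ensure that such contractions preserve both the ``at least one leg per component'' property and the equivalence $\Labhom(e)\in\Lab_\star\iff e\in\CE_\star$ for the induced decoration. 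That verification belongs to Sections~\ref{sec:Hopf} and~\ref{sec:analytical}, not to the definition itself.
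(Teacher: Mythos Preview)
Your assessment is correct: the displayed item is a definition, and the paper accordingly offers no proof --- it simply states the data, follows with a drawing convention and the example of Figure~\ref{fig:FD}, and moves on. Your consistency and non-emptiness checks are all accurate and go somewhat beyond what the paper records at this point; there is nothing to correct.
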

\end{definition}

\begin{wrapfigure}{R}{5cm}
\begin{center}
\vspace{-1em}
\begin{tikzpicture}[style={thick}]
\node[dot] (l) at (0,0) {};
\node[dot] (u) at (1.5,.9) {};
\node[dot] (d) at (1.5,-.9) {};
\node[dot] (r) at (3,0) {};
\draw[->] (l) -- (u);
\draw[->] (u) -- (r);
\draw[->] (u) to[bend left=25] (d);
\draw[->] (u) to[bend right=25] (d);
\draw[->] (l) -- (d);
\draw[->] (d) -- (r);
\draw[thick,red] (l) -- ++(150:0.5);
\draw[thick,red] (l) -- ++(-150:0.5);
\draw[thick,red] (r) -- ++(0:0.5);
\end{tikzpicture}
\vspace{-1em}
\end{center}
\caption{A Feynman diagram.}\label{fig:FD}
\end{wrapfigure}

We will always use the convention of \cite{KPZJeremy} that $e_-$ and $e_+$ are 
the source and target of an edge $e$, so that $e = (e_- \to e_+)$. We also label legs 
in the same way as the corresponding element in $\CL$, i.e.\ we call the unique
edge incident to the vertex $[j]$ the $j$th leg of $\Gamma$.
The way we usually think of Feynman diagrams is as labelled graphs $(\CV_\star,\CE_\star)$ with 
a number of legs attached to them, where the legs are ordered and each leg is assigned 
a $d$-dimensional multiindex.
An example of Feynman diagram with 3 legs 
is shown in Figure~\ref{fig:FD}, with legs drawn in red and decorations suppressed.
We do not draw the arrows on the legs since they are always incoming by definition. 
In this example, $|\CV| = 7$  and $|\CV_\star| = 4$. 

Write now $\S = \R^d$, and assume that we are given a kernel $K_\Labhom \colon \S \to \S$
for every $\Labhom \in \Lab_\star$, such that $K_\Labhom$ exhibits a behaviour of
order $\deg \Labhom$ at the origin but is smooth otherwise. For simplicity, we also
assume that these kernels are all compactly supported, say in the unit ball.
More precisely, we assume that for every $\Labhom \in \Lab_\star$ and every $d$-dimensional
multiindex $k$ there exists a constant $C$ such that  one has the bound
\begin{equ}[e:propKer]
|D^k K_\Labhom(x)| \le C |x|^{\deg \Labhom - |k|}\one_{|x| \le 1}\;,\qquad \forall x \in \S\;.
\end{equ}
We also extend $K$ to all of $\Lab$ by using the convention that $K_\delta = \delta$,
a Dirac mass at the origin, and we impose that for every multiindex $k$ and label $\Labhom \in \Lab$, one has
\begin{equ}[e:propK]
K_{\Labhom^{(k)}} = D^k K_\Labhom\;.
\end{equ}
Note that \eqref{e:propKer} is compatible with \eqref{e:propLab} so that non-trivial 
(i.e.\ not just vanishing or smooth near the origin) kernel assignments do actually exist.
To some extent it is also compatible with the convention $K_\delta = \delta$ and
$\deg \delta = -d$ since the ``delta function'' on $\R^d$ is obtained as a distributional limit of functions
satisfying a uniform bound of the type \eqref{e:propKer} with $\deg\Labhom = -d$.
Given all this data, we would now like to associate to each Feynman diagram $\Gamma$ with $k$ legs
a distribution $\Pi \Gamma$ on $\S^k$ by setting
\begin{equ}[e:eval]
\bigl(\Pi \Gamma\bigr)(\phi) = \int_{\S^{\CV}} \prod_{e \in \CE} K_{\Labhom(e)}(x_{e_+} - x_{e_-}) \phi(x_{[1]},\ldots,x_{[k]})\,dx\;.
\end{equ}
Note that of course $\Pi \Gamma$ does not just depend on the combinatorial data
$\Gamma = (\CV,\CE,\CL,\Labhom)$, but also on the analytical data $(K_\Labhom)_{\Labhom \in \Lab_\star}$.
We sometimes suppress the latter dependency on our notation in order to keep it light, but it will 
be very useful later on to also allow ourselves to vary the kernels $K_\Labhom$.
We call the map $\Pi$ a ``valuation''.

The problem is that on the face of it, the definition \eqref{e:eval} does not
always make sense.
The presence of the (derivatives of) delta functions is not
a problem: writing $v_i \in \CV_\star$ for the unique vertex such that 
$([i] \to v_i) \in \CE$ and $\ell_i$ for the multiindex such that the
label of this leg is $\delta^{(\ell_i)}$, we can rewrite \eqref{e:eval} as 
\begin{equ}[e:evalBis]
\bigl(\Pi \Gamma\bigr)(\phi) = \int_{\S^{\CV_\star}} \prod_{e \in \CE_\star} K_{\Labhom(e)}(x_{e_+} - x_{e_-}) 
\bigl(D_1^{\ell_1}\cdots D_k^{\ell_k} \phi\bigr)(x_{v_1},\ldots,x_{v_k})\,dx\;.
\end{equ}
The problem instead is the possible lack of integrability of the integrand appearing
in \eqref{e:evalBis}. For example, the simplest
nontrivial Feynman diagram with two legs is given by
$\Gamma = \line$
which, by \eqref{e:evalBis}, should be associated to the distribution
\begin{equ}[e:evalKernel]
\bigl(\Pi \Gamma\bigr)(\phi) = \int_{\S^2} K_\Labhom(y_1 - y_0) \phi(y_0,y_1)\,dy\;.
\end{equ}
If it happens that $\deg \Labhom < -d$, then $K_\Labhom$ is non-integrable in general,
so that this integral may not converge. It is then natural to modify our definition,
but ``as little as possible''. In this case, we note that if the test function $\phi$ happens
to vanish near the diagonal $y_1 = y_0$, then the singularity of $K_\Labhom$ does not matter
and \eqref{e:evalKernel} makes perfect sense.
We would therefore like to find a distribution $\Pi \Gamma$ which agrees with 
\eqref{e:evalKernel} on such test functions but still yields finite values for
\textit{every} test function $\phi$. One way of achieving this is to set
\begin{equ}[e:renormSimple]
\bigl(\Pi \Gamma\bigr)(\phi) = \int_{\S^2} K_\Labhom(y_1-y_0) \Bigl(\phi(y_0,y_1) - \sum_{|k| + \deg \Labhom \le -d} {(y_1 - y_0)^k \over k!} D_2^k \phi(y_0,y_0) \Bigr)\,dy\;.
\end{equ} 
At first glance, this doesn't look very canonical since it seems that the variables $y_0$ and $y_1$
no longer play a symmetric role in this expression. However, it is an easy exercise to see that 
the \textit{same} distribution can alternatively also be written as
\begin{equ}
\bigl(\Pi \Gamma\bigr)(\phi) = \int_{\S^2} K_\Labhom(y_1-y_0) \Bigl(\phi(y_0,y_1) - \sum_{|k| + \deg \Labhom \le -d} {(y_0 - y_1)^k \over k!} D_1^k \phi(y_1,y_1) \Bigr)\,dy\;.
\end{equ} 
The BPHZ theorem is a far-reaching generalisation of this construction. To formalise what we
mean by this, write $\CK^-_\infty$ for the space of all smooth kernel assignments as
above (compactly supported in the unit ball and
satisfying \eqref{e:propK}). When endowed with the system of 
seminorms given by the minimal constants in \eqref{e:propKer}, its completion $\CK^-_0$ is a Fr\'echet 
space.

With these notations, a ``renormalisation procedure'' is a map
$K \mapsto \Pi^K$ turning a kernel assignment $K \in \CK^-_0$ into 
a valuation $\Pi^K$. The purpose of the BPHZ theorem is to argue that the 
following question can be answered positively.

\medskip{\parindent0em\textbf{Main question:}\label{mainq} Is there a consistent renormalisation procedure 
such that, for every Feynman diagram, $\Pi \Gamma$ can be 
interpreted as a ``renormalised version'' of \eqref{e:eval}?}\medskip

As stated, this is a very loose question since we have not specified what we mean
by a ``consistent'' renormalisation procedure and what properties we would like a valuation to
have in order to be a candidate for an interpretation of \eqref{e:eval}.
One important property we would like a good renormalisation procedure to have
is the continuity of the map $K \mapsto \Pi^K$. In this way, we can always reason on
smooth kernel assignments $K \in \CK^-_\infty$ and then ``only'' need to show that the 
procedure under consideration extends continuously to all of $\CK^-_0$. 
Furthermore, we would like $\Pi^K$ to inherit as many properties as possible from
its interpretation as the formal expression \eqref{e:eval}.
Of course, as
already seen, the ``na\"ive'' renormalisation procedure given by \eqref{e:eval}
itself does \textit{not} have the required continuity property, so we will have to modify it. 

\subsection{Consistent renormalisation procedures}

The aim of this section is to collect and formalise a number of properties of \eqref{e:eval}
which then allows us to formulate precisely what we mean by a ``consistent'' renormalisation procedure.
Let us write $\CT$ for the free (real) vector space generated by
all Feynman diagrams. This space comes with a natural grading and we write
$\CT_k \subset \CT$ for the subspace generated by diagrams with $k$ legs.
Note that $\CT_0 \approx \R$ since there is exactly one Feynman diagram with $0$ legs,
which is the empty one.

Write $\CS_k$ for the space of all distributions on $\S^k$
that are translation invariant in the sense that, for $\eta \in \CS_k$, $h \in \S$,
and any test function $\phi$, one has $\eta(\phi) = \eta(\phi\circ \tau_h)$ where 
$\tau_h(y_1,\ldots,y_k) = (y_1 + h,\ldots,y_k+h)$.
We will write $\CS_k^{(c)}\subset \CS_k$ for the subset of ``compactly supported'' distributions
in the sense that there exists
a compact set $\K \subset \S^k / \S$ such that $\eta(\phi) = 0$ as soon as $\supp \phi \cap \K = \emptyset$.

\begin{remark}\label{rem:compact}
Compactly supported distributions can be tested against any smooth function $\phi$ with the 
property that for any $x \in \S^k$, the set $\{h \in \S\,:\, \phi(\tau_h(x)) \neq 0\}$ is compact.
\end{remark}

Note that $\CS_1 \approx \R$ since translation invariant distributions in one
variable are naturally identified with constant functions. We will use the convention 
$\CS_0 \approx \R$ by identifying ``functions in $0$ variables'' with $\R$. 
We also set $\CS = \bigoplus_{k \ge 0} \CS_k$, so that a valuation $\Pi$
can be viewed as a linear map $\Pi \colon \CT \to \CS$ which respects the
respective graduations of these spaces.

Note that the symmetric group $\Sym_k$ in $k$ elements acts naturally on $\CT_k$ by
simply permuting the order of the legs. Similarly, $\Sym_k$ acts on $\CS_k$ by 
permuting the arguments of the test functions. 
Given two Feynman diagrams $\Gamma_1 \in \CT_k$ and $\Gamma_2 \in \CT_\ell$, we then 
write $\Gamma_1 \bullet \Gamma_2 \in \CT_{k + \ell}$ for
the Feynman diagram given by the disjoint union of $\Gamma_1$ and $\Gamma_2$.
Here, we renumber the $\ell$ legs of $\Gamma_2$ in an order-preserving way from
$k+1$ to $k+\ell$, so that although $\Gamma_1 \bullet \Gamma_2 \neq \Gamma_2 \bullet \Gamma_1$
in general, one has 
$\Gamma_1 \bullet \Gamma_2 = \sigma_{k,\ell}(\Gamma_2 \bullet \Gamma_1)$, where
$\sigma_{k,\ell} \in \Sym_{k+\ell}$ is the permutation that swaps
$(1,\ldots,\ell)$ and $(\ell+1,\ldots,\ell+k)$.
Given distributions $\eta_1 \in \CS_k$ and $\eta_2 \in \CS_\ell$,
we write $\eta_1 \bullet \eta_2 \in \CS_{k+\ell}$ for the distribution such that
\begin{equ}
\bigl(\eta_1 \bullet \eta_2\bigr)(\phi_1 \otimes \phi_2) = \eta_1(\phi_1)\eta_2(\phi_2)\;.
\end{equ}
Similarly to above, one has 
$\eta_1 \bullet \eta_2 = \sigma_{k,\ell}(\eta_2 \bullet \eta_1)$.
We extend $\bullet$ by linearity to all of $\CT$ and $\CS$ respectively, thus turning these
spaces into (non-commutative) algebras. This allows us to formulate the first property we would like to retain.

\begin{property}\label{prop:hom}
A consistent renormalisation procedure should produce valuations $\Pi$ that are 
graded algebra morphisms from $\CT$ to $\CS$ and such that, for every Feynman diagram $\Gamma$
with $k$ legs and every $\sigma \in \Sym_k$, one has $\Pi \sigma(\Gamma) = \sigma(\Pi \Gamma)$.
Furthermore $\Pi \Gamma \in \CS_k^{(c)}$ if $\Gamma$ is connected with $k$ legs.
\end{property}

Similarly, consider a Feynman diagram $\Gamma$ with $k \ge 2$ legs such that the label of the $k$th leg
is $\delta$ and such that the connected component of $\Gamma$ containing $[k]$ 
contains at least one other leg. Let $\Del_k \Gamma$ be the Feynman diagram identical to $\Gamma$, but with the 
$k$th leg removed. If the label of the $k$th leg is $\delta^{(m)}$ with $m \neq 0$, we set
$\Del_k \Gamma = 0$. If we write $\iota_k$ for the natural injection of smooth functions on $\S^{k-1}$ to
functions on $\S^{k}$ given by $(\iota_k \phi)(x_1,\ldots,x_k) = \phi(x_1,\ldots,x_{k-1})$,
we have the following property for \eqref{e:eval} which is very natural
to impose on our valuations..

\begin{property}\label{prop:restr}
A consistent renormalisation procedure should produce valuations $\Pi$
such that for any connected $\Gamma$ with $k$ legs, one has
$
(\Pi \Del_k \Gamma)(\phi) = (\Pi \Gamma)(\iota_k\phi)
$
for all compactly supported test functions $\phi$ on $\S^{k-1}$.
\end{property}

(Note that the right hand side is well-defined by Remark~\ref{rem:compact}
even though $\iota_k \phi$ is no longer compactly supported.)
To formulate our third property, it will be useful to have a notation for our test functions. 
We write $\CD_k$ for
the set of all $\CC^\infty$ functions on $\S^k$ with compact support. 
It will be convenient to consider the following subspaces of $\CD_k$. Let
$\CA$ be a collection of subsets of $\{1,\ldots,k\}$ such that
every set $A \in \CA$ contains at least two elements.
Then, we write 
$\CD_k^{(\CA)} \subset \CD_k$ for the set of such functions $\phi$ which vanish 
in a neighbourhood of the set
$\Delta_k^{(\CA)} \subset \S^k$ given by 
\begin{equ}[e:defDiagonal]
\Delta_k^{(\CA)} = \{y\in \S^k\,:\, \exists A \in \CA \;\text{with}\; y_i = y_j\; \forall i,j \in A\}\;.
\end{equ}
Because of this definition, we also call a collection $\CA$ as above a ``collision set''. 
Note that in particular one has $\CD_k^{(\emptyset)} = \CD_k$.

A first important question to address then concerns the conditions under which
the expression \eqref{e:eval} converges. A natural notion then is that of the 
degree of a subgraph of a Feynman diagram. In this article, we define a subgraph $\bar \Gamma \subset\Gamma$ 
to be a subset $\bar \CE$ of the collection $\CE_\star$ of internal edges
and a subset $\bar \CV \subset \CV_\star$ of the internal vertices such that 
$\bar \CV$ consists precisely of those vertices incident to at least one edge in $\bar \CE$.
(In particular, isolated nodes are not allowed in $\bar \Gamma$.) 
Given such a subgraph $\bar \Gamma$, we then set
\begin{equ}[e:degreeSubgraph]
\deg\bar \Gamma \eqdef \sum_{e \in \bar \CE} \deg\Labhom(e) + d(|\bar \CV| - 1)\;.
\end{equ}
We define the degree of the full Feynman diagram $\Gamma$ in exactly the same way,
with $\bar \CE$ and $\bar \CV$ replaced by $\CE_\star$ and $\CV_\star$.
One then has the following result initially due to Weinberg \cite{Weinberg}.
See also \cite[Thm~A.3]{KPZJeremy} for the proof of a slightly more general statement
which is also notationally closer to the setting considered here.

\begin{proposition}
If $\Gamma$ is a Feynman diagram with $k$ legs such that $\deg \bar \Gamma > 0$ for every 
subgraph $\bar \Gamma \subset \Gamma$, then the integral in \eqref{e:evalBis} is absolutely convergent
for every $\phi \in \CD_k$. \qed
\end{proposition}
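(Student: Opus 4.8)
The plan is to establish absolute convergence of \eqref{e:evalBis} by a multiscale (Hepp sector) decomposition of the domain of integration, organised so that the convergence of each geometric series produced by the power counting is equivalent to the positivity of $\deg\bar\Gamma$ for some subgraph $\bar\Gamma\subset\Gamma$. As a first reduction, since $\phi\in\CD_k$ is compactly supported with bounded derivatives and $|D^kK_\Labhom(x)|\le C|x|^{\deg\Labhom-|k|}\one_{|x|\le1}$, I would bound the modulus of the integrand of \eqref{e:evalBis} by a constant multiple of $\prod_{e\in\CE_\star}|x_{e_+}-x_{e_-}|^{\deg\Labhom(e)}\one_{|x_{e_+}-x_{e_-}|\le1}$ on the set where $(x_{v_1},\dots,x_{v_k})$ lies in the support of the (still compactly supported, smooth, bounded) function $D_1^{\ell_1}\cdots D_k^{\ell_k}\phi$. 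Because every connected component of $\Gamma$ carries a leg and the indicators confine the endpoints of each internal edge to within unit distance, the full configuration $x\in\S^{\CV_\star}$ is then confined to a fixed bounded set $B$, and it suffices to prove $\int_B\prod_{e\in\CE_\star}|x_{e_+}-x_{e_-}|^{\deg\Labhom(e)}\,dx<\infty$.

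Next I would cover $B$ by finitely many sectors, each indexed by a nested family $\CH$ of subsets of $\CV_\star$ (the ``clusters'': any two are comparable or disjoint, $\CV_\star$ and all singletons belong to $\CH$, and every non-singleton cluster has at least two maximal proper sub-clusters, its ``children'') together with an assignment of dyadic scales $0=n_{\CV_\star}$ with $n_{C'}>n_C$ whenever $C'\subsetneq C$. In the sector attached to such data one has $|x_v-x_w|\sim 2^{-n_{C(v,w)}}$, where $C(v,w)\in\CH$ is the smallest cluster containing both $v$ and $w$. That these sectors, with the $\sim$ relations read with fixed implicit constants depending only on $|\CV_\star|$, do cover $B$ is the standard cluster-decomposition lemma, proved by building $\CH$ top-down: each cluster is split into the connected components of the graph joining two of its vertices when their distance is at most a small fixed fraction of the cluster's diameter; see e.g.\ \cite[App.~A]{KPZJeremy}.

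On a fixed sector the integrand is bounded, up to a constant, by $\prod_{e\in\CE_\star}2^{-n_{C(e)}\deg\Labhom(e)}$ with $C(e):=C(e_-,e_+)$ (this holds on each dyadic shell regardless of the sign of $\deg\Labhom(e)$), while the volume of the sector is bounded by $\prod_{C\in\CH}2^{-n_C d(j_C-1)}$, where $j_C$ is the number of children of $C$: one overall position costs $O(1)$ using the support of $\phi$, and inside each cluster $C$ one integrates the positions of $j_C-1$ of its children over balls of radius $\sim2^{-n_C}$. Setting $\alpha_C=\sum_{e:\,C(e)=C}\deg\Labhom(e)+d(j_C-1)$, the sector integral is thus $\lesssim\sum_{(n_C)}\prod_{C\neq\CV_\star}2^{-n_C\alpha_C}$; summing this geometric series from the innermost clusters outward, one checks that after all clusters strictly below $C$ have been summed, the exponent controlling the sum over $n_C$ is $\sum_{D\subseteq C}\alpha_D$, and by the leaf-count identity $\sum_{D\subseteq C}(j_D-1)=|C|-1$ this equals $\sum_{e\in\CE_\star,\,e_-,e_+\in C}\deg\Labhom(e)+d(|C|-1)$, which is precisely $\deg\bar\Gamma_C+d\,(|C|-|\bar\CV_C|)$ where $\bar\Gamma_C=(\bar\CE_C,\bar\CV_C)$ is the subgraph of $\Gamma$ induced on the vertex set $C$. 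Since $|\bar\CV_C|\le|C|$, this exponent is $\ge\deg\bar\Gamma_C>0$ by hypothesis (and it equals $d(|C|-1)\ge d>0$ in the degenerate case $\bar\CE_C=\emptyset$), so every geometric series converges; summing over the finitely many sectors concludes.

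The analytic estimates in the reduction and in the last step are routine; the real substance is the combinatorial bookkeeping there — in particular identifying the telescoped exponents $\sum_{D\subseteq C}\alpha_D$ with subgraph degrees, the extra term $d(|C|-|\bar\CV_C|)$ being harmless since it only makes the sums more convergent and merely accounts for vertices that a cluster contains but through which it is not connected — together with the verification that finitely many hierarchies genuinely cover $B$. A more hands-on alternative integrates out the internal vertices one at a time; this also works, but it forces one to propagate the nested singularity structure of the partially integrated kernels, which in the end reproduces the very same combinatorics.
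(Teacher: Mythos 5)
Your argument is correct and is essentially the Hepp-sector/multiscale argument that the paper invokes for this proposition (it defers the proof to Weinberg and to \cite[Thm~A.3]{KPZJeremy}, and sketches the same scheme in the proof of Proposition~\ref{prop:WeinFancy}): your nested cluster hierarchies with dyadic scales are the sectors $D_\bT$ indexed by $(T,\bn)$, your integrand and volume bounds match $\prod_e 2^{-\bn(e)\deg\Labhom(e)}$ and $\prod_u 2^{-d\bn_u}$, and your telescoping of the exponents $\sum_{D\subseteq C}\alpha_D$ into induced-subgraph degrees is exactly the ``sum from the leaves inwards'' step. The bookkeeping, including the harmless extra term $d(|C|-|\bar\CV_C|)$ for vertices of a cluster isolated in the induced subgraph, is sound.
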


We will henceforth call a subgraph $\bar \Gamma \subset \Gamma$ \textit{divergent}
if $\deg \bar \Gamma \le 0$.
A virtually identical proof actually yields the following refined statement
which tells us very precisely where exactly there is a need for renormalisation.
\begin{proposition}\label{prop:WeinFancy}
Let $\Gamma$ be a Feynman diagram with $k$ legs and let $\CA$ be a collision set
such that, for every connected divergent subgraph $\bar \Gamma \subset \Gamma$,
there exists $A \in \CA$ such that every leg in $A$ is adjacent to $\bar \Gamma$. 
Then \eqref{e:evalBis} is absolutely convergent
for every $\phi \in \CD_k^{(\CA)}$.
\end{proposition}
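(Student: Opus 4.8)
The plan is to revisit the proof of the preceding (unrefined) Weinberg-type estimate and to keep track of \emph{where} in the integration domain a divergent subgraph actually spoils integrability; the point will be that $\phi \in \CD_k^{(\CA)}$ forces the integrand of \eqref{e:evalBis} to vanish on precisely those regions, so that the classical power-counting bound applies on what is left.

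Recall the structure of the classical argument. Since all kernels are supported in the unit ball, the integral in \eqref{e:evalBis} is effectively over a compact region, so only the small-scale behaviour of the integrand matters. After fixing one vertex per connected component, one decomposes $\S^{\CV_\star}$ into ``Hepp sectors'', on each of which the internal vertices collapse onto one another as a prescribed hierarchy of dyadic scales is descended, the vertices that have come within a given scale forming the vertex set of a connected subgraph. On such a sector the bound \eqref{e:propKer} controls the integrand by a product of powers of the scales, and integrating out the $d(|\bar\CV|-1)$ relative positions resolving a connected subgraph $\bar\Gamma$ collapsing at scale $2^{-j}$ produces, via the very definition \eqref{e:degreeSubgraph}, a factor of order $2^{-j\deg\bar\Gamma}$. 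Consequently the only contributions which fail to be summable over the scales are those in which a connected \emph{divergent} subgraph $\bar\Gamma$ (i.e.\ with $\deg\bar\Gamma \le 0$) is resolved at arbitrarily fine scale; all other configurations are estimated exactly as in the unrefined proposition, whose conclusion one recovers in the absence of divergent subgraphs.

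The new input is that, for $\phi \in \CD_k^{(\CA)}$, these problematic regions contribute nothing. Fix $\gep > 0$ such that $D_1^{\ell_1}\cdots D_k^{\ell_k}\phi$ vanishes on the $\gep$-neighbourhood of $\Delta_k^{(\CA)}$, hence at every $y \in \S^k$ for which there exists $A \in \CA$ with $|y_i - y_j| < \gep$ for all $i,j \in A$. Consider, inside some sector, a connected divergent subgraph $\bar\Gamma$ collapsing at a scale $2^{-j}$. By hypothesis there is $A \in \CA$ with $v_i \in \bar\CV$ for every $i \in A$, so the points $x_{v_i}$, $i \in A$, all lie in a single cluster of diameter of order $2^{-j}$; thus $|x_{v_i} - x_{v_j}|$ is of order at most $2^{-j}$ for all $i,j \in A$, and as soon as $2^{-j}$ is smaller than a fixed multiple of $\gep$ the integrand of \eqref{e:evalBis} vanishes identically on that part of the sector. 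Hence for each connected divergent $\bar\Gamma$ only finitely many scales $j$ actually occur, the previously divergent sums are truncated to finite ones, and every remaining scale sum carries a strictly positive exponent; the resulting total is finite, which is the claimed absolute convergence.

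The one point requiring attention — and the reason this is only ``virtually'' rather than literally the same as the unrefined proof — is combinatorial bookkeeping: one must check that the connected subgraphs arising as clusters in a given Hepp sector are exactly those whose collapse that sector describes, so that truncating the scale sum of each connected divergent one using the support of $\phi$ is legitimate and does not disturb the remaining estimates, and that a divergent subgraph which is not connected may be reduced to its connected components, at least one of which is again divergent since the component degrees sum to $\deg\bar\Gamma - d < 0$. Both are already part of the classical argument, and since no new bound on the kernels is needed, the modification is harmless.
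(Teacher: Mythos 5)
Your proposal is correct and follows essentially the same route as the paper: a Hepp-sector (multiscale cluster) decomposition, power counting showing that the accumulated exponent of a cluster is the degree of the induced subgraph, and the observation that the support condition on $\phi \in \CD_k^{(\CA)}$, combined with the hypothesis linking each connected divergent subgraph to some $A \in \CA$ whose legs attach inside it, truncates the otherwise non-summable scale sums to finitely many scales. The bookkeeping points you flag (clusters being induced subgraphs, and reduction of a disconnected divergent subgraph to a divergent connected component) are exactly the ones implicit in the paper's sketch.
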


\begin{remark}
Here and below we say that an edge $e$ is adjacent to a subgraph $\bar \Gamma \subset \Gamma$ 
(possibly itself consisting only of a single edge) if
$e$ is not an edge of $\bar \Gamma$, but shares a vertex with such an edge. 
\end{remark}

\begin{proof}
Since the main idea will be useful in the general result, we sketch it here.
Note first that we can assume without loss of generality that, for every $A \in \CA$,
the vertices of $\CV_\star$ to which the legs in $A$ are attached are all distinct, since
otherwise \eqref{e:evalBis} vanishes identically for $\phi \in \CD_k^{(\CA)}$.

The key remark is that, for every configuration of points $x \in \S^{\CV_\star}$ we
can find a binary tree $T$ with leaves given by $\CV_\star$ and a label $\bn_u \in \N$ for
every inner vertex $u$ of $T$ in such a way that $\bn$ is increasing when going from the root to 
the leaves of $T$ and, for any $v, \bar v \in \CV_\star$, one has
\begin{equ}[e:boundDist]
C^{-1}2^{-\bn_u} \le \|x_v - x_{\bar v}\| \le C2^{-\bn_u}\;,
\end{equ}
where $u = v \wedge \bar v$ is the least common ancestor of $v$ and $\bar v$ in $T$. Here, the constant $C$ only
depends on the size of $\CV_\star$.
(Simply take for $T$ the minimal spanning tree of the point configuration.)
Writing $\bT = (T,\bn)$ for this data, we then let $D_{\bT} \subset \S^{\CV_\star}$ be the
set of configurations giving rise to the data $\bT$. By analogy with the construction of
\cite{hepp1966}, we call $D_{\bT}$ a ``Hepp sector''.

\begin{remark}
While the type of combinatorial data $(T,\bn)$ used to index Hepp sectors is identical to 
that appearing in ``Gallavotti-Nicolo trees'' \cite{GN1,GN2} and the meaning of the index $\bn$ is similar in
both cases, there does not appear to be a direct analogy between the terms indexed by this data in
both cases.
\end{remark}

\begin{remark}\label{rem:ultrametric}
Thanks to the tree structure of $T$, the quantity $d_\T$ given by $d_\T(v,\bar v) = 2^{-\bn_u}$ 
as above is an ultrametric. 
\end{remark}

Writing $\bn(e)$ for the value of $\bn_{e^\uparrow}$, with $e^\uparrow = e_-\wedge e_+$,
the integrand of \eqref{e:evalBis} is then bounded by some constant times
$\prod_{e \in \CE_\star} 2^{- \bn(e)\deg \Labhom(e)}$.
Identifying $T$ with its set of internal nodes, one can also show that the measure
of $D_{\bT}$ is bounded by $\prod_{u \in T} 2^{-d\bn_u}$. 
Finally, by the definition of $\CD_k^{(\CA)}$, there exists a constant $N_0$ such that
the integrand vanishes on sets $D_{\bT}$ such that $\sup_{A\in \CA} \bn_{A^\uparrow} \ge N_0$,
where $A^\uparrow$ is the least common ancestor in $T$ of the collection of elements of $\CV_\star$ 
incident to the legs in $A$. Writing
\begin{equ}
\CT_\CA = \{(T,\bn)\,:\, \sup_{A\in \CA} \bn_{A^\uparrow} < N_0\}\;,
\end{equ}
we conclude that \eqref{e:evalBis} is bounded by some constant multiple of
\begin{equ}[e:bigSum]
\sum_{\bT \in \CT_\CA} \prod_{u \in T} 2^{-\eta_u}\;,\qquad \eta = d + \sum_{e \in \CE_\star}\one_{e^\uparrow} \deg\Labhom(e)  \;.
\end{equ}
We now note that the assumption on $\CA$ guarantees that, for every node $u \in T$, one
has either $\sum_{v \ge u} \eta_v > 0$, or there exists some $A \in \CA$ such that $u \le A^\uparrow$.
In the latter case, $\bn_u$ is bounded from above by $N_0$. Furthermore, as a consequence of the fact that
each connected component of $\Gamma$ has at least one leg and the kernels $K_\Labhom$ are compactly supported,
\eqref{e:evalBis} vanishes on all Hepp sectors with some $\bn_u$ sufficiently negative.
Combining these facts, and performing the sum in \eqref{e:bigSum} ``from the leaves inwards''
as in \cite[Lem.~A.10]{KPZJeremy}, 
it is then straightforward to see that it does indeed converge, as claimed.  
\end{proof}

In other words, Proposition~\ref{prop:WeinFancy} tells us that the only 
region in which the integrand of \eqref{e:evalBis} diverges in a non-integrable
way consists of an arbitrarily small neighbourhood of those points $x$ for 
which there exists a divergent subgraph $\bar \Gamma = (\bar \CV, \bar \CE)$ 
such that $x_u = x_v$ for all vertices $u,v \in \bar \CV$. 
It is therefore very natural to impose the following.

\begin{property}\label{prop:conv}
A consistent renormalisation procedure should produce valuations $\Pi$ that 
agree with \eqref{e:eval} for test functions and Feynman diagrams satisfying
the assumptions of Proposition~\ref{prop:WeinFancy}.
\end{property}

Finally, a natural set of relations of the canonical valuation $\Pi$
given by \eqref{e:eval} which we would like to retain is those given by integration
by parts. In order to formulate this, it is convenient to introduce the notion
of a \textit{half-edge}.
A half-edge is a pair $(e,v)$ with $e\in \CE$ and  $v \in \{e_+,e_-\}$. It is
said to be \textit{incoming} if $v = e_+$ and \textit{outgoing} if $v = e_-$.
Given an edge $e$, we also write $e_\leftarrow$ and $e_\rightarrow$ for the two half-edges
$(e,e_-)$ and $(e,e_+)$.
Given a Feynman diagram $\Gamma$, a half-edge $(e,v)$, and $k \in \N^d$,
we then write $\d_{(e,v)}^k \Gamma$ for the element of $\CT$ obtained
from $\Gamma$ by replacing the decoration $\Labhom$ of the edge $e$ by
$\Labhom^{(k)}$ and then multiplying the resulting Feynman diagram
by $(-1)^{|k|}$ if the half-edge $(e,v)$ is outgoing. 
We then write $\d \CT$ for the smallest subspace of $\CT$ such that, for every Feynman diagram
$\Gamma$, every $i \in \{1,\ldots,d\}$ and every inner vertex $v \in \CV_\star$
of $\Gamma$, one has
\begin{equ}[e:IBP]
\sum_{e \sim v} \d_{(e,v)}^{\delta_i} \Gamma \in \d \CT\;,
\end{equ}
where $e \sim v$ signifies that the edge $e$ is incident to the vertex $v$ 
and $\delta_i$ is the $i$th canonical element of $\N^d$.
By integration by parts, it is immediate that if the kernels $K_\Labhom$ are all smooth,
then the canonical valuation \eqref{e:eval} satisfies $\Pi \d \CT = 0$.
It is therefore natural to impose the following.

\begin{property}\label{prop:IBP}
A consistent renormalisation procedure should produce valuations $\Pi$ that 
vanish on $\d \CT$.
\end{property}

Setting $\CH = \CT / \d \CT$, we can therefore consider a valuation as a map
$\Pi \colon \CH \to \CS$.
Note that since $\d \CT$ is an ideal of $\CT$ which respects its grading, $\CH$ is again a graded algebra. 
Furthermore, since $\d \CT$ is invariant under the action of the symmetric group,
$\Sym_k$ acts naturally on $\CH_k$.
In particular, Property~\ref{prop:hom} can be formulated in $\CH$ rather than $\CT$
and it is not difficult to see that the deletion operation $\Del_k$ introduced in Property~\ref{prop:restr}
also makes sense on $\CH$. This motivates the following definition. 

\begin{definition}\label{def:consistent}
A valuation $\Pi\colon \CH \to \CS$ is \textit{consistent} for the kernel assignment $K$ if it satisfies
Properties~\ref{prop:hom}, \ref{prop:restr} and \ref{prop:conv}.
\end{definition}

\subsection{Some algebraic operations on Feynman diagrams}

In order to satisfy Property~\ref{prop:conv}, we will consider valuations that differ from
the canonical one only by counterterms of the same form, but with some of the factors of \eqref{e:eval} 
corresponding to
divergent subgraphs replaced by a suitable derivative of a delta function, just
like what we did in \eqref{e:renormSimple}. 

These counterterms can again be encoded into Feynman diagrams with the
same number of legs as the original diagram, multiplied by a suitable weight.
We are therefore looking for a procedure which, given a smooth kernel assignment $K \in \CK^-_\infty$,
builds a linear map $M^K\colon \CT \to \CT$ such that if we define a ``renormalised''
valuation $\hat \Pi^K$ by
\begin{equ}[e:renormPi]
\hat \Pi^K \Gamma = \Pi^K M^K \Gamma\;,
\end{equ}
with $\Pi^K$ the canonical valuation given by \eqref{e:eval}, then $K \mapsto \hat \Pi^K$ is
a renormalisation procedure which extends continuously to all of $\CK^-_0$.
We would furthermore like $M^K$ to differ from the identity only by terms of the 
form described above, obtained by contracting divergent subgraphs to a derivative of a delta function.

The procedure \eqref{e:renormSimple} is exactly of this form with  
\begin{equ}[e:Msimple]
M^K \linev = \linev - \sum_{|k| + \deg \Labhom \le -d}\!\!\!c_k\cdot \begin{tikzpicture}[style={thick}]
\node[dot] (l) at (0,0) {};
\draw[thick,red] (l) -- ++(180:0.5) node[midway,above=-0.1] {\tiny$0$};
\draw[thick,red] (l) -- ++(0:0.5) node[midway,above=-0.1] {\tiny$k$};
\end{tikzpicture}\;,
\qquad c_k = {1\over k!} \int_\S x^k K_\Labhom(x)\,dx\;.
\end{equ}
Note that the condition $\deg \Labhom \le -d$ which is required for $M^K$ to differ
from the identity is precisely the condition that the subgraph
\begin{tikzpicture}[style={thick}]
\node[dot] (l) at (0,0) {};
\node[dot] (r) at (1,0) {};
\draw[->] (l) -- (r) node[midway,above=-0.1] {\tiny$\Labhom$};
\end{tikzpicture} 
is divergent, which then guarantees that this example satisfies Property~\ref{prop:conv}.

It is natural to index the constants appearing in the terms of
such a renormalisation map by the corresponding subgraphs that were contracted.
These subgraphs then have no legs anymore, but may require additional decorations
describing the powers of $x$ appearing in the expression for $c_k$ above.
We therefore give the following definition, where the choice of terminology is 
chosen to be consistent with the QFT literature.

\begin{definition}
A \textit{vacuum diagram} consists of a Feynman diagram
$\Gamma = (\CV,\CE)$ with exactly one leg per connected component,
endowed additionally with a node decoration $\kk \colon \CV_\star \to \N^d$. We also impose that each
leg has label $\delta$.
We say that a connected vacuum diagram is \textit{divergent} if $\deg \Gamma \le 0$, where 
$$\deg \Gamma = \sum_{e \in \CE}\Labhom(e) + \sum_{v \in \CV} |\kk(v)| + d(|\CV|- 1)\;.$$
We extend this to arbitrary vacuum diagrams by imposing that 
$\deg (\Gamma_1 \bullet \Gamma_2) = \deg\Gamma_1 + \deg\Gamma_2$. 
\end{definition}

One should think of a connected vacuum diagram $\Gamma$ as encoding the constant
\begin{equ}[e:PiK]
\Pi_-^K \Gamma \eqdef \int_{\S^{\CV_\star\setminus \{v_\star\}}} \prod_{e \in \CE_\star} K_{\Labhom(e)}(x_{e_+} - x_{e_-}) 
\prod_{w \in \CV_\star}(x_w - x_{v_\star})^{\kk(w)} \,dx
\end{equ}
where $v_\star$ is the 
element of $\CV_\star$ that has the unique leg attached to it.
This is then extended multiplicatively to all vacuum diagrams.
In view of this, it is also natural to ignore the ordering of the legs for
vacuum diagrams, and we will always do this from now on.

Write now $\hat \CT_-$ for the algebra of all vacuum diagrams such that each connected component 
has at least one internal edge
and by $\CT_- \subset \hat \CT_-$ for the subalgebra generated by those diagrams
such that each connected components is divergent. 
Since we ignored the labelling of legs, the product $\bullet$ turns $\hat \CT_-$ into
a commutative algebra.
Note that if we write $\CJ_+ \subset \hat \CT_-$ for the ideal generated by all
vacuum diagrams $\Gamma$ with $\deg\Gamma > 0$, then we have a natural isomorphism
\begin{equ}
\CT_- \approx \hat \CT_- / \CJ_+\;.
\end{equ}

Similarly to above, it is natural to identify vacuum diagrams related to each other 
by integration by parts, but also those related by changing the location of the leg(s).
In order to formalise this, we reinterpret a connected vacuum diagram as above
as a Feynman diagram ``with $0$ legs'', but with one of the vertices being
distinguished, which is of course completely equivalent, and we write it
as $(\Gamma,v_\star,\kk)$. 
With this notation, we define $\d \hat \CT_-$ as the smallest
ideal of $\hat \CT_-$ such that, for every connected $(\Gamma,v_\star,\kk)$ 
one has the following.
\begin{claim}
\item For every vertex $v \in \CV \setminus \{v_\star\}$ and every $i\in\{1,\ldots,d\}$, one has
\begin{equ}[e:IBPbis]
\sum_{e \sim v} (\d_{(e,v)}^{\delta_i} \Gamma,v_\star,\kk) + \kk(v)_i (\Gamma, v_\star,\kk - \delta_i \one_{v}) \in \d \hat \CT_-\;,
\end{equ}
where $\one_v$ denotes the indicator function of $\{v\}$.
\item One has
\begin{equ}[e:IBProot]
\sum_{e \sim v_\star} (\d_{(e,v_\star)}^{\delta_i} \Gamma,v_\star,\kk) - \sum_{v\in \bar\CV}\kk(v)_i (\Gamma, v_\star,\kk - \delta_i \one_{v}) \in \d \hat \CT_-\;,
\end{equ}
\item For every vertex $v \in \CV$, one has
\begin{equ}[e:moveLeg]
(\Gamma,v_\star,\kk) - \sum_{\mm\colon \CV \to \N^d} (-1)^{|\mm|} \binom{\kk}{\mm} (\Gamma,v,\kk-\mm + \Sigma\mm \one_{v_\star}) \in \d \hat \CT_-\;,
\end{equ}
where $\Sigma \mm = \sum_u \mm(u)$ and we use the convention
$\mm ! = \prod_{u\in \CV} \prod_{i=1}^d \mm(u)_i!$ to define the binomial coefficients,
with the additional convention that the coefficient vanishes unless $\mm \le \kk$ everywhere.
\end{claim}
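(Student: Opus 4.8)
Since $\d\hat\CT_-$ is by definition the smallest ideal of $\hat\CT_-$ that contains, for every connected $(\Gamma,v_\star,\kk)$ and every $i\in\{1,\dots,d\}$, the left-hand sides of \eqref{e:IBPbis}, \eqref{e:IBProot} and \eqref{e:moveLeg}, the three displayed assertions hold \emph{by construction}. The only point that needs a word is that ``the smallest such ideal'' is meaningful: writing $R\subset\hat\CT_-$ for the set of all these left-hand sides as $(\Gamma,v_\star,\kk)$ ranges over connected vacuum diagrams and $i$ over $\{1,\dots,d\}$, each member of $R$ is a specific element of $\hat\CT_-$, so $R$ is a well-defined subset; the intersection of all ideals of $\hat\CT_-$ containing $R$ is again an ideal containing $R$, hence is the smallest one, and it equals the ideal generated by $R$, which---$\hat\CT_-$ being commutative---is the linear span of $\{r\bullet\Delta:r\in R,\ \Delta\in\hat\CT_-\}$. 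Thus $R\subseteq\d\hat\CT_-$, as claimed. Moreover every element of $R$ is homogeneous for the grading of $\hat\CT_-$ by $\deg$: lowering one edge label to $\Labhom(e)^{(\delta_i)}$ or one node decoration $\kk(v)$ by $\delta_i$ each costs exactly $1$, so \eqref{e:IBPbis} and \eqref{e:IBProot} are homogeneous of degree $\deg\Gamma-1$, while in \eqref{e:moveLeg} one has $|\kk-\mm+\Sigma\mm\,\one_{v_\star}|=|\kk|-|\mm|+|\mm|=|\kk|$ (by the convention $\mm\le\kk$), so that relation is homogeneous of degree $\deg\Gamma$; hence $\d\hat\CT_-$ is a graded ideal, and descends to an ideal of $\CT_-\approx\hat\CT_-/\CJ_+$.

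Beyond this, what deserves comment is why the ideal is generated by precisely these relations. The coefficients are forced by the requirement that the formal integral \eqref{e:PiK} annihilate $R$ whenever the kernels $K_\Labhom$ are smooth and compactly supported. Relation \eqref{e:IBPbis} is integration by parts in the integration variable $x_v$, $v\neq v_\star$: differentiating the kernel attached to an edge $e\sim v$ yields $\pm\d_{(e,v)}^{\delta_i}\Gamma$ with the sign of the half-edge $(e,v)$, and differentiating the monomial $(x_v-x_{v_\star})^{\kk(v)}$ yields $\kk(v)_i$ times the diagram with $\kk(v)$ lowered by $\delta_i$. Relation \eqref{e:IBProot} is the same computation carried out in the free variable $x_{v_\star}$, which annihilates \eqref{e:PiK} by translation invariance and which additionally occurs in every monomial factor $(x_w-x_{v_\star})^{\kk(w)}$, producing the extra sum $-\sum_{v\in\bar\CV}\kk(v)_i(\Gamma,v_\star,\kk-\delta_i\one_v)$ with $\bar\CV=\CV\setminus\{v_\star\}$ (a nonzero $\kk(v_\star)$ already kills \eqref{e:PiK}). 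Relation \eqref{e:moveLeg} expresses the independence of \eqref{e:PiK} of the choice of distinguished vertex: writing $x_w-x_{v_\star}=(x_w-x_v)-(x_{v_\star}-x_v)$ and expanding each factor by the multinomial theorem, the cross terms collect into a single power $(x_{v_\star}-x_v)^{\Sigma\mm}$ carried by $v_\star$, with an overall sign $(-1)^{|\mm|}$.

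Finally, I would record two internal-consistency checks, verifiable purely algebraically and without reference to any valuation, that make the definition robust. First, \eqref{e:IBProot} is, modulo $\d\hat\CT_-$, minus the sum of the relations \eqref{e:IBPbis} over $v\in\CV\setminus\{v_\star\}$: in $\sum_{v\neq v_\star}\sum_{e\sim v}\d_{(e,v)}^{\delta_i}\Gamma$ every edge with both endpoints distinct from $v_\star$ is counted twice with opposite half-edge signs and drops out (a self-loop at $v$ contributing $\d_{e_\leftarrow}^{\delta_i}\Gamma+\d_{e_\rightarrow}^{\delta_i}\Gamma=0$), leaving exactly $-\sum_{e\sim v_\star}\d_{(e,v_\star)}^{\delta_i}\Gamma$, while the monomial terms reassemble into $-\sum_{v\neq v_\star}\kk(v)_i(\Gamma,v_\star,\kk-\delta_i\one_v)$; thus \eqref{e:IBProot} is in fact redundant, though convenient to state. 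Second, the ``move the leg'' relations are mutually inverse: composing \eqref{e:moveLeg} for $v_\star\to v$ with \eqref{e:moveLeg} for $v\to v_\star$ returns $(\Gamma,v_\star,\kk)$ modulo $\d\hat\CT_-$, by an iterated application of the binomial theorem (together with the instances of \eqref{e:moveLeg} with $v=v_\star$, which move a decoration off the distinguished vertex). The statement itself carries no genuine obstacle; the only delicate points are bookkeeping ones---keeping track of the half-edge signs in the cancellation argument for \eqref{e:IBProot}, juggling the multi-indices in the leg-move composition, and consistently reading a connected vacuum diagram as a legless diagram with a distinguished vertex, so that $\CV=\CV_\star$ and $\kk$ is defined on all of $\CV$ (with $\kk(v_\star)=0$ in the reduced form).
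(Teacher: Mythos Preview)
Your proposal is correct: the three displayed items are the \emph{defining} generators of the ideal $\d\hat\CT_-$, so there is nothing to prove beyond the well-definedness of ``smallest ideal containing a given set,'' which you handle cleanly. The paper takes exactly this view---it states the three relations as a definition and follows with two short remarks (that $\Pi_-^K$ annihilates them for smooth $K$, and that $\kk(v_\star)\neq 0$ forces the diagram into the ideal).

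Your write-up actually goes further than the paper in two useful directions. First, you observe that \eqref{e:IBProot} is redundant, being minus the sum of the relations \eqref{e:IBPbis} over $v\neq v_\star$; this is correct (the cancellation of internal edges by opposite half-edge signs is exactly as you say) and the paper does not record it. Second, you spell out the homogeneity of each relation, which the paper only alludes to with the parenthetical ``this is well-defined since $\d\hat\CT_-$ does not mix elements of different degree.'' One small caveat on your phrasing: your claim that the move-leg relations are ``mutually inverse'' via an iterated binomial expansion is morally right but would need a careful inductive argument to be airtight, since the second application of \eqref{e:moveLeg} must be fed diagrams that may already carry a nonzero label at the distinguished vertex; you gesture at this with your parenthetical about the $v=v_\star$ instances, but the bookkeeping is a bit more involved than you suggest.
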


\begin{remark}\label{rem:ker}
One can verify that if $K \in \CK^-_\infty$ and $\Pi_-^K$ is given by \eqref{e:PiK},
then $\d \hat \CT_- \in \ker \Pi_-^K$.
In the case of \eqref{e:IBPbis} and \eqref{e:IBProot}, this is because the integrand is then
a total derivative with respect to $(x_v)_i$ and $(x_{v_\star})_i$ respectively.
In the case of \eqref{e:moveLeg}, this can be seen by writing
$(x_w - x_{v_\star})^{\kk(w)} = ((x_w - x_{v}) - (x_{v_\star} - x_v))^{\kk(w)}$ and applying
the multinomial theorem.
\end{remark}

\begin{remark}\label{rem:vanish}
The expressions \eqref{e:IBPbis} and \eqref{e:IBProot} are consistent with \eqref{e:IBP} in the special case $\kk = 0$.
Considering the case $v = v_\star$ in \eqref{e:moveLeg}, it is also straightforward to verify that
$(\Gamma,v_\star,\kk) \in \d \hat \CT_-$ as soon as $\kk(v_\star) \neq 0$.
\end{remark}

As before, we then write $\hat \CH_-$ as a shorthand for $\hat \CT_- / \d \hat \CT_-$ and
similarly for $\CH_-$. (This is well-defined since $\d \hat \CT_-$ does not mix elements of 
different degree.) As a consequence of Remark~\ref{rem:ker}, we see that 
every $K \in \CK^-_\infty$ yields a character $\Pi_-^K$ of $\hat \CH_-$
and therefore also of $\CH_-$.

\begin{wrapfigure}{R}{5cm}
\begin{center}
\vspace{-1em}
\begin{tikzpicture}[style={thick}]
\node[dot] (l) at (0,0) {};
\node[dot] (r) at (3,0) {};
\node[dot] (ul) at (0.5,1) {};
\node[dot] (ur) at (2.5,1) {};
\node[dot] (d) at (1.5,-1) {};
\node[dot] (c) at (1.5,0) {};
\draw[->] (l) -- (ul);
\draw[->,markboth] (ul) -- (ur) node[pos=0.2,above=-0.1] {\tiny$1$} node[pos=0.8,above=-0.1] {\tiny$1$};
\draw[->] (ur) -- (r); 
\draw[->] (l) -- (c);
\draw[->] (ul) -- (c);
\draw[->,markstart] (l) -- (d);
\draw[->,markstart] (c) -- (d);
\draw[->,markend] (d) -- (r) node[pos=0.8,below right=-0.1] {\tiny$1$};
\draw[ultra thick,boundary] (l) -- ++(180:0.5);
\draw[ultra thick,boundary] (r) -- ++(0:0.5);
\draw[thick,red] (d) -- ++(-90:0.5);
\draw[line width=0.5cm,draw opacity=0.15, line cap=round] (l) -- (ul) -- (c) -- (l);
\draw[line width=0.5cm,draw opacity=0.15, line cap=round] (ur) -- (r);
\end{tikzpicture} 
\end{center}
\vspace{-1em}
\caption{Example of a subgraph (shaded) and its boundary (green).}\label{fig:subgraph}
\vspace{-1em}
\end{wrapfigure}

Given a Feynman diagram $\Gamma$ and a subgraph $\bar \Gamma \subset \Gamma$, we 
can (and will) identify $\bar \Gamma$ with an element of $\hat \CH_-$, obtained by 
setting all the node decorations to $0$. 
By \eqref{e:moveLeg} we do not need to specify where we attach leg(s) to $\bar \Gamma$
since these elements are all identified in $\hat \CH_-$.
We furthermore write $\d \bar \Gamma$ for the set of all \textit{half-edges} adjacent
to $\bar \Gamma$. 
Figure~\ref{fig:subgraph} shows an example of a Feynman diagram with a subgraph $\bar \Gamma$ shaded in grey
and $\d \bar \Gamma$ indicated in green. 
Legs can also be
part of $\d \bar \Gamma$ as is the case in our example, but they can \textit{not}
be part of $\bar \Gamma$ by our definition of a subgraph.
Note also that the edge joining the two vertices at the top appears
as two distinct half-edges in $\d\bar\Gamma$.
Given furthermore a map $\ell \colon \d \bar \Gamma \to \N^d$ (canonically extended
to vanish on all other half-edges of $\Gamma$),
we then define the following two objects.
\begin{claim}
\item A vacuum diagram $(\bar \Gamma, \pi\ell)$ which consists of the 
graph $\bar \Gamma$ endowed with the edge decoration inherited from $\Gamma$, as well
as the node decoration $\kk = \pi\ell$ given by $(\pi\ell)(v) = \sum_{e\,:\,(e,v) \in \d \Gamma}\ell(e,v)$.
\item A Feynman diagram $\Gamma / (\bar \Gamma, \ell)$ obtained by contracting 
the connected components of $\bar \Gamma$ to nodes and applying $\ell$ to the 
resulting diagram in the sense that, for edges $e \in \CE \setminus \bar \CE$
adjacent to $\d\bar \Gamma$ and with label (in $\Gamma$) given by $\Labhom$,
we replace their label by $\Labhom^{(\ell(e_\leftarrow) + \ell(e_\rightarrow))}$.
\end{claim}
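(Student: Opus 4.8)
The statement asserts that the two constructions are well-posed: that $(\bar\Gamma,\pi\ell)$ is a genuine vacuum diagram — hence a generator of $\hat\CT_-$ and a class in $\hat\CH_-$ — and that $\Gamma/(\bar\Gamma,\ell)$ is a genuine Feynman diagram with the same number $k$ of legs as $\Gamma$, hence an element of $\CT_k$ and a class in $\CH_k$. Both are elementary combinatorial manipulations of the underlying graphs, so the plan is simply to check, object by object, that no clause in the definitions of a Feynman diagram and of a vacuum diagram is violated, and to pin down in each case the only arbitrary choice involved.

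For $(\bar\Gamma,\pi\ell)$ I would argue as follows. The underlying graph is $(\bar\CV,\bar\CE)$ with the edge decoration inherited from $\Gamma$, so its edges carry labels in $\Lab_\star$. By the definition of a subgraph, $\bar\CV$ consists exactly of the vertices incident to an edge of $\bar\CE\subseteq\CE_\star$; hence $\bar\Gamma$ has no isolated node and each of its connected components carries at least one internal edge, which is precisely the requirement for membership in $\hat\CT_-$ once we attach one $\delta$-labelled leg to some vertex of each component. The node decoration is well-posed since, for $v\in\bar\CV$, the value $(\pi\ell)(v)=\sum_{e\,:\,(e,v)\in\d\bar\Gamma}\ell(e,v)$ is a finite sum of elements of $\N^d$ (one term per half-edge of $\d\bar\Gamma$ incident to $v$, with half-edges outside $\d\bar\Gamma$ contributing $0$), so it again lies in $\N^d$. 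The sole arbitrary ingredient is the choice of vertex carrying the leg in each component; I would note that moving it rewrites the vacuum diagram according to the multinomial re-centring $(x_w-x_{v_\star})^{\kk(w)}=((x_w-x_v)+(x_v-x_{v_\star}))^{\kk(w)}$, which is exactly the relation \eqref{e:moveLeg} built into $\d\hat\CT_-$ (and which degenerates to the plain identification $(\Gamma,v_\star,0)\equiv(\Gamma,v,0)$ when the decoration vanishes on that component, see Remark~\ref{rem:vanish}). So $(\bar\Gamma,\pi\ell)$ is a legitimate vacuum diagram, and the class it determines in $\hat\CH_-$ is unambiguous modulo precisely the relations that $\hat\CH_-$ was defined to quotient out.

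For $\Gamma/(\bar\Gamma,\ell)$ I would collapse each connected component of $\bar\Gamma$ to a single vertex — replacing $\CV$ by $(\CV\setminus\bar\CV)$ together with one new vertex per component of $\bar\Gamma$, deleting the edges of $\bar\CE$, and retaining every other edge — and then verify the three clauses of the definition of a Feynman diagram. \emph{Leg structure:} no edge of $\Gamma$ targets a vertex of $\CL$ and $\bar\CV\subseteq\CV_\star$, so legs are never contracted, the vertices of $\CL$ retain exactly one outgoing and no incoming edge, and since contraction merges neither two connected components of $\Gamma$ nor a leg into a vertex, every connected component still contains a leg and the total is still $k$ (an $e\in\CE_\star\setminus\bar\CE$ with both endpoints in one component of $\bar\Gamma$ becomes a self-loop, which is allowed). \emph{Decoration:} for $e\in\CE_\star\setminus\bar\CE$ the new label is $\Labhom(e)^{(\ell(e_\leftarrow)+\ell(e_\rightarrow))}$, and when $e$ is not adjacent to $\d\bar\Gamma$ both of its half-edges lie outside $\d\bar\Gamma$, so $\ell(e_\leftarrow)=\ell(e_\rightarrow)=0$ and the label is unchanged by $\Labhom^{(0)}=\Labhom$ (\eqref{e:propLab}); hence the rule is uniform over all internal edges, and likewise a leg $\delta^{(m)}$ incident to $\bar\CV$ becomes $\delta^{(m+\ell(\cdot))}$. \emph{Labels stay admissible:} here one must invoke the standing assumption that the shift maps $\Labhom\mapsto\Labhom^{(k)}$ respect the splitting $\Lab=\Lab_\star\sqcup\{\delta^{(k)}:k\in\N^d\}$, so that internal edges keep labels in $\Lab_\star$ and legs keep labels in the $\delta$-tower. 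Granting this, $\Gamma/(\bar\Gamma,\ell)$ is a Feynman diagram with $k$ legs, as claimed.

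I expect the argument to be essentially routine, the only mildly delicate points being the two structural ones just flagged — that the contraction does not destroy the leg structure (where connectedness and $\bar\CV\subseteq\CV_\star$ are used) and that the relabelling never leaves the admissible label set — together with the observation that the single choice entering $(\bar\Gamma,\pi\ell)$, namely where the leg sits, is exactly what the relations \eqref{e:moveLeg} (and, once derivatives appear, \eqref{e:IBPbis}--\eqref{e:IBProot}) of $\d\hat\CT_-$ were introduced to neutralise. No analytic input is required.
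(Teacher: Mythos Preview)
The passage you were given is a \emph{definition}, not a proposition: in this paper the \texttt{claim} environment is used throughout as a bulleted list (see e.g.\ the definition of a Feynman diagram), and the two items simply introduce the notation $(\bar\Gamma,\pi\ell)$ and $\Gamma/(\bar\Gamma,\ell)$. There is accordingly no proof in the paper to compare against. Your decision to read it as a well-posedness claim and to verify that the outputs are genuinely a vacuum diagram and a Feynman diagram is reasonable, and those combinatorial checks are correct and more explicit than anything the paper writes down.

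One point deserves tightening. You assert that ``the class $(\bar\Gamma,\pi\ell)$ determines in $\hat\CH_-$ is unambiguous'' modulo the relations of $\d\hat\CT_-$, invoking \eqref{e:moveLeg}. That is not quite right when $\pi\ell\neq 0$: the relation \eqref{e:moveLeg} expresses $(\bar\Gamma,v_\star,\kk)$ as a \emph{linear combination} of diagrams $(\bar\Gamma,v,\kk')$ with the leg at $v$ and \emph{different} node decorations $\kk'$, not as the single diagram $(\bar\Gamma,v,\kk)$. So for a fixed $\ell$ the individual term $(\bar\Gamma,\pi\ell)$ genuinely depends on where the leg is attached. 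The paper's remark ``By \eqref{e:moveLeg} we do not need to specify where we attach leg(s) to $\bar\Gamma$'' appears just \emph{before} the bulleted definition and refers only to the case $\kk=0$ considered there. Independence of the leg position for general decorations holds only at the level of the full sum over $\ell$ in \eqref{e:coaction}, and establishing this is exactly the content of the first half of the proof of Proposition~\ref{prop:Delta}, where \eqref{e:moveLeg} is combined with the Chu--Vandermonde identity and the relation \eqref{e:IBP} on the quotient-graph side.
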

In the example of Figure~\ref{fig:subgraph}, where non-zero values of $\ell$ are indicated by small labels, we have
\begin{equ}
(\bar \Gamma, \ell) = 
\begin{tikzpicture}[style={thick},baseline=0.3cm]
\node[dot] (l) at (0,0) {};
\node[dot,label={[shift={(0.1,0)}]left:{\tiny$1$}}] (r) at (3,0) {};
\node[dot,label={[shift={(-0.1,0)}]right:{\tiny$1$}}] (ul) at (0.5,1) {};
\node[dot,label={[shift={(0.1,0)}]left:{\tiny$1$}}] (ur) at (2.5,1) {};
\node[dot] (c) at (1.5,0) {};
\draw[->] (l) -- (ul);
\draw[->] (ur) -- (r); 
\draw[->] (l) -- (c);
\draw[->] (ul) -- (c);
\end{tikzpicture}\;,\qquad
\Gamma / (\bar \Gamma, \ell)
= 
\begin{tikzpicture}[style={thick},baseline=-0.7cm]
\node[dot] (l) at (0,0) {};
\node[dot] (r) at (2.5,0) {};
\node[dot] (d) at (1.5,-1) {};
\draw[->] (l) to[bend left=40] node[midway,below=-0.05] {\tiny$2$} (r);
\draw[->] (l) to[bend left=30] (d);
\draw[->] (l) to[bend right=30] (d);
\draw[->] (d) -- (r) node[midway,below right=-0.1] {\tiny$1$};
\draw[thick,red] (l) -- ++(180:0.5);
\draw[thick,red] (r) -- ++(0:0.5);
\draw[thick,red] (d) -- ++(-90:0.5);
\end{tikzpicture} 
\end{equ}
where a label $k$ on an edge means that if it had a decoration $\Labhom$ in $\Gamma$, 
then it now has a decoration $\Labhom^{(k)}$.
Given a map $\ell \colon \d \bar \Gamma \to \N^d$ as above, we also write 
``$\out \ell$'' as a shorthand for the restriction of $\ell$ to outgoing half-edges.
With these notations at hand, we define a map
$\Delta \colon \CT \to \CH_- \otimes \CH$ by
\begin{equ}[e:coaction]
\Delta \Gamma = \sum_{\bar \Gamma \subset \Gamma} \sum_{\ell\colon \d\bar \Gamma \to \N^d}
{(-1)^{|\out \ell|} \over \ell!} (\bar \Gamma, \pi\ell) \otimes \Gamma / (\bar \Gamma, \ell)\;,
\end{equ}
where we use the same conventions for factorials as in \eqref{e:moveLeg}.
Note that since the right hand side is identified with an element of
$\CH_- \otimes \CH$, this sum is finite. Indeed, unless
$(\bar \Gamma, \pi\ell) \in \CT_-$, which only happens for finitely many choices of $\ell$, 
the corresponding factor is identified with $0$ in $\CH_-$.

\begin{remark}
For any fixed $\Gamma$ this sum is actually finite since there are only finitely many subgraphs and
since, for large enough $\ell$, $(\bar \Gamma, \pi \ell)$ is no longer in $\CT_-$.
\end{remark}

\begin{remark}
The factor $(-1)^{|\out \ell|}$ appearing here encodes the fact that for 
an edge $e$, having $\ell(e, u) = k$ means that in the resulting
Feynman diagram $\Gamma / (\bar \Gamma, \ell)$, one would like to replace the 
factor $K_\Labhom(x_{e_+} - x_{e_-})$ by its $k$th derivative with respect to $x_u$,
which is precisely what happens when one replaces the corresponding connected component
of $\bar \Gamma$ by a derivative of a delta function.
In the case when $u = e_-$, namely when the half-edge is outgoing, 
this is indeed the same as $(-1)^{|k|} (D^kK_\Labhom)(x_{e_+} - x_{e_-})$, while the
factor $(-1)^{|k|}$ is absent for incoming half-edges.
\end{remark}

It turns out that one has the following.

\begin{proposition}\label{prop:Delta}
The map $\Delta$ is well-defined as a map
from $\CH$ to $\CH_- \otimes \CH$. 
\end{proposition}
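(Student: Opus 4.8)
The plan is to show that $\Delta$, viewed as a map $\CT\to\CH_-\otimes\CH$, annihilates $\d\CT$, so that it factors through $\CH=\CT/\d\CT$. Since $\d\CT$ is spanned by the elements $\sum_{e\sim v}\d^{\delta_i}_{(e,v)}\Gamma$, with $\Gamma$ a Feynman diagram, $v\in\CV_\star$ an inner vertex and $i\in\{1,\dots,d\}$, it suffices to check that $\Delta\bigl(\sum_{e\sim v}\d^{\delta_i}_{(e,v)}\Gamma\bigr)=0$ in $\CH_-\otimes\CH$ for each such triple. Applying \eqref{e:coaction} term by term — note that the subgraphs of $\d^{\delta_i}_{(e,v)}\Gamma$ are, combinatorially, exactly those of $\Gamma$ — one gets a double sum over subgraphs $\bar\Gamma\subseteq\Gamma$ and maps $\ell\colon\d\bar\Gamma\to\N^d$, and the natural move is to split it according to whether or not $v$ is a vertex of $\bar\Gamma$. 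A preliminary observation that lightens the bookkeeping: a loop at $v$ contributes $\d^{\delta_i}_{(e,v)}\Gamma$ with opposite signs through its two half-edges, so loop terms cancel and may be ignored throughout.

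For the contribution of the subgraphs with $v\notin\bar\CV$, no edge incident to $v$ lies in $\bar\CE$, so $v$ survives as an inner vertex of every cograph $\Gamma/(\bar\Gamma,\ell)$ and is untouched by the contraction. Using the identity $(\Labhom^{(k)})^{(m)}=\Labhom^{(k+m)}$ of \eqref{e:propLab}, one checks that for such $\bar\Gamma$ the operation $\d^{\delta_i}_{(e,v)}$ commutes with the passage to $\Gamma/(\bar\Gamma,\ell)$ (the sign attached to an outgoing half-edge is unchanged, precisely because $v$ is not contracted), so that this part of the sum equals
\[
\sum_{\bar\Gamma\not\ni v}\ \sum_{\ell\colon\d\bar\Gamma\to\N^d}\frac{(-1)^{|\out\ell|}}{\ell!}\,(\bar\Gamma,\pi\ell)\otimes\Bigl(\textstyle\sum_{e\sim v}\d^{\delta_i}_{(e,v)}\bigl(\Gamma/(\bar\Gamma,\ell)\bigr)\Bigr)\;,
\]
where in the second factor the edge-sum is now taken in the cograph, indexing a generator of $\d\CT$. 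Hence this part lies in $\CH_-\otimes\d\CT$ and is zero in $\CH_-\otimes\CH$.

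For the subgraphs with $v\in\bar\CV$ I would split $\sum_{e\sim v}$ into internal edges $e\in\bar\CE$ and boundary half-edges $(e,v)\in\d\bar\Gamma$. An internal $\d^{\delta_i}_{(e,v)}$ only alters the decoration of $\bar\Gamma$, i.e.\ the first tensor factor, contributing $\sum_\ell\tfrac{(-1)^{|\out\ell|}}{\ell!}\bigl(\sum_{e\in\bar\CE,\,e\sim v}(\d^{\delta_i}_{(e,v)}\bar\Gamma,\pi\ell)\bigr)\otimes\Gamma/(\bar\Gamma,\ell)$. A boundary $\d^{\delta_i}_{(e,v)}$ instead shifts the label of the edge $e$ in the cograph by $\delta_i$, which is exactly the effect of replacing $\ell$ by $\ell+\delta_i\one_{(e,v)}$, up to the outgoing sign. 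Re-indexing the $\ell$-sum accordingly and keeping track of the bookkeeping — $(-1)^{|\out\ell|}$ together with the outgoing sign collapses to $(-1)^{|\out\ell'|}$ for the new exponent $\ell'$, the factor $1/\ell!$ becomes $\ell'(e,v)_i/\ell'!$, the decoration $\pi\ell'(v)$ is obtained from $\pi\ell(v)$ by adding $\delta_i$, and $\sum_{e\,:\,(e,v)\in\d\bar\Gamma}\ell'(e,v)_i=(\pi\ell')(v)_i$ — one finds that the boundary contribution equals $\sum_\ell\tfrac{(-1)^{|\out\ell|}}{\ell!}\,\pi\ell(v)_i\,(\bar\Gamma,\pi\ell-\delta_i\one_v)\otimes\Gamma/(\bar\Gamma,\ell)$. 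Adding the two contributions, the first tensor slot of the coefficient of $\Gamma/(\bar\Gamma,\ell)$ becomes
\[
\sum_{e\in\bar\CE,\,e\sim v}(\d^{\delta_i}_{(e,v)}\bar\Gamma,\pi\ell)\ +\ \pi\ell(v)_i\,(\bar\Gamma,\pi\ell-\delta_i\one_v)\;,
\]
which is exactly the left-hand side of the vacuum integration-by-parts relation \eqref{e:IBPbis} applied at the vertex $v$ of the connected component $C$ of $\bar\Gamma$ containing it, with the distinguished vertex taken to be any other vertex of $C$; it therefore lies in $\d\hat\CT_-$. The only case in which no such other vertex is available is $C=\{v\}$ carrying only loops at $v$; then the internal sum vanishes (loops cancel) and $\pi\ell(v)_i\,(\bar\Gamma,\pi\ell-\delta_i\one_v)\in\d\hat\CT_-$ follows instead from \eqref{e:IBProot}. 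Either way this part of the sum lies in $\d\hat\CT_-\otimes\CH$ and vanishes in $\CH_-\otimes\CH$; combined with the previous paragraph this yields $\Delta\bigl(\sum_{e\sim v}\d^{\delta_i}_{(e,v)}\Gamma\bigr)=0$, as required.

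I expect the delicate point to be the $v\in\bar\CV$ computation: one must carry out the re-indexing $\ell\mapsto\ell+\delta_i\one_{(e,v)}$ while tracking the signs $(-1)^{|\out\ell|}$ and the factorials $\ell!$ precisely enough that the boundary contribution reproduces the vacuum counterterm $\pi\ell(v)_i\,(\bar\Gamma,\pi\ell-\delta_i\one_v)$ with the correct sign, and then recognise the resulting expression as a genuine generator of $\d\hat\CT_-$, which forces one to keep straight the roles of loops, of the distinguished vertex, and of single-vertex components. Everything else — the $v\notin\bar\CV$ case, and the finiteness of the sums involved, which follows exactly as in the remarks preceding the statement — should be routine.
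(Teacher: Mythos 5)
Your handling of the quotient by $\d\CT$ is essentially sound, and is a legitimate variant of the paper's argument: where you convert the boundary half-edge contributions into first-factor terms by re-indexing $\ell\mapsto\ell+\delta_i\one_{(e,v)}$ and then invoke \eqref{e:IBPbis} in $\hat\CT_-$, the paper goes the other way, using \eqref{e:IBProot} to convert the $\bar\Gamma$-internal derivatives into cograph terms and then invoking \eqref{e:IBP} in the second factor; the bookkeeping of signs and factorials you describe is correct.

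There is, however, a genuine gap earlier: you take for granted that \eqref{e:coaction} defines a map $\CT\to\CH_-\otimes\CH$ in the first place. It does not, on the face of it. The factor $(\bar\Gamma,\pi\ell)$ is a vacuum diagram, and a vacuum diagram requires a choice of distinguished vertex (equivalently, of where its leg is attached), which \eqref{e:coaction} leaves unspecified. Two choices $(\bar\Gamma,v,\pi\ell)$ and $(\bar\Gamma,u,\pi\ell)$ are \emph{not} equal in $\CH_-$: by \eqref{e:moveLeg} they differ by a sum of terms with modified node decorations. Proving that the full sum over $\ell$ is nonetheless independent of this choice occupies roughly half of the paper's proof and is not routine; it requires restricting to $\ell$ vanishing on the half-edges at $v$ (via Remark~\ref{rem:vanish}), applying \eqref{e:moveLeg} together with the Chu--Vandermonde identity, and then resumming using \eqref{e:IBP} applied \emph{in the second tensor factor} so as to trade node decorations concentrated at $v$ for half-edge decorations supported on the half-edges incident to $v$. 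This omission is not merely cosmetic for you: in your $v\in\bar\CV$ case you explicitly use the freedom to take ``the distinguished vertex to be any other vertex of $C$'', which is precisely the independence statement you never establish. Without it, a fixed convention for the distinguished vertex could land on $v$ itself, in which case \eqref{e:IBPbis} does not apply and your identification of the combined contribution with a generator of $\d\hat\CT_-$ breaks down.
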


Before we start our proof, recall the following version of the Chu-Vandermonde identity
\begin{lemma}\label{Chu-Vandermonde}
Given finite sets $S, \bar S$ and maps 
$\pi\colon S \to \bar S$ and $\ell \colon S \to \N$, we define
$\pi_\star \ell \colon \bar S \to \N$ by $\pi_\star \ell(x) = \sum_{y \in \pi^{-1}(x)} \ell(y)$. 

Then, for every finite set $S$ and every $k \colon S \to \N$, one has the identity
\begin{equ}
\sum_{\ell\,:\, \pi_\star \ell}\binom{k}{\ell} = \binom{\pi_\star k}{\pi_\star \ell}\;,
\end{equ}
where the sum runs over all possible choices of 
$\ell$ such that $\pi_\star \ell$ is fixed.\qed
\end{lemma}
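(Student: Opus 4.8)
The plan is to deduce Lemma~\ref{Chu-Vandermonde} from the one-variable binomial theorem by a short generating-function computation, splitting the index set $S$ into the fibres of $\pi$.

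First I set up conventions: for $k,\ell\colon S\to\N$ I read $\binom{k}{\ell}=\prod_{s\in S}\binom{k(s)}{\ell(s)}$ with the usual convention $\binom{a}{b}=0$ for $b>a$ (so that the sum on the left is automatically finite, only $\ell\le k$ contributing), and likewise $\binom{\pi_\star k}{m}=\prod_{x\in\bar S}\binom{(\pi_\star k)(x)}{m(x)}$, where $m=\pi_\star\ell\colon\bar S\to\N$ denotes the fixed target. Introduce one formal variable $t_x$ for each $x\in\bar S$. Applying $(1+u)^n=\sum_{j\ge 0}\binom{n}{j}u^j$ with $u=t_{\pi(s)}$ and $n=k(s)$, and multiplying over $s\in S$, one gets
\[
\prod_{s\in S}\bigl(1+t_{\pi(s)}\bigr)^{k(s)}=\sum_{\ell\colon S\to\N}\binom{k}{\ell}\,\prod_{x\in\bar S}t_x^{(\pi_\star\ell)(x)}\;,
\]
since $\prod_{s\in S}t_{\pi(s)}^{\ell(s)}=\prod_{x\in\bar S}t_x^{(\pi_\star\ell)(x)}$. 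On the other hand, grouping the left-hand product according to the fibres $\pi^{-1}(x)$, which partition $S$, it equals $\prod_{x\in\bar S}(1+t_x)^{(\pi_\star k)(x)}$, and expanding each factor once more by the binomial theorem,
\[
\prod_{x\in\bar S}(1+t_x)^{(\pi_\star k)(x)}=\sum_{m\colon\bar S\to\N}\binom{\pi_\star k}{m}\,\prod_{x\in\bar S}t_x^{m(x)}\;.
\]
Collecting, in the first expansion, the terms with a given value $m=\pi_\star\ell$, and comparing the coefficient of the monomial $\prod_{x\in\bar S}t_x^{m(x)}$ in the two displays yields $\sum_{\ell\,:\,\pi_\star\ell=m}\binom{k}{\ell}=\binom{\pi_\star k}{m}$, which is precisely the asserted identity. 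Both sides of each display being genuine polynomials, this coefficient extraction is fully rigorous.

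There is no real obstacle: the only point requiring attention is the bookkeeping that makes the coefficient comparison legitimate, namely that the two product conventions for $\binom{k}{\ell}$ agree on the two sides and that the constraint $\pi_\star\ell=m$ decouples across the fibres of $\pi$. For completeness one can note the equivalent one-line bijective argument: $\binom{k}{\ell}$ counts families $(B_s\subseteq A_s)_{s\in S}$ with $|A_s|=k(s)$ and $|B_s|=\ell(s)$; summing over $\ell$ with $\pi_\star\ell=m$ counts such families with $\sum_{s\in\pi^{-1}(x)}|B_s|=m(x)$ for every $x$, i.e.\ a choice, for each $x\in\bar S$, of a size-$m(x)$ subset of the disjoint union $\bigsqcup_{s\in\pi^{-1}(x)}A_s$, which has $(\pi_\star k)(x)$ elements, of which there are $\prod_{x\in\bar S}\binom{(\pi_\star k)(x)}{m(x)}=\binom{\pi_\star k}{m}$.
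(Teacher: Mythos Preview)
Your proof is correct. The paper does not actually supply a proof of this lemma: the statement is terminated with a \qed, indicating that the author treats it as a standard identity and omits the argument entirely. Your generating-function derivation (and the alternative bijective count) is a clean and standard way to justify it; there is nothing to compare against on the paper's side.
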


\begin{proof}[of Proposition~\ref{prop:Delta}]
We first show that for $\Gamma \in \CT$ the right hand side of \eqref{e:coaction} is well-defined
as an element of $\CH_- \otimes \CH$, which is a priori not obvious since we did not specify 
where the legs of $(\bar \Gamma, \ell)$ are attached. Our aim therefore is to show that,
for any fixed $L \in \N^d$, the expression
\begin{equ}[e:startExpr]
\sum_{\ell\colon \d\bar \Gamma \to \N^d \atop \Sigma\ell = L}
{(-1)^{|\out \ell|} \over \ell!} (\bar \Gamma, v, \pi\ell) \otimes \Gamma / (\bar \Gamma, \ell)
\end{equ}
is independent of $v \in \bar \CV$ in $\CH_- \otimes \CH$.
By Remark~\ref{rem:vanish}, we can restrict the sum over $\ell$ to those values
such that $\ell$ vanishes on the set $A_v$ of all half-edges incident to $v$ since 
$(\bar \Gamma, v, \ell) = 0$ in $\CH_-$ for those $\ell$ for which this is not the case.
Fixing some arbitrary $u \neq v$ and using \eqref{e:moveLeg} as well as Lemma~\ref{Chu-Vandermonde}, 
we then see that 
\eqref{e:startExpr} equals
\begin{equ}
\sum_{\ell\colon \d\bar \Gamma \setminus A_v \to \N^d \atop \Sigma\ell = L}
 \sum_{m \colon \d\bar \Gamma \to \N^d}{(-1)^{|\out \ell|+|m|} \over \ell!} \binom{\ell}{m} (\bar \Gamma, u, \pi\ell - \pi m + \Sigma m \one_{v}) \otimes \Gamma / (\bar \Gamma, \ell)\;.
\end{equ}
Writing $k = \ell-m$, we rewrite this expression as
\begin{equ}
\sum_{k\colon \d\bar \Gamma \setminus A_v \to \N^d \atop \Sigma k \le L}
 \sum_{m \colon \d\bar \Gamma \setminus A_v \to \N^d \atop \Sigma m = L-\Sigma k}{(-1)^{|\out k|+|\out m| + |m|} \over k! m!} (\bar \Gamma, u, \pi k + \Sigma m \one_{v}) \otimes \Gamma / (\bar \Gamma, k + m)\;.
\end{equ}
At this stage we note that, as a consequence of \eqref{e:IBP}, we have for every subset
$A \subset \d\bar \Gamma$ and every $M \in \N^d$ the identity
\begin{equ}
\sum_{m \colon \d\bar \Gamma \setminus A \to \N^d \atop \Sigma m = M}{(-1)^{|\out m|} \over m!} \Gamma / (\bar \Gamma, k + m)
= \sum_{n \colon A \to \N^d \atop \Sigma n = M} {(-1)^{|n|+|\out n|} \over n!} \Gamma / (\bar \Gamma, k + n)\;.
\end{equ}
Inserting this into the above expression and noting that for functions $n$ supported on $A_v$ one has
$\pi n = \Sigma n \one_{v}$, we conclude that it equals
\begin{equ}
\sum_{k\colon \d\bar \Gamma \setminus A_v \to \N^d \atop \Sigma k \le L}
 \sum_{n \colon A_v \to \N^d \atop \Sigma n = L-\Sigma k}{(-1)^{|\out k|+|\out n|} \over k! n!} (\bar \Gamma, u, \pi k + \pi n) \otimes \Gamma / (\bar \Gamma, k + n)\;.
\end{equ}
Setting $\ell = k+n$ and noting that $k!n! = (k+n)!$ since $k$ and $n$ have disjoint support, we see
that this is indeed equal to \eqref{e:startExpr} with $v$ replaced by $u$, as claimed.

It remains to show that $\Delta$ is well-defined on $\CH$, namely that $\Delta \tau = 0$ in
$\CH_-\otimes \CH$ for $\tau \in \d \CT$. Choose a 
Feynman diagram $\Gamma$, an inner vertex $v \in \CV_\star$, an index $i \in \{1,\ldots,d\}$,
and a subgraph $\bar \Gamma \subset \Gamma$. 
Writing $\bar A_v$ for the half-edges in $\bar \Gamma$ adjacent to
$v$ and $A_v$ for the remaining half-edges adjacent to $v$ (so that $A_v \subset \d \bar \Gamma$), 
it suffices to show that 
\begin{equs}[e:wantedVanish]
\sum_{h \in A_v} &\sum_{\ell\colon \d\bar \Gamma \to \N^d \atop
(\bar \Gamma, \ell) \in \CT_-}
{(-1)^{|\out \ell|} \over \ell!} (\bar \Gamma, \pi\ell) \otimes \d_{h}^{\delta_i}\Gamma / (\bar \Gamma, \ell)\\
&+
\sum_{h \in \bar A_v} \sum_{\ell\colon \d\bar \Gamma \to \N^d \atop
\d_{i,v}(\bar \Gamma, \ell) \in \CT_-}
{(-1)^{|\out \ell|} \over \ell!} \d_{h}^{\delta_i}(\bar \Gamma,\pi\ell) \otimes \Gamma / (\bar \Gamma, \ell)
= 0
\end{equs}
in $\CH_- \times \CH$, where we used the shorthand notation $\d_{i,v}(\bar \Gamma, \ell) \in \CT_-$
for the condition $\d_{h}^{\delta_i}(\bar \Gamma, \ell) \in \CT_-$, which is acceptable since
this condition does not depend on which half-edge $h$ one considers. 
If $v$ is not contained in $\bar \Gamma$, then the second term vanishes
and $A_v$ consists exactly of all edges adjacent to $v$ in $\Gamma / (\bar \Gamma, \ell)$,
so that the first term vanishes as well by \eqref{e:IBP}.
If $v$ is contained in $\bar \Gamma$, then
 we attach the leg of the corresponding connected component
$\bar \Gamma_0$ of $\bar \Gamma$ to $v$ itself, so that in particular the sum over $\ell$ can be restricted
to values supported on $\d\bar \Gamma \setminus A_v$. 
By \eqref{e:IBProot}, the second term is then equal to
\begin{equ}
\sum_{h \in \d\bar \Gamma_0 \setminus A_v} \sum_{\ell\colon \d\bar \Gamma \setminus A_v \to \N^d \atop
\d_{i,v}(\bar \Gamma, \ell) \in \CT_-}
{(-1)^{|\out \ell|} \over \ell!} \ell(h)_i (\bar \Gamma,v,\pi(\ell - \delta_i \one_{h})) \otimes \Gamma / (\bar \Gamma, \ell)\;,
\end{equ}
which can be rewritten as
\begin{equ}
\sum_{h \in \d\bar \Gamma_0 \setminus A_v} \sum_{\ell\colon \d\bar \Gamma \to \N^d \atop
(\bar \Gamma, \ell) \in \CT_-}
{(-1)^{|\out \ell| + \delta_{h \in \out}} \over \ell!} (\bar \Gamma,v,\pi \ell) \otimes \Gamma / (\bar \Gamma, \ell + \delta_i \one_{h})\;.
\end{equ}
Inserting this into \eqref{e:wantedVanish}, we conclude that this expression equals
\begin{equ}
\sum_{h \in \d \bar \Gamma_0} \sum_{\ell\colon \d\bar \Gamma \to \N^d \atop
(\bar \Gamma, \ell) \in \CT_-}
{(-1)^{|\out \ell|} \over \ell!} (\bar \Gamma, \pi\ell) \otimes \d_{h}^{\delta_i}\Gamma / (\bar \Gamma, \ell)
\end{equ}
which vanishes in $\CH_-\otimes \CH$ by \eqref{e:IBP} since the half-edges in $\d\bar\Gamma_0$
are precisely all the half-edges adjacent in $\Gamma / (\bar \Gamma, \ell)$ to the node that
$\bar \Gamma_0$ was contracted to. 
\end{proof}

For any element $g \colon \CH_- \to \R$ of the dual of $\CH_-$, we now have a 
linear map $M^g \colon \CH \to \CH$ by
\begin{equ}
M^g \Gamma = (g \otimes \id)\Delta \Gamma\;,
\end{equ}
which leads to a valuation $\Pi^K_g \colon \CH \to \CS$ by setting
\begin{equ}[e:defPig]
\Pi^K_g = \Pi^K \circ M^g
\end{equ}
as in \eqref{e:renormPi}, with $\Pi^K$ the canonical
valuation \eqref{e:eval}. 
Note that this is well-defined since $\Pi^K \d\CT = 0$, as already remarked.
In particular, we can also view $\Pi^K_g$ as a map from $\CT$ to $\CS$.

For any choice of $g$ (depending on the kernel assignment $K$), such 
a valuation then automatically satisfies Properties~\ref{prop:conv} and~\ref{prop:IBP}, since these were encoded
in the definition of the space $\CH$, as well as Property~\ref{prop:restr} since the action of $\Delta$
commutes with the operation of ``amputation of the $k$th leg'' on the subspace on which the latter is defined.
In general, such a valuation may
fail to satisfy Property~\ref{prop:hom},
but if we restrict ourselves to elements $g \colon \CH_- \to \R$ that are also characters,
one has
\begin{equ}
M^g (\Gamma_1 \bullet \Gamma_2) = (M^g \Gamma_1)\bullet (M^g \Gamma_2)\;.
\end{equ}
Since $\d \CT$ is an ideal, this implies that the valuation $\Pi^K_g$ is multiplicative as a 
map from $\CT$ to $\CS$, as required by Property~\ref{prop:hom}. We have therefore shown the following.

\begin{proposition}\label{prop:consistent}
For every character $g \colon \CH_- \to \R$, the valuation $\Pi^K_g$ is consistent for $K$ in the sense of
Definition~\ref{def:consistent}. \qed
\end{proposition}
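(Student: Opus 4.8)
The plan is to verify in turn that $\Pi^K_g$ is well defined on $\CH$ and that it satisfies Properties~\ref{prop:hom}, \ref{prop:restr} and \ref{prop:conv}, in each case by transporting the corresponding elementary property of the canonical valuation $\Pi^K$ through $M^g = (g\otimes\id)\Delta$ and using that $g$ is a character; throughout we work with a smooth kernel assignment $K \in \CK^-_\infty$, so that $\Pi^K$ is the honest integral \eqref{e:eval}. Well-definedness is immediate: $M^g$ descends to $\CH \to \CH$ by Proposition~\ref{prop:Delta}, and $\Pi^K$ descends to $\CH$ because $\Pi^K \d\CT = 0$ by integration by parts, so $\Pi^K_g = \Pi^K \circ M^g$ is defined on $\CH$. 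For Property~\ref{prop:hom} I would first record the compatibility of $\Delta$ with disjoint unions: subgraphs of $\Gamma_1 \bullet \Gamma_2$ are exactly disjoint unions $\bar\Gamma_1 \bullet \bar\Gamma_2$ with $\bar\Gamma_i \subset \Gamma_i$, the set of adjacent half-edges and the contracted diagram split accordingly, and the weights factor as $\ell! = \ell_1!\,\ell_2!$ and $|\out\ell| = |\out\ell_1| + |\out\ell_2|$, so that $M^g(\Gamma_1 \bullet \Gamma_2) = (M^g\Gamma_1) \bullet (M^g\Gamma_2)$ once $g$ is multiplicative; since $\Pi^K$ is itself multiplicative (the test functions factor in \eqref{e:eval}) and $\d\CT$ is an ideal, $\Pi^K_g$ is then a graded algebra morphism $\CT \to \CS$. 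Equivariance under $\Sym_k$ is of the same nature: the action only permutes legs, $\Delta$ treats all legs on an equal footing (the $\CH_-$-component $(\bar\Gamma,\pi\ell)$ carries no leg ordering and $\Gamma/(\bar\Gamma,\ell)$ keeps the leg labels of $\Gamma$), so $M^g$ commutes with $\Sym_k$, while $\Pi^K\sigma(\Gamma) = \sigma(\Pi^K\Gamma)$ is visible from \eqref{e:eval}. Finally, contracting a subgraph preserves connectedness, so every diagram occurring in $M^g\Gamma$ is connected with $k$ legs when $\Gamma$ is, and for connected $\Gamma'$ one has $\Pi^K\Gamma' \in \CS_k^{(c)}$ because the kernels are supported in the unit ball, which confines all configurations contributing to \eqref{e:eval} to a bounded region modulo the overall translation; hence $\Pi^K_g\Gamma \in \CS_k^{(c)}$.

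For Property~\ref{prop:restr} the key point is that a leg is never an internal edge, hence never lies in a subgraph $\bar\Gamma$, so $\Gamma$ and $\Del_k\Gamma$ have the same subgraphs and deleting the $k$th leg commutes with the extraction of $(\bar\Gamma,\pi\ell)$ and with the contraction $\Gamma \mapsto \Gamma/(\bar\Gamma,\ell)$. The only case that needs a look is when the $k$th leg is adjacent to $\bar\Gamma$, so that its half-edge lies in $\d\bar\Gamma$; since it carries label $\delta$, the terms of $\Delta\Gamma$ with a nonzero $\ell$-value on that half-edge match, after deletion of the leg, the structure of $\Delta(\Del_k\Gamma)$. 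I would summarise this as the identity $(\id \otimes \Del_k)\Delta\Gamma = \Delta(\Del_k\Gamma)$ on the subspace where $\Del_k$ is defined (when the $k$th leg instead carries a label $\delta^{(m)}$ with $m \neq 0$, both sides vanish, the right-hand side because $\iota_k\phi$ does not depend on the $k$th variable). Combined with the restriction identity for the canonical valuation, $(\Pi^K\Del_k\Gamma')(\phi) = (\Pi^K\Gamma')(\iota_k\phi)$, which follows directly from \eqref{e:evalBis}, this gives $(\Pi^K_g\Del_k\Gamma)(\phi) = (\Pi^K_g\Gamma)(\iota_k\phi)$.

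Property~\ref{prop:conv} is the only genuinely new step. Write $M^g\Gamma = \Gamma + \sum \frac{(-1)^{|\out\ell|}}{\ell!}\, g(\bar\Gamma,\pi\ell)\, \Gamma/(\bar\Gamma,\ell)$, the sum ranging over nonempty subgraphs $\bar\Gamma$ and maps $\ell\colon \d\bar\Gamma \to \N^d$ for which $(\bar\Gamma,\pi\ell)$ is nonzero in $\CH_-$ — the empty-subgraph contribution being exactly $g(\mathbf 1)\,\Gamma = \Gamma$, and the surviving $\bar\Gamma$ having all connected components divergent, in particular divergent as subgraphs of $\Gamma$. Given $\Gamma$ and $\phi \in \CD_k^{(\CA)}$ satisfying the hypotheses of Proposition~\ref{prop:WeinFancy}, it then suffices to show that every counterterm satisfies $\Pi^K(\Gamma/(\bar\Gamma,\ell))(\phi) = 0$. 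Picking a connected component $\bar\Gamma_0$ of $\bar\Gamma$, it is a connected divergent subgraph, so there is $A \in \CA$ with every leg in $A$ adjacent to $\bar\Gamma_0$, i.e.\ incident to one of its vertices; contracting $\bar\Gamma_0$ to a single node $v_0$ in $\Gamma/(\bar\Gamma,\ell)$ then attaches all legs indexed by $A$ to $v_0$, so in \eqref{e:evalBis} the test function, or the relevant one of its derivatives, is always evaluated at a configuration lying in $\Delta_k^{(\CA)}$, where $\phi$ and all its derivatives vanish. Hence the integrand is identically zero, and we obtain $(\Pi^K_g\Gamma)(\phi) = (\Pi^K\Gamma)(\phi)$, which converges by Proposition~\ref{prop:WeinFancy}. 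I expect the main obstacle to be the routine but somewhat delicate combinatorial bookkeeping behind the identities for $\Delta$ invoked for Properties~\ref{prop:hom} and~\ref{prop:restr} — keeping track of the sums over $\ell$ together with the factorial and sign weights — whereas the remaining ingredients are either immediate or already established above.
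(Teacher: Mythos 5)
Your proof is correct and follows essentially the same route as the paper, which disposes of the proposition in the paragraph immediately preceding its statement: multiplicativity of $M^g$ for characters $g$ gives Property~1, commutation of $\Delta$ with leg amputation gives Property~2, and the fact that every counterterm contracts a divergent subgraph (thereby attaching all legs of some $A\in\CA$ to a single vertex, where $\phi$ and its derivatives vanish) gives Property~3 --- you simply make explicit the details the paper leaves implicit. One wording slip in your Property~2 discussion: it is the terms with \emph{vanishing} $\ell$ on the leg's half-edge that survive $\Del_k$ and match $\Delta(\Del_k\Gamma)$, the nonzero ones being annihilated because the leg in $\Gamma/(\bar\Gamma,\ell)$ then carries a label $\delta^{(m)}$ with $m\neq 0$.
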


Writing $\CG_-$ for the space of characters of $\CT_-$, it is therefore natural to define 
a ``consistent renormalisation procedure'' as a map
$\CR \colon \CK^-_\infty \to \CG_-$ such that the map 
\begin{equ}[e:defVal]
K \mapsto \hat \Pi^K = \Pi^K \circ M^{\CR(K)}\;,
\end{equ}
where $\Pi^K$ denotes the canonical valuation given by \eqref{e:eval}, extends continuously
to all of $\CK^-_0$. Our question now turns into the question whether such a map
exists.

\begin{remark}
We do certainly \textit{not} want to impose that $\CR$ extends continuously to
all of $\CK^-_0$ since this would then imply that $\Pi^K$ extends to all of $\CK^-_0$ which
is obviously false.
\end{remark}

\subsection{A Hopf algebra}
\label{sec:Hopf}

In this subsection, we address the following point. We have seen that every character $g$ of $\CH_-$ 
allow us to build a new valuation $\Pi_g$ from the canonical valuation $\Pi$ associated to a
smooth kernel assignment. We can then take a second character $h$ and build a new valuation
$\Pi_g \circ M^h$. It is natural to ask whether this would give us a genuinely new valuation
or whether this valuation is again of the form $\Pi_{\bar g}$ for some character $\bar g$.
In other words, does $\CG_-$ have a group structure, so that $g \mapsto M^g$ is a left 
action of this group on the space of all valuations?

In order to answer this question, we first define a map
$\Deltam \colon \hat \CT_- \to \CH_- \otimes \hat \CH_-$ in a way very similar to the map $\Delta$,
but taking into account the additional labels $\kk$:
\begin{equ}[e:coprod]
\Deltam (\Gamma,v_\star,\kk) = \sum_{\bar \Gamma \subset \Gamma} \sum_{\bar \ell\colon \d\bar \Gamma \to \N^d \atop \bar \kk\colon \bar \CV \to \N^d}
{(-1)^{|\out \bar \ell|} \over \bar \ell!}\binom{\kk}{\bar \kk} (\bar \Gamma, \bar \kk + \pi\bar \ell) \otimes (\Gamma,v_\star,\kk - \bar\kk) / (\bar \Gamma, \bar \ell)\;.
\end{equ}
Here, we define $(\Gamma,v_\star,\kk) / (\bar \Gamma, \bar \ell)$ similarly to before,
with the node-label of the quotient graph obtained by summing over the labels of all the 
nodes that get contracted to the same node.
If $\bar \Gamma$ completely contains one (or several) connected
components of $\Gamma$, then this definition could create graphs that contain isolated nodes,
which is forbidden by our definition of $\hat \CT_-$. Given \eqref{e:PiK}, it is natural
to identify isolated nodes with vanishing node-label with the empty diagram $\one$, while we identify those
with non-vanishing node-labels with $0$. In particular, it follows that
\begin{equ}
\Deltam \tau = \tau \otimes \one + \one \otimes \tau + \Deltamp\tau\;,
\end{equ}
where each of the terms appearing in $\Deltamp\tau$ is such that both factors contain at least
one edge.

Note the strong similarity with \cite[Def.~3.3]{BHZalg} which looks formally almost identical, 
but with graphs replaced by trees. As before, one then has 

\begin{proposition}
The map $\Deltam$ is well-defined both as a map $\hat \CH_- \to \CH_- \otimes \hat \CH_-$
and a map $\CH_- \to \CH_- \otimes \CH_-$. \qed
\end{proposition}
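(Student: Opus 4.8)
The plan is to run the argument of the proof of Proposition~\ref{prop:Delta} essentially verbatim, the only genuinely new ingredient being the bookkeeping of the node decorations $\kk$ and the way they are split between the two tensor factors by the binomial $\binom{\kk}{\bar\kk}$ in \eqref{e:coprod}. There are three points to establish. First, that the right-hand side of \eqref{e:coprod} is a well-defined element of $\CH_- \otimes \hat\CH_-$: finiteness of the sum is automatic exactly as for $\Delta$ (for fixed $(\Gamma,v_\star,\kk)$ one has $(\bar\Gamma,\bar\kk+\pi\bar\ell) \notin \CT_-$ for all but finitely many $\bar\ell,\bar\kk$), so the only real issue is that we did not specify where the leg of each connected component of $\bar\Gamma$ is attached. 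This I would handle as in the first half of the proof of Proposition~\ref{prop:Delta}: fix a component $\bar\Gamma_0$ of $\bar\Gamma$ and two of its vertices $u \ne v$, use Remark~\ref{rem:vanish} to restrict the sums over $\bar\ell,\bar\kk$ to those vanishing at the attachment vertex, then move the leg from $v$ to $u$ via \eqref{e:moveLeg}. This produces a multinomial reshuffling acting on the combined decoration $\bar\kk + \pi\bar\ell$ carried by the left factor $(\bar\Gamma,\bar\kk+\pi\bar\ell)$; since this decoration enters the left factor only through the single multiindex $\bar\kk+\pi\bar\ell$, the binomial $\binom{\kk}{\bar\kk}$ telescopes against the multinomial coming from \eqref{e:moveLeg}, Lemma~\ref{Chu-Vandermonde} collapses the auxiliary sum exactly as before, and one recovers the same expression with $v$ replaced by $u$ (the right tensor factor being untouched throughout).

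Second, one must check that the resulting map $\Deltam\colon \hat\CT_- \to \CH_-\otimes\hat\CH_-$ annihilates the ideal $\d\hat\CT_-$, so that it descends to $\hat\CH_-$. Since \eqref{e:coprod} is extended multiplicatively, $\Deltam$ is an algebra morphism (the formula factorises over connected components, and $\hat\CT_- \to \CH_-$ and $\hat\CT_- \to \hat\CH_-$ are algebra morphisms), so it suffices to send each of the three families of generators \eqref{e:IBPbis}, \eqref{e:IBProot}, \eqref{e:moveLeg} of $\d\hat\CT_-$ to $0$ in $\CH_-\otimes\hat\CH_-$. The generators \eqref{e:moveLeg} are covered by the first point. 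For the integration-by-parts generators one mirrors the second half of the proof of Proposition~\ref{prop:Delta}, splitting the sum over subgraphs according to whether the differentiated vertex $v$ lies outside or inside $\bar\Gamma$. When $v$ lies outside $\bar\Gamma$, the left factor is untouched, the term $\kk(v)_i(\Gamma,v_\star,\kk-\delta_i\one_v)$ of \eqref{e:IBPbis} merely lowers the decoration of the surviving vertex $v$ in the quotient, and the contribution reduces to the relation \eqref{e:IBPbis} applied to the vacuum diagram $(\Gamma,v_\star,\kk)/(\bar\Gamma,\bar\ell)$ (to \eqref{e:IBProot} when $v=v_\star$), hence lies in $\d\hat\CT_-$.

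When $v$ lies inside $\bar\Gamma$ one attaches the leg of the corresponding component to $v$, so that the sums over $\bar\ell,\bar\kk$ may be restricted to values vanishing at $v$; the derivatives sitting on the half-edges of $\bar\Gamma$ at $v$ then act on the left factor, while the derivatives on the adjacent half-edges in $A_v$ are re-summed against $\Gamma/(\bar\Gamma,\bar\ell)$ using \eqref{e:IBP} exactly as for $\Delta$. One then observes that the part of $\kk(v)_i$ routed into $\bar\kk(v)$ by $\binom{\kk}{\bar\kk}$, together with the $(\pi\bar\ell)(v)_i$ term produced by that re-summation, reconstitutes the relation \eqref{e:IBPbis} (resp.\ \eqref{e:IBProot} at the root of that component) for the left factor $(\bar\Gamma,\bar\kk+\pi\bar\ell)$, while the remaining terms on the right factor cancel by \eqref{e:IBP}. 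Since all of \eqref{e:IBPbis}, \eqref{e:IBProot}, \eqref{e:moveLeg} are homogeneous for $\deg$, the truncation to $\CT_-$ of the left factor is unambiguous at every stage. Third, from \eqref{e:degreeSubgraph} and the definition of $\deg$ for vacuum diagrams one checks the co-additivity $\deg(\bar\Gamma,\bar\kk+\pi\bar\ell) + \deg\bigl((\Gamma,v_\star,\kk-\bar\kk)/(\bar\Gamma,\bar\ell)\bigr) = \deg(\Gamma,v_\star,\kk)$; hence if $\deg(\Gamma,v_\star,\kk)>0$ then in every term surviving in $\CH_-\otimes\hat\CH_-$ the right factor has strictly positive degree, so it is killed on passing to $\CH_-\otimes\CH_-$, and multiplicativity of $\Deltam$ then shows it factors through $\CH_-$ in the source and takes values in $\CH_-\otimes\CH_-$.

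The hard part will be the inside-$\bar\Gamma$ case of the integration-by-parts check: one has to match the $\kk(v)_i$ term of \eqref{e:IBPbis} (resp.\ the $\sum_{v\in\bar\CV}\kk(v)_i$ term of \eqref{e:IBProot}) simultaneously against the splitting of the node decoration by $\binom{\kk}{\bar\kk}$, against the re-summation over the adjacent half-edges $A_v$, and against the signs $(-1)^{|\out\bar\ell|}$, all at the appropriately shifted degree on each side. This is, however, purely a matter of bookkeeping; the close formal parallel with \cite[Def.~3.3]{BHZalg} and with the proof of Proposition~\ref{prop:Delta} makes it clear that everything works out, so no new idea is required.
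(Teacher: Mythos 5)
Your proposal is correct and matches the paper's intent exactly: the paper states this proposition with no written proof precisely because it is obtained by rerunning the argument of Proposition~\ref{prop:Delta} with the extra bookkeeping of the node decorations $\kk$ and the binomial $\binom{\kk}{\bar\kk}$, which is what you do. Your three-step outline (leg-independence via \eqref{e:moveLeg} and Lemma~\ref{Chu-Vandermonde}, annihilation of the generators \eqref{e:IBPbis}--\eqref{e:moveLeg}, and degree co-additivity to pass to $\CH_-\otimes\CH_-$) is in fact more explicit than what the paper provides.
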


It follows immediately from the definitions that 
$\Deltam$ is multiplicative. What is slightly less obvious is that it also has
a nice coassociativity property as follows.

\begin{proposition}\label{prop:coass}
The identities
\begin{equ}[e:coassoc]
(\Deltam \otimes \id) \Delta = (\id \otimes \Delta) \Delta\;,\qquad
(\Deltam \otimes \id) \Deltam = (\id \otimes \Deltam) \Deltam
\end{equ}
hold between maps $\CB \to \CH_- \otimes \CB \otimes \CB$
for $\CB = \CH$ in the case of the first identity and
for $\CB \in \{\CH_-,\hat \CH_-\}$ in the case of the second one. 
\end{proposition}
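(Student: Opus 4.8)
The plan is to establish both identities of \eqref{e:coassoc} by the same device: expand each side as an explicit sum indexed by \emph{nested} subgraphs carrying derivative and node labels, then exhibit a bijection between the two index sets under which all combinatorial factors match. Since $\Delta$ and $\Deltam$ are already known to descend to $\CH$, $\CH_-$ and $\hat\CH_-$ (Proposition~\ref{prop:Delta} together with the well-definedness of $\Deltam$ established just above), it is enough to check the identities on representatives in $\CT$ (resp.\ in $\CT_-$ or in $\hat\CT_-$); the reduction modulo $\d\CT$, resp.\ $\d\hat\CT_-$, is then automatic. I would carry out the first identity in detail and then indicate the modification needed for the second.

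For the first identity, applying \eqref{e:coaction} and then \eqref{e:coprod} writes $(\Deltam\otimes\id)\Delta\,\Gamma$ as a sum over a subgraph $\tilde\Gamma\subseteq\Gamma$ with a label $\tilde\ell\colon\d\tilde\Gamma\to\N^d$, a subgraph $\bar\Gamma\subseteq\tilde\Gamma$ with a label $\bar\ell$ on the half-edges of $\tilde\Gamma$ adjacent to $\bar\Gamma$, and a splitting of the node decoration $\bar\kk\le\pi\tilde\ell$, the three tensor slots being $(\bar\Gamma,\bar\kk+\pi\bar\ell)$, $(\tilde\Gamma,\pi\tilde\ell-\bar\kk)/(\bar\Gamma,\bar\ell)$ and $\Gamma/(\tilde\Gamma,\tilde\ell)$. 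Expanding the other side by \eqref{e:coaction} twice, $(\id\otimes\Delta)\Delta\,\Gamma$ is a sum over a subgraph $\bar\Gamma\subseteq\Gamma$ with a label $\hat\ell\colon\d\bar\Gamma\to\N^d$, followed by a subgraph $\Gamma'$ of the contracted diagram $\Gamma/(\bar\Gamma,\hat\ell)$ with its own label $\ell'$, the slots being $(\bar\Gamma,\pi\hat\ell)$, $(\Gamma',\pi\ell')$ and $(\Gamma/(\bar\Gamma,\hat\ell))/(\Gamma',\ell')$. The bijection identifies $\Gamma'$ with $\tilde\Gamma:=\bar\Gamma\cup(\text{preimage of }\Gamma')$, so that $\Gamma'=\tilde\Gamma/\bar\Gamma$ and $(\Gamma/\bar\Gamma)/\Gamma'=\Gamma/\tilde\Gamma$; the first and third slots then coincide on the nose, and the second slots agree once the labels are matched. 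The one point requiring care here is that a connected component of $\bar\Gamma$ that is not met by $\Gamma'$ becomes an isolated node in $\tilde\Gamma/\bar\Gamma$, and one must check that the surviving terms are precisely those in which this node carries no decoration, so that, by the isolated-node convention recalled after \eqref{e:coprod}, it is identified with $\one$ and drops out --- this is what makes $\tilde\Gamma/\bar\Gamma$ literally equal to $\Gamma'$.

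What then remains is the termwise comparison of scalar coefficients. Under the bijection, the label $\hat\ell$ on $\d\bar\Gamma$ decomposes into a part lying on edges that become internal to $\tilde\Gamma$ and a part remaining on $\d\tilde\Gamma$, and dually the label $\ell'$ on $\d\Gamma'$ lifts to the portion of $\tilde\ell$ sitting on edges leaving $\bar\Gamma$ inside $\tilde\Gamma$; the node decoration $\bar\kk$ is likewise read off from $\tilde\ell$ on the half-edges of $\d\bar\Gamma$ that leave $\tilde\Gamma$. The signs $(-1)^{|\out\ell|}$ are preserved because the incoming/outgoing type of a half-edge is unchanged under contraction (the remark following \eqref{e:coaction}); the factorials factor over disjoint supports exactly as at the end of the proof of Proposition~\ref{prop:Delta}; and the sum over the possible ways a derivative label on a given half-edge of $\d\bar\Gamma$ splits between the inner and the outer extraction is carried out by the Chu--Vandermonde identity, Lemma~\ref{Chu-Vandermonde}, which also produces the binomial coefficient $\binom{\kk}{\bar\kk}$ of \eqref{e:coprod} (here with $\kk=\pi\tilde\ell$) out of the corresponding $\Delta$-type sums. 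Substituting these identifications, one checks that the two expansions coincide.

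The second identity, $(\Deltam\otimes\id)\Deltam=(\id\otimes\Deltam)\Deltam$ on $\CH_-$ and $\hat\CH_-$, follows from the identical bijection on nested subgraphs; the only additional ingredient is the ambient node decoration $\kk$ of $\Gamma$, which at each extraction is split by a binomial coefficient into a part kept on the contracted node and a part transferred to the extracted subgraph as in \eqref{e:coprod}, and the compatibility of performing two such splittings in either order is again a Chu--Vandermonde computation, in the spirit of Remark~\ref{rem:ker}. I expect the genuine difficulty to be organisational rather than conceptual: one must fix at the outset a notation that simultaneously records, on both sides, which half-edge of $\Gamma$ carries which derivative label and how $\pi$ redistributes those labels onto node decorations as components are contracted, and one must be disciplined about the isolated-node convention and about discarding terms whose extracted piece is not in $\CT_-$ --- these being identified with $0$ in $\CH_-$ consistently on both sides, since the degree of an extracted subgraph does not depend on the order of extraction. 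Once that bookkeeping is in place, every combinatorial factor collapses through disjoint-support factorial identities and Lemma~\ref{Chu-Vandermonde}, and well-definedness of both sides on $\CH$, $\CH_-$ and $\hat\CH_-$ is inherited from the results already established.
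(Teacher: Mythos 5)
Your proposal is correct and follows essentially the same route as the paper: a bijection between nested extractions (identifying the inner subgraph of the second coproduct with a quotient of the union), the isolated-node convention to make the quotients literally agree, and Chu--Vandermonde to match the derivative/node-label splittings and their factorial coefficients. The only difference is presentational — the paper proves the second identity and treats the first as a special case (legs behaving like ordinary edges never contained in subgraphs), while you do the reverse — and your proposal is, if anything, slightly more explicit about the label bookkeeping than the paper, which defers it to \cite[Prop.~3.9]{BHZalg}.
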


\begin{proof}
We only verify the second identity since the first one is essentially a special case
of the second one. The difference is the presence of legs, which are never part of
the subgraphs appearing in the definition of $\Delta$, but otherwise play the same role as
a ``normal'' edge. 

Fix now a Feynman diagram $\Gamma$ as well as two subgraphs $\Gamma_1$ and $\Gamma_2$ with the property that 
each connected component of $\Gamma_1$ is either contained in $\Gamma_2$ or vertex-disjoint from it.
We also write $\bar \Gamma = \Gamma_1 \cup \Gamma_2$ and $\Gamma_{1,2} = \Gamma_1 \cap \Gamma_2$. 
There is then a natural bijection
between the terms appearing in $(\Deltam \otimes \id) \Deltam$ and those appearing in 
$(\id \otimes \Deltam) \Deltam$ obtained by noting that first extracting $\bar \Gamma$ from $\Gamma$
and then extracting $\Gamma_1$ from $\bar \Gamma$ is the same as first extracting $\Gamma_1$ from
$\Gamma$ and then extracting $\Gamma_2 / \Gamma_{1,2}$ from $\Gamma / \Gamma_1$.
It therefore remains to show that 
the labellings and combinatorial factors appearing for these terms are also the same.
This in turn is a consequence from a generalisation of the Chu-Vandermonde identity and can
be obtained in almost exactly the same way as \cite[Prop.~3.9]{BHZalg}.
\end{proof}

If we write $\one$ for the empty vacuum diagram and $\one^*$ for the 
element of $\CG_-$ that vanishes on all non-empty diagrams, then we see
that $(\CH_-, \Deltam, \bullet,\one,\one^*)$ is a bialgebra. Since it also
graded (by the number of edges of a diagram) and connected (the only diagram
with $0$ edges is the empty one), it is a Hopf algebra so that $\CG_-$ is
indeed a group with product
\begin{equ}
f \circ g \eqdef (f \otimes g)\Deltam\;,
\end{equ}
and inverse $g^{-1} = g\CA$, where $\CA$ is the antipode. The first identity in 
\eqref{e:coassoc} then implies that the map $g \mapsto M^g = (g \otimes \id)\Delta$
does indeed yield a group action on the space of valuations, thus answering 
positively the question asked at the start of this section. 

\subsection{Twisted antipodes and the BPHZ theorem}
\label{sec:twisted}

An arbitrary character $g$ of $\CH_-$ is uniquely determined by its value on
connected vacuum diagrams $\Gamma$ with $\deg\Gamma \le 0$.
Comparing \eqref{e:Msimple} with \eqref{e:coaction}, this would suggest that a natural choice
of renormalisation procedure $\CR$ is given by simply setting 
\begin{equ}
\CR(K)\Gamma = - \Pi_-^K\Gamma\;,
\end{equ}
as this would indeed reproduce the expression \eqref{e:renormSimple}. Unfortunately,
while this choice does yield valuations that extend continuously to all kernel assignments in 
$\CK^-_0$ for a class of ``simple'' Feynman diagrams, it fails to do so for all of them.

Following \cite{CK1,CK2}, a more sophisticated guess would be 
to set $\CR(K)\Gamma = \Pi_-^K\CA\Gamma$, for $\CA$ the antipode of $\CH_-$ endowed with
the Hopf algebra structure described in the previous section.
The reason why this identity also fails to do the trick can be illustrated with the
following example. Consider the case $d = 1$ and two labels with $|\Labhom_1| = -1/3$ and 
$|\Labhom_2| = -4/3$. Drawing edges decorated with $\Labhom_1$ in black and
edges decorated with $\Labhom_2$ in blue, we then consider
\begin{equ}
\Gamma = \<triangle>\;,
\end{equ}
which has degree $\deg\Gamma = 0$. Since $\Gamma$ has only one leg, the naive valuation
$\Pi^K \Gamma$ can be identified with the real number
\begin{equ}
\Pi^K \Gamma = (K_1 * K_2 * K_1)(0)\;,
\end{equ}
where we wrote $K_i \eqdef K_{\Labhom_i}$ and $*$ denotes convolution.
Since this might diverge for a generic kernel assignment in $\CK^-_0$, even if $K_2$ is
replaced by its renormalised version, there appears to be no good canonical renormalised value
for $\hat\Pi^K \Gamma$, so we would expect to just have $\hat\Pi^K \Gamma = 0$.

Let's see what happens instead if 
we choose the renormalisation procedure $\CR(K)\Gamma = \Pi_-^K\CA\Gamma$.
It follows from the definition of $\Delta$ that
\begin{equ}[e:exDelta]
\Delta \Gamma = \one \otimes \<triangle> + \<line> \otimes \<loop>
+ \<baretriang> \otimes \<leg>\;,
\end{equ}
since \<line> and \<baretriang> are the only subgraphs of negative degree, but their
degree remains above $-1$ so that no node-decorations are added.
Note furthermore that in $\CH_-$ one has the identities
\begin{equ}
\Deltam \<line> = \<line> \otimes \one + \one \otimes \<line>\;,\quad
\Deltam \<baretriang> = \<baretriang> \otimes \one + \one \otimes \<baretriang>\;.
\end{equ}
The reason why there is no additional term analogous to the middle term of
\eqref{e:exDelta} appearing in the second identity is that the corresponding factor would be
of positive degree and therefore vanishes when viewed as an element of $\CH_-$.
As a consequence, we have $\CA \tau = -\tau$ in both cases, so that the first and last terms
of \eqref{e:exDelta} cancel out and we are eventually left with
\begin{equ}
\hat \Pi^K \Gamma = - (K_1 * K_1)(0)\cdot K_2(0)\;,
\end{equ}
which is certainly not desirable since it might diverge as well.

The way out of this conundrum is to define a \textit{twisted antipode} $\hat \CA \colon \CH_- \to \hat \CH_-$
which is defined by a relation very similar to that defining the antipode, but this time
guaranteeing that the renormalised valuation vanishes on those diagrams that encode
``potentially diverging constants'' as above. Here, the renormalised valuation is defined by
setting 
\begin{equ}[e:properRenorm]
\CR(K)\Gamma = \Pi_-^K \hat \CA \Gamma\;,
\end{equ}
where $\Pi_-^K$ is defined by \eqref{e:PiK}. Writing $\CM\colon \hat \CH_-\otimes \hat \CH_- \to \hat \CH_-$
for the product, we define $\hat \CA$ to be such that 
\begin{equ}[e:twisted]
\CM (\hat \CA \otimes \id)\Deltam \Gamma = 0\;,
\end{equ}
for every non-empty connected vacuum diagram $\Gamma \in \hat \CH_-$ with $\deg\Gamma \le 0$.
At first sight, this looks exactly like the definition of the antipode. The difference is
that the map $\Deltam$ in the above expression goes from 
$\hat \CH_-$ to $\CH_- \otimes \hat \CH_-$, so that no projection onto diverging diagrams
takes place on the right factor. If we view $\CH_-$ as a subspace of $\hat \CH_-$, then the 
antipode satisfies the identity 
\begin{equ}
\CM (\CA \otimes \pi_-)\Deltam \Gamma = 0\;,
\end{equ}
where $\pi_-\colon \hat \CH_- \to \CH_-$ is the projection given by quotienting by the
ideal $\CJ_+$ generated by diagrams with strictly positive degree.
We have the following simple lemma.

\begin{lemma}
There exists a unique map $\hat \CA \colon  \CH_- \to \hat \CH_-$ satisfying \eqref{e:twisted}.
Furthermore, the map $\Pi^K_\BPHZ$ given by \eqref{e:defVal} with $\CR(K) = \Pi_-^K \hat\CA$ is indeed
a valuation.
\end{lemma}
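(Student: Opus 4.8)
The plan is to construct $\hat\CA$ by induction on the number of edges of a connected vacuum diagram with $\deg\Gamma\le 0$, exactly as one constructs an antipode on a connected graded bialgebra, and then check the valuation claim against the properties already established. First I would observe that since $\CH_-$ is graded by the number of edges and connected (the only diagram with no edges is $\one$), the coproduct $\Deltam$ restricted to a connected $\Gamma$ with $\deg\Gamma\le 0$ can be written as $\Deltam\Gamma = \Gamma\otimes\one + \one\otimes\Gamma + \Deltamp\Gamma$, where every term of $\Deltamp\Gamma$ has both factors containing strictly fewer edges than $\Gamma$ (the left factor lives in $\CH_-$, the right in $\hat\CH_-$). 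Expanding \eqref{e:twisted} therefore yields the recursion
\begin{equ}
\hat\CA\Gamma = -\Gamma - \CM(\hat\CA\otimes\id)\Deltamp\Gamma\;,
\end{equ}
in which the right-hand side only involves $\hat\CA$ evaluated on connected vacuum diagrams with strictly fewer edges (and of degree $\le 0$, since a quotient of a divergent diagram by a subgraph is again divergent — this is why the image naturally lands in $\hat\CH_-$ rather than $\CH_-$). This both proves existence and forces uniqueness, since any solution of \eqref{e:twisted} must satisfy this same recursion. One then extends $\hat\CA$ to all of $\CH_-$ multiplicatively, which is consistent because $\CH_-$ is generated as an algebra by its connected diagrams and because $\Deltam$ is multiplicative, so \eqref{e:twisted} automatically propagates to products; the well-definedness of $\hat\CA$ on $\hat\CH_-/\d\hat\CT_-$ is inherited from that of $\Deltam$, already established.

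For the second assertion, set $g = \CR(K) = \Pi_-^K\hat\CA$, which is a character of $\CH_-$ since $\hat\CA$ is an algebra morphism and $\Pi_-^K$ is a character of $\hat\CH_-$ (Remark~\ref{rem:ker}). Then $\Pi^K_\BPHZ = \Pi^K\circ M^g$ with $M^g = (g\otimes\id)\Delta$, and by Proposition~\ref{prop:consistent} this is a consistent valuation; in particular it is a well-defined map $\CH\to\CS$ because $\Delta$ is well-defined on $\CH$ by Proposition~\ref{prop:Delta} and $\Pi^K\d\CT = 0$. So the word ``valuation'' here really just means: $g$ is a genuine character of $\CH_-$ (equivalently of $\CT_-$, after checking the value on diagrams of positive degree is irrelevant), hence $\CR(K)\in\CG_-$ and \eqref{e:defVal} makes sense.

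The one point requiring a small argument is that $\CR(K) = \Pi_-^K\hat\CA$ genuinely defines an element of $\CG_-$, i.e.\ that it factors through $\CT_-$ and not merely $\hat\CH_-$: a character of $\CT_-$ is the same as a character of $\hat\CH_-$ that vanishes on the ideal $\CJ_+$ of diagrams of positive degree, but $\CR(K)$ is only defined on connected diagrams of degree $\le 0$ in the first place and extended multiplicatively, so there is nothing to check beyond noting that the recursion never requires evaluating $\hat\CA$ on a diagram of positive degree. I expect this bookkeeping — tracking that all diagrams produced by $\Deltamp$ in the recursion are again divergent and connected-component-wise of the right type, so that the induction closes within the relevant subspace — to be the only mildly delicate step; the rest is a direct application of the connectedness-of-the-grading argument and of Propositions~\ref{prop:Delta} and~\ref{prop:consistent}.
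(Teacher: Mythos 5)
Your argument is correct and is essentially the paper's: the recursion $\hat\CA\Gamma = -\Gamma - \CM(\hat\CA\otimes\id)\Deltamp\Gamma$, iterated over the edge-grading, is exactly the locally finite Neumann series the paper writes down, and the valuation claim follows as you say from multiplicativity of $\hat\CA$ together with Proposition~\ref{prop:consistent}. One parenthetical is misstated, though it does not affect the proof: a quotient of a divergent diagram by a divergent subgraph need \emph{not} be divergent (that is precisely why the codomain of $\hat\CA$ must be $\hat\CH_-$ rather than $\CH_-$); the induction closes simply because $\hat\CA$ is only ever applied to the \emph{left} factor of $\Deltamp$, which lies in $\CH_-$ and is therefore divergent by definition of that projection.
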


\begin{proof}
The existence and uniqueness of $\hat \CA$ is immediate by performing an
induction over the number of edges. Defining $\Deltak{k}\colon \CH_- \to \hat \CH_-^{\otimes (k+1)}$
inductively by $\Deltak{0} = \iota$ and then 
\begin{equ}
\Deltak{k+1} = (\Deltak{k} \otimes \id)\Deltamp\iota\;,
\end{equ}
where $\iota \colon \CH_- \to \hat \CH_-$ is the canonical injection, one obtains 
the (locally finite) Neumann series
\begin{equ}[e:reprAhat]
\hat \CA = \sum_{k \ge 0} (-1)^{k+1} \CM^{(k)}\Deltak{k}\;,
\end{equ}
where $\CM^{(k)} \colon \hat \CH_-^{\otimes (k+1)} \to \hat \CH_-$ is the multiplication
operator.
The uniqueness also immediately implies
that $\hat \CA$ is multiplicative, so that $\CR(K)$ as defined above is indeed a character
for every $K \in \CK^-_\infty$. 
\end{proof}

\begin{definition}
We call the renormalisation procedure defined by $\CR(K) = \Pi_-^K \hat\CA$ the ``BPHZ renormalisation''.
\end{definition}

It follows from \eqref{e:reprAhat} that in the above example
the twisted antipode satisfies
\begin{equ}
\hat\CA \<baretriang> = - \<baretriang> + \<bareloop>\;\<line>\;,
\end{equ}
so that
\begin{equ}
(\hat\CA \otimes \id)\Delta\Gamma 
= \one \otimes \<triangle> - \<line> \otimes \<loop>
- \<baretriang> \otimes \<leg> + \<bareloop>\;\<line> \otimes \<leg>\;,
\end{equ}
which makes it straightforward to verify that indeed 
$\Pi^K_\BPHZ \Gamma = 0$.
The following general statement should make it clear that this is indeed
the ``correct'' way of renormalising Feynman diagrams.

\begin{proposition}\label{prop:poly}
The BPHZ renormalisation is characterised by the fact that, for every $k \ge 1$ and 
every connected Feynman diagram $\Gamma$ with $k$ legs and $\deg\Gamma \le 0$, there 
exists a constant $C$ such that if $\phi$ is a test function on $\S^{k}$
of the form $\phi= \phi_0\cdot \phi_1$ 
such that $\phi_1$ depends only on $x_1+\ldots+x_k$, $\phi_0$ depends only on the differences of the $x_i$, 
and there exists a polynomial 
$P$ with $\deg P + \deg \Gamma \le 0$ and
\begin{equ}[e:propPhi0]
\phi_0(x_1,\ldots,x_{k}) = P(x_2-x_1,\ldots,x_{k}-x_1)\;,\qquad |x| \le C\;,
\end{equ}
then $\bigl(\Pi^K_\BPHZ \Gamma\bigr)(\phi) = 0$.
\end{proposition}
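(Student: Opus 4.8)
The plan is to verify the vanishing property in two stages: first show that the BPHZ valuation $\Pi^K_\BPHZ$ does satisfy the stated property, and then argue that this property characterises it among consistent renormalisations.

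For the first stage, I would start from the representation $\Pi^K_\BPHZ \Gamma = \Pi^K (M^{\CR(K)}\Gamma)$ with $\CR(K) = \Pi_-^K \hat\CA$, expanding via $\Delta$:
\begin{equ}
\bigl(\Pi^K_\BPHZ \Gamma\bigr)(\phi) = \sum_{\bar\Gamma \subset \Gamma}\sum_{\ell\colon \d\bar\Gamma\to\N^d} {(-1)^{|\out\ell|}\over \ell!}\,\bigl(\Pi_-^K\hat\CA\bigr)(\bar\Gamma,\pi\ell)\cdot \bigl(\Pi^K\,\Gamma/(\bar\Gamma,\ell)\bigr)(\phi)\;.
\end{equ}
The key point is that $\phi = \phi_0\phi_1$ with $\phi_0$ polynomial of degree $\le -\deg\Gamma$ in a neighbourhood of the total diagonal. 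By translation invariance, only $\phi_0$ matters for the quotient-graph factor, and on the relevant (small-scale) region $\phi_0$ agrees with the polynomial $P$; since all kernels are compactly supported in the unit ball, taking $C$ large enough forces the point configuration to lie in this neighbourhood whenever the integrand is nonzero. Thus $\bigl(\Pi^K\,\Gamma/(\bar\Gamma,\ell)\bigr)(\phi)$ becomes an integral of the kernels against a polynomial, i.e.\ precisely an expression of the form $\Pi_-^K$ applied to the vacuum diagram obtained from $\Gamma/(\bar\Gamma,\ell)$ by recording the polynomial weights as node decorations (and stripping its leg(s)). Concretely, I expect the right-hand side to collapse to $\CM(\hat\CA\otimes\id)\Deltam$ applied to a single connected vacuum diagram $\tilde\Gamma$ of degree $\deg\Gamma + \deg P \le 0$ — the one associated to $\Gamma$ weighted by $P$ — and then \eqref{e:twisted} gives exactly $0$. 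The bookkeeping here is identifying that the coaction $\Delta$ on $\Gamma$, followed by pairing the right leg against $P$, reproduces the coproduct $\Deltam$ on $\tilde\Gamma$; this is where the matching of the factors $(-1)^{|\out\ell|}/\ell!$ and $\binom{\kk}{\bar\kk}$ in \eqref{e:coprod} against the multinomial expansion of $P$ on $\S^k/\S$ (using $(x_w - x_{v_\star})^{\kk(w)}$ as in \eqref{e:PiK} and Remark~\ref{rem:ker}) must be checked, via the Chu--Vandermonde identity of Lemma~\ref{Chu-Vandermonde} just as in Proposition~\ref{prop:Delta}.

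For the characterisation (the converse direction), suppose $\Pi = \Pi^K \circ M^g$ for some character $g$ of $\CH_-$ and that $\Pi$ has the stated vanishing property. Running the above computation for the test functions $\phi$ of the prescribed form, and using an induction on the number of edges of $\Gamma$, one recovers $g$ recursively: the leading term $\one \otimes \Gamma$ contributes $\Pi^K$ of (a polynomial times) $\Gamma$, which is generically nonzero, so the remaining terms — governed by the values of $g$ on the proper divergent subgraphs and their node-decorated variants — are forced to cancel it exactly. Unwinding, this says precisely $\CM(g\otimes\Pi_-^K)\Deltam \tilde\Gamma = 0$ on all connected vacuum diagrams $\tilde\Gamma$ of degree $\le 0$, which by the inductive (locally finite Neumann series) argument of the lemma above uniquely determines $g = \Pi_-^K\hat\CA = \CR(K)$.

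The main obstacle, I expect, is the combinatorial identification in the first stage: matching the sum over $(\bar\Gamma,\ell)$ coming from $\Delta\Gamma$ (with the leg of $\Gamma/(\bar\Gamma,\ell)$ paired against the polynomial $P$) term-by-term with the sum over subgraphs and decorations $(\bar\kk,\bar\ell)$ appearing in $\Deltam\tilde\Gamma$. One has to be careful that the node decoration $\kk$ on $\tilde\Gamma$ (encoding $P$) distributes correctly across the contracted components and that the signs $(-1)^{|\out\ell|}$ line up, and this requires the same generalised Chu--Vandermonde bookkeeping invoked in the proofs of Propositions~\ref{prop:Delta} and~\ref{prop:coass}. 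Once that dictionary is in place, the vanishing is an immediate consequence of the defining relation \eqref{e:twisted} of the twisted antipode.
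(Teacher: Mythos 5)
Your proposal follows essentially the same route as the paper: expand $\Pi^K_\BPHZ\Gamma$ via the coaction, use compact support of the kernels to replace $\phi_0$ by the polynomial $P$, identify the resulting pairing with $(g\otimes\Pi_-^K)\Deltam$ applied to the associated node-decorated vacuum diagram (the paper establishes this as the identity $(\Pi_g^K\Gamma)(\phi) = \binom{\Labn}{\bar\Labn}(g\otimes\Pi_-^K)\Deltam(\bar\Gamma,v_\star,\pi(\Labn-\bar\Labn))$, with the same Chu--Vandermonde bookkeeping you anticipate), and conclude from the defining relation \eqref{e:twisted} of the twisted antipode; the characterisation direction is likewise immediate since the identity holds for arbitrary characters $g$. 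The proposal is correct.
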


\begin{remark}
One way to interpret this statement is that, once we have defined $\Pi^K_\BPHZ \Gamma$
for test functions in $\CD_k^{(\CA)}$ with $\CA = \{\{1,\ldots,k\}\}$, 
the canonical way of extending it to all test functions is to 
subtract from it the linear combination of derivatives of delta functions which has precisely the
same effect when testing it against all polynomials of degree at most $-\deg \Gamma$.
\end{remark}

\begin{proof}
The statement follows more or less immediately from the following observation.
Take a valuation of the form $\Pi^K_g$ as in \eqref{e:defPig} for some $K \in \CK^-_\infty$
and some $g \in \CG_-$. Fixing the Feynman diagram $\Gamma$ from the statement, we
write $\d \Gamma = \{[1],\ldots,[k]\}$ for its $k$ legs, and we fix
a function $\Labn \colon \d\Gamma \to \N^d$ with $|\Labn| + \deg\Gamma \le 0$. Write furthermore 
$\bar \Labn \colon \d\Gamma \to \N^d$ for the function such that the $\ell$th leg has label
$\delta^{(\bar \Labn([\ell]))}$. We assume without loss of generality that $\bar \Labn([1]) = 0$
since we can always reduce ourselves to this case by \eqref{e:IBP}.

Let then $P$ be given by
\begin{equ}
P(x) = P_\Labn(x) = \prod_{i=[2]}^{[k]} (x_{i} - x_{[1]})^{\Labn(i)}\;,
\end{equ}
let $\phi_0$ be as in \eqref{e:propPhi0}, and let $\phi_1$ be a test function depending only on the
sums of the coordinates and integrating to $1$.
We then claim that, writing $\bar \Gamma \subset \Gamma$ for the
maximal subgraph where we only discarded the legs and $v_\star$ for the vertex of $\Gamma$ incident to 
the first leg, one has
\begin{equ}
\bigl( \Pi_g^K \Gamma\bigr)(\phi) = \binom{\Labn}{\bar \Labn}  (g\otimes \Pi_-^K) \Deltam (\bar \Gamma,v_\star,\pi(\Labn - \bar \Labn))\;.
\end{equ}
(In particular  
$\bigl( \Pi^K_g \Gamma\bigr)(\phi) = 0$ unless $\bar \Labn \le \Labn$.) Indeed, comparing
\eqref{e:evalBis} to \eqref{e:PiK}, it is clear that this is the case when $g = \one^*$, noting that
\begin{equ}[e:derPoly]
D_2^{\bar \Labn([2])}\cdots D_k^{\bar \Labn([k])} P_\Labn =  \binom{\Labn}{\bar \Labn} P_{\Labn - \bar \Labn}\;.
\end{equ}
The general case then follows by comparing the definitions of $\Delta$ and $\Deltam$, noting that 
by \eqref{e:derPoly} the effect of the label $\bar \Labn$ in \eqref{e:coprod} is exactly the same 
of that of the components of $\ell$ supported on the ``legs'' in \eqref{e:coaction}. In other words, when
comparing the two expressions one should set $\bar \ell(h) = \ell(h)$ for the half-edges $h$ that are not legs
and $\bar \Labn(v) = \sum \ell(e,v)$, where the sum runs over all legs (if any) adjacent to $v$.

The claim now follows immediately from the definition of the twisted antipode and the BPHZ
renormalisation:
\begin{equs}
\bigl(\Pi^K_\BPHZ \Gamma\bigr)(\phi)
&= \binom{\Labn}{\bar \Labn}  (\Pi_-^K \hat\CA\otimes \Pi_-^K) \Deltam (\bar \Gamma,v_\star,\pi(\Labn - \bar \Labn)) \\
&= \binom{\Labn}{\bar \Labn}  \Pi_-^K \CM(\hat\CA\otimes \id) \Deltam (\bar \Gamma,v_\star,\pi(\Labn - \bar \Labn)) = 0\;,
\end{equs}
since the degrees of $\Gamma$ and of $(\bar \Gamma,v_\star,\pi(\Labn - \bar \Labn))$ agree
(and are negative) by definition.
\end{proof}

\section{Statement and proof of the main theorem}
\label{sec:analytical}

We now have all the definitions in place in order to be able to state the BPHZ theorem.

\begin{theorem}\label{theo:BPHZ}
The valuation $\Pi^K_\BPHZ$ 
is consistent for $K$ and extends continuously to all $K \in \CK^-_0$.
\end{theorem}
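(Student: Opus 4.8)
The plan is to split Theorem~\ref{theo:BPHZ} into an (essentially free) algebraic half and an analytic half. The consistency of $\Pi^K_\BPHZ$ requires nothing new: by the results of Section~\ref{sec:twisted}, $\CR(K)=\Pi_-^K\hat\CA$ is a character of $\CT_-$ for every $K\in\CK^-_\infty$, so Proposition~\ref{prop:consistent} applies verbatim and $\Pi^K_\BPHZ$ is consistent in the sense of Definition~\ref{def:consistent}. The real content is the continuity of $K\mapsto\hat\Pi^K=\Pi^K\circ M^{\CR(K)}$ from $\CK^-_\infty$ to the space of valuations. Since $\CK^-_\infty$ is dense in the Fr\'echet space $\CK^-_0$ and the target space is complete, this reduces to a quantitative estimate: for each connected Feynman diagram $\Gamma$ with $k$ legs there should exist $m,N\in\N$, a constant $C$, and a finite collection of the defining seminorms of $\CK^-_0$ (the minimal constants in \eqref{e:propKer}), written together as $\|K\|$, such that
\[
\bigl|\bigl(\hat\Pi^K\Gamma\bigr)(\phi)\bigr|\ \le\ C\,(1+\|K\|)^{N}\,\|\phi\|_{\CC^m}
\]
for all $K\in\CK^-_\infty$ and all $\phi\in\CD_k$ supported in a fixed ball. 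The same seminorms also bound the matrix entries $\Pi_-^K\hat\CA(\bar\Gamma,\cdot)$ of $M^{\CR(K)}$, so the clean way to organise things is a single induction on the number of internal edges, handling Feynman diagrams and vacuum diagrams at once.

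First I would set up the multiscale decomposition exactly as in the proof of Proposition~\ref{prop:WeinFancy}: pass to the representation \eqref{e:evalBis}, and decompose $\S^{\CV_\star}$ into Hepp sectors $D_\bT$ indexed by $\bT=(T,\bn)$, with $T$ a binary tree on $\CV_\star$ and $\bn$ increasing towards the leaves, on which \eqref{e:boundDist} holds. On $D_\bT$ the unrenormalised integrand has size $\prod_{e\in\CE_\star}2^{-\bn(e)\deg\Labhom(e)}$, the sector has measure $\lesssim\prod_{u\in T}2^{-d\bn_u}$, and its contribution is thus controlled by $\prod_{u\in T}2^{-\eta_u}$ with $\eta$ as in \eqref{e:bigSum}. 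The partial sums $\sum_{v\ge u}\eta_v$ equal the degrees of the subgraphs that $\Gamma$ induces on the subtrees of $T$, so the ``bad'' nodes, those with $\sum_{v\ge u}\eta_v\le0$, are exactly the places where $\Gamma$ contains a divergent subgraph contracted onto $u$; these are the sectors on which \eqref{e:evalBis} diverges.

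Next I would bring in the renormalisation. Expanding $M^{\CR(K)}\Gamma$ through the Neumann series \eqref{e:reprAhat} for $\hat\CA$ exhibits it as a Zimmermann-type sum over nested forests of divergent subgraphs, each subgraph entering through a Taylor subtraction -- the weights $(-1)^{|\out\ell|}/\ell!$ in \eqref{e:coaction} are precisely the Taylor coefficients. On a fixed Hepp sector one reorganises this sum so that, after the triangle inequality, only the forest of divergent subgraphs compatible with $T$ (unions of subtrees) matters, and for such a subgraph $\bar\Gamma$, anchored at the node $u\in T$ that is the least common ancestor of its vertices, the corresponding Taylor remainder of order $\lfloor-\deg\bar\Gamma\rfloor$ improves the local exponent from $\sum_{v\ge u}\eta_v=\deg\bar\Gamma\le0$ to $\deg\bar\Gamma+\lfloor-\deg\bar\Gamma\rfloor+1>0$, at the price of a bounded combinatorial factor; the extra derivatives that land on kernels are still controlled by \eqref{e:propKer}, hence by $\|K\|$. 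This is the analytic face of Proposition~\ref{prop:poly}: BPHZ subtracts exactly the polynomial combination that makes the integrand integrable scale by scale. Once every node of $T$ carries a strictly positive renormalised exponent, summing over Hepp sectors ``from the leaves inwards'' as in \cite[Lem.~A.10]{KPZJeremy} -- and using the compact support of the kernels to bound each $\bn_u$ from below -- produces a convergent geometric series whose sum is bounded by a fixed power of $\|K\|$. Assembling the per-sector estimates gives the display above, and the continuous extension to $\CK^-_0$ follows.

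The hard part will be this third step, namely reconciling the \emph{global, algebraically defined} operator $M^{\CR(K)}$ with the \emph{local, tree-by-tree} multiscale bookkeeping. Two things must be controlled. First, on a given Hepp sector the full Zimmermann forest sum must be shown to collapse, after the naive bound, to forests built from subtrees of $T$; the terms indexed by non-compatible forests have to be reorganised using the ultrametric structure (Remark~\ref{rem:ultrametric}) together with the nesting/disjointness dichotomy for subgraphs, and it is the genuinely overlapping divergent subgraphs (neither nested nor disjoint) that make this delicate. Second, the recursive Taylor subtractions attached to nested subgraphs must combine with the right signs and multiplicities so that no divergence is over- or under-subtracted; this is exactly where the coassociativity and compatibility identities of Section~\ref{sec:Hopf}, in particular Proposition~\ref{prop:coass}, enter. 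Both points are precisely those handled in \cite{BPHZana}, and the argument here would follow the same route.
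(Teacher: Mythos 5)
Your overall architecture is the same as the paper's (consistency via Proposition~\ref{prop:consistent}, then Hepp sectors plus the forest formula plus a sector-dependent resummation), but the step you defer to \cite{BPHZana} is precisely the mathematical core of the proof, and the organising principle you state for it is not the right one. You say that on a fixed Hepp sector the forest sum ``collapses to forests compatible with $T$''. What the paper actually does is quite different: using Lemma~\ref{lem:resum}, for each tree $T$ it partitions the \emph{entire} set $\hat\CF_\Gamma^-$ of forests of full divergent subgraphs into forest intervals $[\CF_s,\CF_s\cup\CF_u]$, where $\CF_s$ ranges over the ``safe'' forests for $T$ and $\CF_u$ is the set of subgraphs that are ``unsafe'' for $\CF_s$ (those whose internal scale $\inte_\bT^\CF(\gamma)$ is strictly deeper than their external scale $\exte_\bT^\CF(\gamma)$). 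That this is a well-defined partition — in particular that two overlapping divergent subgraphs cannot both be unsafe, so $\CF_u$ is again a forest — is the content of Lemma~\ref{lem:decompSafe} and is exactly how the problem of overlapping divergences is resolved; nothing is discarded.

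The second, equally essential, point your sketch misses is that the Taylor-remainder gain you invoke is only available for the \emph{unsafe} subgraphs: the factor $(\id-\hat\CC_\gamma)$ yields $\delta^{N(\gamma)}\Delta^{-N(\gamma)}$ with $\delta=2^{-\bn(\gamma^\uparrow)}$, $\Delta=2^{-\bn(\gamma^\uuparrow)}$ (Lemma~\ref{lem:Taylor}), which is a gain precisely when $\gamma^\uparrow>\gamma^\uuparrow$, i.e.\ when $\gamma$ is unsafe on that sector. For the safe subgraphs the subtraction gives nothing, and only the bare contraction $-\hat\CC_\gamma$ is kept; that these contracted terms still produce strictly positive exponents $\sum_{v\ge u}\eta_\ell(v)>0$ at every node of $T$ is a separate combinatorial power-counting argument (the decomposition $a_i=\sum_\gamma a_{i,\gamma}$ and the full/empty/normal trichotomy of Lemma~\ref{lem:propsgraph}), which uses safety of $\CF_s$ in an essential way and has no analogue in your outline. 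Without the safe/unsafe dichotomy, the adapted partition it generates, and the two complementary estimates attached to the two halves of each forest interval — together with the preliminary reduction to full subgraphs via Proposition~\ref{prop:full}, which you also omit — the argument is not complete.
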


By Proposition~\ref{prop:consistent}, we only need to show the continuity part of the statement.
Before we turn to the proof, we give an explicit formula for the valuation $\Pi^K_\BPHZ$
instead of the implicit characterisation given by \eqref{e:twisted}. This is nothing but
Zimmermann's celebrated ``forest formula''. 

\subsection{Zimmermann's forest formula}

So what are
these ``forests'' appearing in the eponymous formula? Given any Feynman diagram $\Gamma$, the set 
$\CG_\Gamma^-$ of all \textit{connected} vacuum diagrams $\bar \Gamma \subset \Gamma$ 
with $\deg\bar \Gamma \le 0$ is endowed with a natural
partial order given by inclusion. A subset $\CF \subset \CG_\Gamma^-$ is called a ``forest''
if any two elements of $\CF$ are either comparable in $\CG_\Gamma^-$ or vertex-disjoint
as subgraphs of $\Gamma$. 

Given a forest $\CF$ and a subgraph $\bar \Gamma \in \CF$, we say that 
$\bar \Gamma_1$ is a \textit{child} of $\bar \Gamma$ if $\bar \Gamma_1 < \bar \Gamma$ and
there exists
no $\bar \Gamma_2 \in \CF$ with $\bar \Gamma_1 <\bar \Gamma_2 < \bar \Gamma$.
Conversely, we then say that $\bar \Gamma$ is $\bar \Gamma_1$'s parent. (The forest structure
of $\CF$ guarantees that its elements have at most one parent.)
An element without children is called a \textit{leaf} and one without parent a \textit{root}.
If we connect parents to their children in $\CF$, then it does indeed
form a forest with arrows pointing away from the roots and towards the leaves.
We henceforth write $\CF_\Gamma^{-}$ for the set of all forests for $\Gamma$.

Given a diagram $\Gamma$, we now consider the space $\CT_\Gamma$ generated by all diagrams
$\hat \Gamma$
such that each connected component has \textit{either} at least one leg \textit{or} a distinguished
vertex $v_\star$, but not both. We furthermore endow $\hat \Gamma$ with an $\N^d$-valued vertex decoration $\kk$
supported on the leg-less components and, most importantly, with a bijection 
$\tau \colon \hat\CE \to \CE$ between the edges 
of $\hat \Gamma$ and those of $\Gamma$, such that legs get mapped to legs. 
The operation of discarding $\tau$  
yields a natural injection $\CT_\Gamma \hookrightarrow \hat\CT_- \otimes \CT$
by keeping the components with a distinguished vertex in the first factor and those with legs in the
second factor. (The space $\CT_\Gamma$ itself however is not a tensor product due to the constraint that 
$\tau$ is a  bijection, which exchanges information between the two factors.) 
We can also define $\d \CT_\Gamma$ analogously to \eqref{e:IBP} and \eqref{e:IBPbis}--\eqref{e:moveLeg}, so that 
$\CH_\Gamma = \CT_\Gamma / \d \CT_\Gamma$ naturally injects into $\hat \CH_- \otimes \CH$.

Given a connected subgraph $\gamma \subset \Gamma$, we then define a contraction operator
$\CC_{\gamma}$ acting on $\CH_\Gamma$ in the following way.
Given an element $(\hat \Gamma,\kk) \in \CT_\Gamma$, we write $\hat \gamma$ for the subgraph of 
$\hat \Gamma$ such that $\tau$ is a bijection between the edges of $\hat \gamma$ and those of $\gamma$.
If $\hat \gamma$ is not connected, then we set $\CC_{\gamma} (\hat \Gamma,\kk) = 0$.
Otherwise, we set as in \eqref{e:coprod}
\begin{equ}[e:contraction]
\CC_{\gamma} (\hat \Gamma,\kk) = 
 \sum_{\bar \ell\colon \d \gamma \to \N^d \atop \bar \kk\colon \CV_\gamma \to \N^d}
{(-1)^{|\out \bar \ell|} \over \bar \ell!}\one_{\deg (\hat \gamma, \bar \kk + \pi\bar \ell) \le 0}\binom{\kk}{\bar \kk} (\hat \gamma, \bar \kk + \pi\bar \ell) \cdot (\hat \Gamma,\kk - \bar\kk) / (\hat \gamma, \bar \ell)\;,
\end{equ}
with the obvious bijections between the edges of $\hat \gamma \cdot \hat \Gamma / \hat \gamma$ and  
those of $\Gamma$. This time we explicitly include the restriction to terms such that
$\deg (\hat \gamma, \bar \kk + \pi\bar \ell) \le 0$, which replaces the projection to $\CH_-$ in
\eqref{e:coprod}.
An important fact is then the following.

\begin{lemma}\label{lem:commute}
Let $\gamma_1, \gamma_2$ be two subgraphs of $\Gamma$ that are vertex-disjoint 
and let $\hat \Gamma \in \CT_\Gamma$ be such that $\hat \gamma_1$ and $\hat \gamma_2$ are
vertex disjoint. Then $\CC_{\gamma_1} \CC_{\gamma_2}\hat \Gamma = \CC_{\gamma_2} \CC_{\gamma_1}\hat \Gamma$.\qed
\end{lemma}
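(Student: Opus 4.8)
The statement is that two contraction operators $\CC_{\gamma_1}$ and $\CC_{\gamma_2}$ commute when the underlying subgraphs $\gamma_1,\gamma_2$ are vertex-disjoint (and their images $\hat\gamma_1,\hat\gamma_2$ in $\hat\Gamma$ are vertex-disjoint too). The plan is to unfold the definition \eqref{e:contraction} on both sides and exhibit a bijection between the terms, matching the labels $\bar\ell$, $\bar\kk$ and the combinatorial prefactors. First I would reduce to the case where $\hat\gamma_1$ and $\hat\gamma_2$ are both connected, since otherwise one of the operators kills $\hat\Gamma$ and there is nothing to check; here vertex-disjointness is used to see that applying one contraction does not alter the connectedness status of the other subgraph's image (contracting $\hat\gamma_2$ to a node leaves $\hat\gamma_1$ untouched as a subgraph of the result, because $\hat\gamma_1$ shares no vertex with $\hat\gamma_2$).

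Next I would observe that, because $\gamma_1$ and $\gamma_2$ are vertex-disjoint, the half-edge sets $\d\gamma_1$ and $\d\gamma_2$ are disjoint, and similarly the vertex sets $\CV_{\gamma_1}$ and $\CV_{\gamma_2}$ are disjoint. Therefore in $\CC_{\gamma_1}\CC_{\gamma_2}(\hat\Gamma,\kk)$ the inner sum ranges over $(\bar\ell_2,\bar\kk_2)$ on $\d\gamma_2\times\CV_{\gamma_2}$ producing the factor $(\hat\gamma_2,\bar\kk_2+\pi\bar\ell_2)$ together with a residual graph $(\hat\Gamma,\kk-\bar\kk_2)/(\hat\gamma_2,\bar\ell_2)$, and then the outer sum ranges over $(\bar\ell_1,\bar\kk_1)$ on $\d\gamma_1\times\CV_{\gamma_1}$. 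The crucial point is that $\hat\gamma_1$, viewed inside the residual graph $(\hat\Gamma,\kk-\bar\kk_2)/(\hat\gamma_2,\bar\ell_2)$, has exactly the same edge decoration, the same adjacent half-edge set $\d\gamma_1$, and the same node-label restriction $(\kk-\bar\kk_2)\restr_{\CV_{\gamma_1}} = \kk\restr_{\CV_{\gamma_1}}$ (since $\bar\kk_2$ is supported away from $\CV_{\gamma_1}$). Hence the binomial $\binom{\kk-\bar\kk_2}{\bar\kk_1} = \binom{\kk}{\bar\kk_1}$ on $\CV_{\gamma_1}$, the indicator $\one_{\deg(\hat\gamma_1,\bar\kk_1+\pi\bar\ell_1)\le 0}$, and the sign/factorial factor $(-1)^{|\out\bar\ell_1|}/\bar\ell_1!$ all coincide with what appears in the other order. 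Symmetrically for $\gamma_2$. So the total prefactor factorizes as a product of a $\gamma_1$-part and a $\gamma_2$-part, each independent of the other sum's variables.

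Then I would check that the output graphs agree: both orders produce the three-fold product $(\hat\gamma_1,\bar\kk_1+\pi\bar\ell_1)\cdot(\hat\gamma_2,\bar\kk_2+\pi\bar\ell_2)\cdot\bigl((\hat\Gamma,\kk-\bar\kk_1-\bar\kk_2)/(\hat\gamma_1,\bar\ell_1)/(\hat\gamma_2,\bar\ell_2)\bigr)$, and the iterated quotient by two vertex-disjoint connected subgraphs is manifestly independent of the order in which the contractions are performed (the two contracted nodes are distinct and carry independent data), and the edge-label modifications coming from $\bar\ell_1$ and $\bar\ell_2$ act on disjoint sets of half-edges. Finally I would remark that since $\bar\kk_1$ and $\bar\kk_2$ have disjoint support, $\bar\kk_1!\,\bar\kk_2! = (\bar\kk_1+\bar\kk_2)!$ and the binomials multiply consistently, so everything matches term by term and the two sums are equal.

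\textbf{Main obstacle.} I expect no deep difficulty here — this is a bookkeeping lemma — but the delicate point to get right is the interaction of the node-label $\kk$ with the two contractions: one must verify carefully that contracting $\hat\gamma_2$ and absorbing $\bar\kk_2$ does not change the admissible range of $\bar\kk_1$ nor the value of $\binom{\kk}{\bar\kk_1}$, which hinges precisely on $\CV_{\gamma_1}\cap\CV_{\gamma_2}=\emptyset$; likewise one must confirm that $\pi\bar\ell_2$ (a label landing on vertices of $\hat\gamma_2$) never contributes to a vertex of $\hat\gamma_1$ in the residual graph. All of these follow immediately from vertex-disjointness, so the lemma is essentially a matter of writing the two expansions side by side and reading off the bijection.
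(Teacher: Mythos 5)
Your argument is correct and is exactly the routine term-by-term verification the paper has in mind: the lemma is stated with no proof (it is considered immediate from the definition \eqref{e:contraction}), and your expansion — disjointness of $\d\gamma_1$, $\d\gamma_2$ and of $\CV_{\gamma_1}$, $\CV_{\gamma_2}$ making the prefactors, degree indicators and binomials factorise, together with order-independence of the iterated quotient — is the intended justification. No gaps.
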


We will use the natural convention that $\emptyset \in \CF_\Gamma^{-}$.
For any $\CF \in \CF_\Gamma^{-}$, we then write $\CC_\CF \Gamma$ for the 
element of $\CH_\Gamma$ defined recursively in the following way. 
For $\CF = \emptyset$, we set $\CC_\emptyset \Gamma = \Gamma$.
For non-empty $\CF$, we write $\rho(\CF) \subset \CF$ for the set of roots of $\CF$ 
and we set recursively
\begin{equ}
\CC_\CF \Gamma
= \CC_{\CF \setminus \rho(\CF)} \prod_{\gamma \in \rho(\CF)} \CC_\gamma \Gamma\;.
\end{equ}
The order of the product doesn't matter by Lemma~\ref{lem:commute}, since the 
 roots of $\CF$ are all vertex-disjoint.
With these notations at hand, Zimmermann's forest formula \cite{Zimmermann} then reads

\begin{proposition}\label{prop:forest}
The BPHZ renormalisation procedure is given by the identity
\begin{equ}[e:forest]
(\hat \CA\otimes \id)\Delta \Gamma = \CR \Gamma \eqdef \sum_{\CF \in \CF_\Gamma^-} (-1)^{|\CF|} \CC_\CF \Gamma\;,
\end{equ}
where we implicitly use the injection $\CH_\Gamma  \hookrightarrow \hat \CH_- \otimes \CH$ 
for the right hand side.
\end{proposition}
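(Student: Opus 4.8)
Since all the maps involved are linear, it suffices to verify \eqref{e:forest} for a single diagram $\Gamma$, and I would proceed by induction on the number $|\CE_\star|$ of internal edges. The engine of the induction is the recursion $\hat\CA\Gamma = -\Gamma - \CM(\hat\CA\otimes\id)\Deltamp\Gamma$, valid for every connected divergent vacuum diagram $\Gamma$, which follows from \eqref{e:twisted} together with the splitting $\Deltam\Gamma = \Gamma\otimes\one + \one\otimes\Gamma + \Deltamp\Gamma$: since both tensor factors of every term of $\Deltamp\Gamma$ carry at least one edge, each such term involves only vacuum diagrams with strictly fewer edges, and since $\hat\CA$ is multiplicative this reduces the computation of $\hat\CA$, hence of $(\hat\CA\otimes\id)\Delta$, to evaluations covered by the inductive hypothesis. (Equivalently, one may expand the locally finite Neumann series \eqref{e:reprAhat} directly.)

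Concretely, I would expand $(\hat\CA\otimes\id)\Delta\Gamma = \sum_{\bar\Gamma\subset\Gamma}\sum_{\ell}{(-1)^{|\out\ell|}\over\ell!}\,\hat\CA(\bar\Gamma,\pi\ell)\otimes\Gamma/(\bar\Gamma,\ell)$, the sum effectively running over those pairs for which $(\bar\Gamma,\pi\ell)$ is divergent. Decomposing each $\bar\Gamma$ into its connected components $\bar\Gamma_i$ and using that $\hat\CA$ is multiplicative, $\hat\CA(\bar\Gamma,\pi\ell)$ factors over the $\bar\Gamma_i$ (each carrying the node decoration induced by $\pi\ell$); feeding each factor into the recursion above, $\hat\CA$ of a component becomes a signed sum of nested contractions taking place inside that component. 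Assembling these inner expansions with the outer extraction of $\bar\Gamma$ produces precisely a sum indexed by the forests $\CF\in\CF_\Gamma^-$, in which the roots $\rho(\CF)$ are the connected components of $\bar\Gamma$ and the part of $\CF$ lying below a given $\bar\Gamma_i$ is the sub-forest the recursive expansion of $\hat\CA$ on $\bar\Gamma_i$ runs over. The bijection is forced: a forest for $\Gamma$ is exactly a choice of pairwise vertex-disjoint connected divergent subgraphs (the roots) together with, recursively, a forest inside each of them, which is precisely the data generated by one application of $\Delta$ followed by the recursive expansion of $\hat\CA$ on the extracted blocks. The signs agree because peeling off one layer contributes a factor $-1$, so that a forest with $|\CF|$ elements is assembled with total sign $(-1)^{|\CF|}$; the contraction operators match on comparing \eqref{e:contraction} with \eqref{e:coprod} and \eqref{e:coaction}; and the first identity of \eqref{e:coassoc} guarantees that the order in which nested subgraphs are peeled off is immaterial, so that $\CC_\CF\Gamma$ is well defined and coincides with what the expansion of the left-hand side delivers (both being read inside $\hat\CH_-\otimes\CH$ via the injection $\CH_\Gamma\hookrightarrow\hat\CH_-\otimes\CH$).

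The step I expect to be the main obstacle is the combinatorial one: verifying that the rational coefficients coincide term by term. The delicate points are that an edge adjacent to two nested contracted subgraphs receives a contribution to its relabelling $\ell$ from each of them, so that the factorial weights $1/\ell!$ must be distributed compatibly, and that a node decoration created when a child is contracted must be carried into the parent block and, when the parent is contracted further up, split again through the binomial coefficients $\binom{\kk}{\bar\kk}$ appearing in \eqref{e:contraction}. In each instance the identity one needs is an avatar of the generalised Chu--Vandermonde identity (Lemma~\ref{Chu-Vandermonde}), to be applied in exactly the same manner as in the proof of Proposition~\ref{prop:Delta} and in the verification of the coassociativity of $\Deltam$ in Proposition~\ref{prop:coass}. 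Once these coefficient identities are in place for a single elementary nesting step, the induction on $|\CE_\star|$ closes and \eqref{e:forest} follows.
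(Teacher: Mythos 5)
Your argument is correct in outline, but it is a genuinely different route from the one the paper actually writes out, and it front-loads all of the combinatorial difficulty. What you propose is the direct expansion: unfold $\hat\CA$ via the recursion $\hat\CA\Gamma = -\Gamma - \CM(\hat\CA\pi\otimes\id)\Deltamp\Gamma$ (equivalently the Neumann series \eqref{e:reprAhat}), identify the resulting nested extractions with forests, and match signs and factorial/binomial coefficients by Chu--Vandermonde. This is exactly the route the paper dismisses in a single sentence (``this follows from the representation \eqref{e:reprAhat}'') before giving a different argument that sidesteps the term-by-term bookkeeping entirely: one first observes that the forest sum $\CR\Gamma$ is manifestly of the form $(\CB\otimes\id)\Delta\Gamma$ for \emph{some} $\CB\colon\CH_-\to\hat\CH_-$, so by the uniqueness of the twisted antipode it suffices to verify the defining relation \eqref{e:twisted}, i.e.\ that $\CR\Gamma=0$ for every connected decorated vacuum diagram with $\deg\Gamma\le 0$. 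This in turn follows from a two-line telescoping: pair each forest $\CF$ not containing $\Gamma$ itself with $\CF\cup\{\Gamma\}$, note that $\CC_\Gamma\Gamma=\Gamma$, and observe that the two members of each pair contribute with opposite signs. The trade-off is clear: the paper's argument buys economy by leaning on structure already established (uniqueness and multiplicativity of $\hat\CA$, well-definedness of $\CC_\CF$), whereas your expansion is more constructive but leaves as a ``claimed avatar of Chu--Vandermonde'' precisely the step --- compatible distribution of the weights $1/\ell!$ and $\binom{\kk}{\bar\kk}$ across nested contractions --- that is the only real content of that route; if you intend to follow your plan rather than the paper's, that verification must actually be carried out (it is of the same nature as, but not literally contained in, the proofs of Propositions~\ref{prop:Delta} and~\ref{prop:coass}). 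One small point in your favour: your observation that each forest arises from exactly one nested chain of extractions (roots first, then recursively inside each root) is correct and is the key to the sign count $(-1)^{|\CF|}$.
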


\begin{proof}
This follows from the representation \eqref{e:reprAhat}. Another way of seeing it is to 
first note that $\CR$ is indeed of the form $(\CB \otimes \id)\Delta \Gamma$ for \textit{some}
$\CB \colon \CH_-\to \hat \CH_-$ and to then make use of
the characterisation \eqref{e:twisted} of the twisted antipode $\hat \CA$. This implies that
it suffices to show that $\CR \Gamma = 0$ 
for every connected $\Gamma$ with a distinguished vertex and a node-labelling such that 
$\deg\Gamma \le 0$.

The idea is to observe that $\CF_\Gamma^-$ can be partitioned into two disjoint sets that are in
bijection with each other: those that contain $\Gamma$ itself and the complement $\hat \CF_\Gamma^-$
of those forest that don't.
Furthermore, it follows from the definition that $\CC_\Gamma \Gamma = \Gamma$, so that 
\begin{equ}
\sum_{\CF \in \CF_\Gamma^-} (-1)^{|\CF|} \CC_\CF \Gamma
= \sum_{\CF \in \hat \CF_\Gamma^-} (-1)^{|\CF|} \bigl(\CC_\CF \Gamma - \CC_{\CF \cup \{\Gamma\}} \Gamma)
= \sum_{\CF \in \hat \CF_\Gamma^-} (-1)^{|\CF|} \bigl(\CC_\CF \Gamma - \CC_{\CF} \Gamma)\;,
\end{equ}
which vanishes thus completing the proof.
\end{proof}

In order to analyse \eqref{e:forest}, it will be very convenient to have ways of resumming its terms
in order to make cancellations more explicit.
These resummations are based on the following trivial identity. Given a finite set $A$ and operators 
$X_i$ with $i \in A$, one has 
\begin{equ}[e:prod]
\prod_{i\in A}(\id -X_i) = \sum_{B \subset A} (-1)^B \prod_{j\in B}X_j\;,
\end{equ}
provided that the order
in which the operators are composed is the same in each term and that the empty product is interpreted
as the identity. The right hand side of this expression is 
clearly reminiscent of \eqref{e:forest} while the left hand side encodes cancellations if the $X_i$ are close
to the identity in some sense. If $\CG_\Gamma^-$ itself happens to 
be a forest, then $\CF_\Gamma^-$ consists simply of all subsets of  $\CG_\Gamma^-$, so that 
one can indeed write
\begin{equ}[e:simpleCase]
(\hat \CA\otimes \id)\Delta  \Gamma = \CR_{\CG_\Gamma^-} \Gamma\;,
\end{equ}
where $\CR_\CF \Gamma$ is defined by $\CR_\emptyset \Gamma = \Gamma$ and then via the recursion
\begin{equ}[e:defRGamma]
\CR_\CF \Gamma
= \CR_{\CF \setminus \rho(\CF)} \prod_{\gamma \in \rho(\CF)} (\id - \CC_\gamma) \Gamma\;.
\end{equ}
In general however this is not the case, and this is precisely the problem of
``overlapping divergences''. In order to deal with this, we introduce the following variant
of \eqref{e:simpleCase} which still works in the general case. To formulate it, we introduce
the notion of a ``forest interval'' $\M$ for $\Gamma$ which is a subset of $\CF_\Gamma^-$
of the form $[\underline \M,\overline \M]$ in the sense that it consists precisely 
of all those forests $\CF \in \CF_\Gamma^-$ such that $\underline \M \subset \CF \subset \overline\M$.
An alternative description of $\M$ is that there is a forest $\delta(\M) = \overline\M \setminus \underline\M$ 
disjoint from $\underline \M$ and such that
$\M$ consists of all forests of the type $\underline \M \cup \CF$ with $\CF \subset \delta(\M)$.
Given a forest interval, we define an operation $\CR_\M$ which renormalises all subgraphs
in $\delta(\M)$ and contracts those subgraphs in $\underline \M$. In other words, we set
$\CR_\M = \CR_\M^{\overline \M}$, where $\CR_\M^{\CF}$ is defined recursively by
\begin{equ}
\CR_\M^\CF \Gamma
= \CR_\M^{\CF \setminus \rho(\CF)} \prod_{\gamma \in \rho(\CF)} \CC_\gamma^\sharp \Gamma\;,\qquad
\CC_\gamma^\sharp = 
\left\{\begin{array}{cl}
	\id - \CC_\gamma & \text{if $\gamma \in \delta(\M)$,} \\
	- \CC_\gamma & \text{otherwise.}
\end{array}\right.
\end{equ}
This definition is consistent with \eqref{e:defRGamma} in the sense that one
has $\CR_\CF = \CR_\M$ for $\M = [\emptyset,\CF]$. Combining Proposition~\ref{prop:forest} with \eqref{e:prod},
we then obtain the following alternative characterisation of our renormalisation map.

\begin{lemma}\label{lem:resum}
Let $\Gamma$ be a Feynman diagram and let $\CP$ be a partition of $\CF_\Gamma^-$
consisting of forest intervals. Then, one has the identity
$
(\hat \CA\otimes \id)\Delta  \Gamma = \sum_{\M \in \CP} \CR_{\M} \Gamma
$.\qed
\end{lemma}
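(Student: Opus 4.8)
The plan is to deduce the identity from the forest formula of Proposition~\ref{prop:forest}, which already gives $(\hat\CA\otimes\id)\Delta\Gamma = \sum_{\CF\in\CF_\Gamma^-}(-1)^{|\CF|}\CC_\CF\Gamma$. Since the forest intervals in $\CP$ partition $\CF_\Gamma^-$, it suffices to prove, for each $\M=[\underline\M,\overline\M]\in\CP$, the local identity $\CR_\M\Gamma = \sum_{\CF\in\M}(-1)^{|\CF|}\CC_\CF\Gamma$: summing these over $\M\in\CP$ then reconstitutes the full sum, hence equals $(\hat\CA\otimes\id)\Delta\Gamma$. So I would fix one forest interval, write $\delta(\M)=\overline\M\setminus\underline\M$, recall that $\M=\{\underline\M\cup S\,:\,S\subseteq\delta(\M)\}$ (every subset of the forest $\overline\M$ is again a forest, and $S\mapsto\underline\M\cup S$ is a bijection from the power set of $\delta(\M)$ onto $\M$ because $\underline\M$ and $\delta(\M)$ are disjoint), and aim to establish the local identity.

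For this I would first expand $\CR_\M=\CR_\M^{\overline\M}$ by means of the elementary identity \eqref{e:prod}. The operator $\CR_\M$ is built by sweeping through $\overline\M$ from its roots towards its leaves and composing, at each subgraph $\gamma$ encountered, the operator $\CC_\gamma^\sharp$, which equals $\id-\CC_\gamma$ when $\gamma\in\delta(\M)$ and $-\CC_\gamma$ when $\gamma\in\underline\M$. Since \eqref{e:prod} only requires the order of composition to be the same in every term, I may distribute the factors $\id-\CC_\gamma$ over $\gamma\in\delta(\M)$ and discard the resulting identity operators, obtaining
\begin{equ}
\CR_\M\Gamma = \sum_{S\subseteq\delta(\M)}(-1)^{|\underline\M\cup S|}\,\Phi_{\overline\M}(\underline\M\cup S)\,\Gamma\;,
\end{equ}
where for a subforest $\CF'\subseteq\overline\M$ I write $\Phi_{\overline\M}(\CF')\Gamma$ for the result of the same sweep through $\overline\M$ in which one applies $\CC_\gamma$ exactly when $\gamma\in\CF'$ and the identity otherwise (the sign is $(-1)^{|\underline\M\cup S|}$ since $\underline\M\cap S=\emptyset$). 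In view of the bijection $S\mapsto\underline\M\cup S$, the local identity reduces to the single assertion that $\Phi_{\overline\M}(\CF')\Gamma=\CC_{\CF'}\Gamma$ for every forest $\CF'\subseteq\overline\M$.

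This last assertion is the combinatorial core, and I expect it to be the main obstacle. The mechanism is that $\Phi_{\overline\M}(\CF')$ and $\CC_{\CF'}$ apply the same family $\{\CC_\gamma:\gamma\in\CF'\}$, each in an order that contracts a given $\gamma$ only after every $\gamma'\in\CF'$ with $\gamma\subsetneq\gamma'$: for the sweep this holds because when $\gamma\in\overline\M$ is reached all of its $\overline\M$-ancestors have been removed, hence — as $\CF'\subseteq\overline\M$ carries the same inclusion relation — all of its $\CF'$-ancestors have been contracted; for $\CC_{\CF'}$ it is built into the recursion through $\rho(\CF')$. Any two linear orders with this ``outer-before-inner'' property differ by a sequence of adjacent transpositions of $\subseteq$-incomparable elements of $\CF'$, which are vertex-disjoint subgraphs of $\Gamma$; and whenever such a pair $\gamma_1,\gamma_2$ is about to be processed, every earlier contraction involves, relative to each of $\gamma_1$ and $\gamma_2$, either one of its ancestors or a subgraph vertex-disjoint from it, so that the images of $\gamma_1$ and $\gamma_2$ remain vertex-disjoint in the current diagram (a common ancestor, when contracted, carries both of them into the same contracted component while preserving their relative structure). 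Lemma~\ref{lem:commute} then sanctions each such transposition, whence $\Phi_{\overline\M}(\CF')\Gamma=\CC_{\CF'}\Gamma$; one may equally phrase this as an induction on $|\overline\M|$, the delicate point either way being that ``skipping'' a subgraph not in $\CF'$ merely promotes its children and leaves the eventual result untouched, together with the bookkeeping that the vertex-disjointness hypothesis of Lemma~\ref{lem:commute} genuinely holds at each step. Substituting $\Phi_{\overline\M}(\CF')\Gamma=\CC_{\CF'}\Gamma$ into the displayed formula yields the local identity, and summing over $\M\in\CP$ finishes the proof.
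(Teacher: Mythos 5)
Your proposal is correct and follows exactly the route the paper intends: the lemma is stated with its proof omitted precisely because it is obtained by combining the forest formula of Proposition~\ref{prop:forest} with the expansion identity \eqref{e:prod}, which is what you do. Your additional care in checking that the sweep order through $\overline\M$ and the recursive order of $\CC_{\CF'}$ agree up to transpositions of vertex-disjoint subgraphs (sanctioned by Lemma~\ref{lem:commute}) correctly fills in the detail the paper leaves implicit.
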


\subsection{Proof of the BPHZ theorem, Theorem~\ref{theo:BPHZ}}

We now have all the ingredients in place to prove Theorem~\ref{theo:BPHZ}.
We only need to show that for every (connected) Feynman diagram $\Gamma$
there are constants $C_\Gamma$ and $N_\Gamma$ such that for every test function $\phi$ with 
compact support in the ball of radius $1$ 
one has the bound
\begin{equ}[e:mainBound]
\bigl|\bigl(\Pi^K_\BPHZ \Gamma\bigr)(\phi)\bigr| \le C_\Gamma \prod_{e \in \CE} |K_{\Labhom(e)}|_{N_\Gamma}
 \sup_{|k| \le N_\Gamma} \|D^{(k)}\phi\|_{L^\infty}\;,
\end{equ}
where $|K_\Labhom|_N$ denotes the smallest constant $C$ such that  \eqref{e:propKer}
holds for all $|k| \le N$. 

The proof of \eqref{e:mainBound} follows the same lines as that of the main result in 
\cite{BPHZana}, but with a  number of considerable simplifications:
\begin{claim}
\item There is no
``positive renormalisation'' in the present context so that we do not need to worry about
overlaps between positive and negative renormalisations. As a consequence, we also do not
make any claim on the behaviour of \eqref{e:mainBound} when rescaling the test function. 
In general, it is \textit{false} that \eqref{e:mainBound} obeys the naive power-counting
when $\phi$ is replaced by $\phi^\lambda$ and $\lambda\to0$ as in \cite[Lem.~A.7]{KPZJeremy}. 
\item The BPHZ renormalisation procedure studied in the present article is directly formulated at the level of
graphs. In \cite{BHZalg,BPHZana} on the other hand, it is formulated at the level of trees 
(which are the objects indexing a suitable family of stochastic processes) and then has to be translated into
a renormalisation procedure on graphs which, depending on how trees are glued together in order to form
these graphs, 
creates additional ``useless'' terms.
\item We only consider kernels with a single argument, corresponding to ``normal'' edges in
our graphs, while  \cite{BPHZana} deals with non-Gaussian processes which then gives rise to 
Feynman diagrams containing some ``multiedges''.
\end{claim}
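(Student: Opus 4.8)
The ``statement'' recorded here is the reduction of Theorem~\ref{theo:BPHZ} to the quantitative bound \eqref{e:mainBound}, together with three remarks that locate the forthcoming argument relative to \cite{BPHZana}. None of the three items is an independent assertion needing its own proof; each is a structural feature of the present setting that one reads off constructions already in place. The plan is therefore first to record these features explicitly, and then to carry out the multiscale proof of \eqref{e:mainBound}.

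For the first item, I would point to the coaction \eqref{e:coaction}: its left tensor factor ranges only over genuine subgraphs $(\bar\Gamma,\pi\ell)$ carrying the node decorations forced by the relabellings $\ell$, and the passage to $\CH_-$ merely discards those of strictly positive degree; at no stage is a ``positive'' object --- a Taylor jet anchored at a base point --- inserted into the right factor. Equivalently, by Proposition~\ref{prop:WeinFancy} the only obstruction to convergence of \eqref{e:evalBis} is the small-scale behaviour of the kernels $K_\Labhom$ near coinciding internal vertices, and since every $K_\Labhom$ is compactly supported there is no large-scale (infrared) divergence. Hence the Hopf algebra of Section~\ref{sec:Hopf} carries no analogue of the ``positive renormalisation'' of \cite{BPHZana}, and the forest formula \eqref{e:forest} involves only forests of divergent subgraphs, with no competing family of operators to interleave with the $\CC_\gamma$.

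For the second item, the graph-level formulation is built into Section~\ref{sec:BPHZ} from the start: Feynman diagrams \emph{are} directed graphs with decorated edges, and $M^g = (g\otimes\id)\Delta$ acts directly on $\CH$, so there is nothing to translate. The contrast with \cite{BHZalg,BPHZana} is that there the primitive objects are decorated trees indexing a family of stochastic processes; assembling these into the graphs that appear in a stochastic estimate produces extra terms which in the present formulation never arise. For the third item, the standing assumption that each $\Labhom\in\Lab_\star$ carries a kernel $K_\Labhom\colon\S\to\S$ entering \eqref{e:eval} as $K_{\Labhom(e)}(x_{e_+}-x_{e_-})$ means every edge joins exactly two vertices; a hyperedge on $m>2$ vertices would carry a kernel of $m-1$ difference variables, as arises in \cite{BPHZana} from higher cumulants of a non-Gaussian noise. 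Excluding these keeps the Hepp-sector bound on the integrand of \eqref{e:evalBis} a product over edges, each factor contributing $2^{-\bn(e)\,\deg\Labhom(e)}$, which is what keeps the power-counting transparent both in Proposition~\ref{prop:WeinFancy} and in the proof of \eqref{e:mainBound}.

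With these simplifications recorded, the substantive work is \eqref{e:mainBound} itself, which I would obtain by expanding $\Pi^K_\BPHZ\Gamma = \Pi^K(\hat\CA\otimes\id)\Delta\Gamma$ through the forest formula \eqref{e:forest}, decomposing configuration space into Hepp sectors $D_\bT$, and resumming forests by Lemma~\ref{lem:resum} so that each operator $\id-\CC_\gamma$ acts as a Taylor remainder on the scales attached to $\gamma$. The main obstacle --- exactly as in \cite{BPHZana}, but now unencumbered by positive renormalisation, by a tree-to-graph translation, or by multiedges --- is to show that this renders the effective exponent of every node of the Hepp tree strictly positive after renormalisation, uniformly over $K\in\CK^-_0$, so that the ensuing geometric sum over sectors converges with the claimed dependence on the seminorms $|K_{\Labhom}|_{N_\Gamma}$ and on $\sup_{|k|\le N_\Gamma}\|D^{(k)}\phi\|_{L^\infty}$.
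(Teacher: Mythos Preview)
Your reading is correct: the three items are not theorems but structural remarks explaining why the forthcoming argument is lighter than that of \cite{BPHZana}, and your justifications for each point (no positive part in $\Delta$, diagrams are graphs from the outset, kernels depend on a single difference variable) match the paper's intent. Your outlined approach to \eqref{e:mainBound} --- forest formula, Hepp-sector decomposition, resummation via Lemma~\ref{lem:resum}, and a positivity check on the effective exponents $\eta$ --- is exactly the paper's strategy; the one substantive ingredient you allude to only implicitly is the choice of the partition $\CP_\bT$, which the paper organises through the notion of \emph{safe} and \emph{unsafe} subgraphs (Lemma~\ref{lem:decompSafe}) so that the factors $(\id-\hat\CC_\gamma)$ appear precisely where the Taylor bound of Lemma~\ref{lem:Taylor} yields a gain.
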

We therefore only give an overview of the main steps, but we hope that the style of our exposition is
such that the interested reader will find it possible to fill in the missing details without 
undue effort.

As in the proof of Proposition~\ref{prop:WeinFancy},
we break the domain of integration into Hepp sectors $D_\bT$ and we estimate 
terms separately on each sector. The main trick is then to resum the
terms as in Lemma~\ref{lem:resum}, but by using a partition $\CP_\bT$ that is
adapted to the Hepp sector $\bT$ in such a way that the occurrences of
$(\id - \CC_\gamma)$ create cancellations that are useful on $D_\bT$.

In order to formulate this, it is convenient to write all the terms appearing in
the definition of $\Pi^K_\BPHZ \Gamma$
as integrals over the same set of variables. For this, we henceforth fix a connected Feynman diagram 
$\Gamma$ once and for all, together with an arbitrary total order for its vertices. 

We then define the space $\hat \CT_\Gamma$ generated by connected Feynman diagrams
$\bar \Gamma$ with edges \textit{and vertices}
in bijection with those of $\Gamma$ via a map $\tau \colon (\bar \CE, \bar \CV) \to (\CE,\CV)$, 
together with a vertex labelling
$\Labn$, as well as a map $\c\colon \bar \CE \to \N$ which vanishes on all legs of $\bar \Gamma$. 
The goal of this map is to allow us to keep track on which parts of $\Gamma$ were contracted,
as well as the structure of nested contractions: $\c$ measures how ``deep'' a given edge lies
within nested contractions. In particular, it is natural to impose that $\c$ vanishes
on legs since they are never contracted.
We furthermore
impose that for every $j > 0$, every connected component $\hat \gamma$ of $\c^{-1}(j)$
has the following two properties. 
\begin{claim}
\item The highest vertex $v_{\star}(\hat \gamma)$ of $\hat \gamma$ has an incident edge $e$
 with $\c(e) < j$. (Here, ``highest'' refers to the total order we fixed on vertices of $\Gamma$,
 which is transported to $\bar \Gamma$ by the bijection between vertices of $\Gamma$ and $\bar \Gamma$.)
\item All edges $e$ incident to a vertex of $\hat \gamma$ other than $v_{\star}(\hat \gamma)$
satisfy $\c(e) \ge j$.
\end{claim}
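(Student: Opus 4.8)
The two properties required of $\c$ in the definition of $\hat\CT_\Gamma$ are imposed, so the point to verify is that they hold for the colouring produced by each term of Zimmermann's forest formula \eqref{e:forest}: for a forest $\CF\in\CF_\Gamma^-$ one must exhibit $\CC_\CF\Gamma$ as an element of $\hat\CT_\Gamma$ whose colouring satisfies them. The plan is to describe that representative in closed form. Its diagram $\bar\Gamma$ has the same edge and vertex sets as $\Gamma$; its colouring is $\c(e)=\#\{\gamma\in\CF:e\in\CE_\gamma\}$ on internal edges and $\c\equiv 0$ on legs, so that $\c(e)$ is the depth in $\CF$ of the innermost member containing $e$; and its incidences differ from those of $\Gamma$ only in that each contracted subgraph is ``parked'' at a single vertex selected by the fixed total order on $\CV$. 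Concretely, write $m(\gamma)$ for the largest vertex of $\CV_\gamma$; given an edge $e$ and an endpoint $v_0$ of $e$ in $\Gamma$, the members of $\CF$ containing $v_0$ form a $\subseteq$-chain (any two of them share the vertex $v_0$, hence are comparable since $\CF$ is a forest), and if at least one of them does not contain $e$ as an internal edge we let $\gamma_s(e,v_0)$ be the largest such member and re-attach this endpoint of $e$ to $m(\gamma_s(e,v_0))$; otherwise the endpoint is left at $v_0$.

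Granting this description, the two properties follow from the order facts that $\gamma'\subsetneq\gamma$ in $\CF$ implies $\CV_{\gamma'}\subseteq\CV_\gamma$ (so $m(\gamma')\le m(\gamma)$, with equality precisely when $m(\gamma)\in\CV_{\gamma'}$) and that vertex-disjoint members have disjoint vertex sets. One first identifies the connected components of $\c^{-1}(j)$ with the depth-$j$ members $\gamma$ of $\CF$ — the colour-$j$ edges meeting $\CV_\gamma$ being those of $\CE_\gamma$ not lying in a child of $\gamma$, which form a connected set because $\gamma$ is connected and its children are parked at points — and checks that $v_\star(\hat\gamma)=m(\gamma)$, by showing $m(\gamma)$ is incident to a colour-$j$ edge and then observing it is the largest vertex of $\CV_{\hat\gamma}\subseteq\CV_\gamma$. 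For the first property, a connectivity argument using that the connected component of $\Gamma$ containing $\gamma$ carries a leg (which by definition is not in $\gamma$) yields an edge $e$ incident to $\CV_\gamma$, not in $\CE_\gamma$, and lying in $\CE_{\gamma^+}$ whenever $\gamma$ has a parent $\gamma^+$ in $\CF$ (any such $e$ if $\gamma$ is a root); then $\c(e)=j-1<j$, while for the endpoint $v_0\in\CV_\gamma$ of $e$ one has $\gamma_s(e,v_0)=\gamma$, so this endpoint is parked at $m(\gamma)=v_\star(\hat\gamma)$, which thus carries a colour-$<j$ edge. For the second property, suppose $v\in\CV_{\hat\gamma}$ with $v\neq m(\gamma)$ is incident in $\bar\Gamma$ to an edge $e$ with $\c(e)<j$; then $e\notin\CE_\gamma$, the endpoint of $e$ at $v$ cannot have been left unmoved (otherwise no member of $\CF$ would contain $v$ without containing $e$, yet $\gamma$ does), so $v=m(\gamma_s(e,v_0))$, and since $v\in\CV_\gamma$ with $v<m(\gamma)$ the order facts force $\gamma_s(e,v_0)\subsetneq\gamma$; but then $v_0\in\CV_{\gamma_s(e,v_0)}\subseteq\CV_\gamma$, so $\gamma$ is itself a member containing $v_0$ but not $e$ and strictly larger than $\gamma_s(e,v_0)$, a contradiction.

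The main obstacle is to justify the closed-form description: to show that the sequential ``extract-and-contract'' operator $\CC_\CF$, after the identifications in $\CH_-$ supplied by \eqref{e:moveLeg} and \eqref{e:IBP} (which let one move the leg of each extracted vacuum diagram, and hence its whole boundary, to the vertex of one's choice, and absorb the derivative labels), agrees with the single re-wiring rule above, \emph{and} that this is coherent across nested members of $\CF$, so that parking an inner subgraph never disturbs the parking of an outer one. This is where the commutation property of Lemma~\ref{lem:commute} and, crucially, the fact that $\CF$ is a genuine forest (so that the members through any fixed vertex form a chain) are needed. With that description in hand the verification of the two properties is the short combinatorial exercise sketched above, the only subtle ingredient being that the parking vertex is always selected through the fixed global order rather than by an arbitrary choice of leg position.
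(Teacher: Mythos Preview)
The extracted ``statement'' is not a theorem but two conditions \emph{imposed} as part of the definition of the space $\hat\CT_\Gamma$; the paper does not and need not prove them. What does require checking---and what you correctly set out to do---is that the operators $\hat\CC_\gamma$ subsequently defined map $\hat\CT_\Gamma$ into itself, i.e.\ that the colouring produced by iterated contractions satisfies these conditions. The paper does not verify this in detail: it simply states the rule (``reconnect the edges of $\bar\Gamma$ adjacent to $\hat\gamma$ to the highest vertex $\hat v$ of $\hat\gamma$ and increase $\c(e)$ by $1$ on all edges of $\hat\gamma$'') prefaced by ``In order to remain in $\hat\CT_\Gamma$'', leaving the check to the reader.

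Your closed-form description of the outcome---$\c(e)$ equal to the number of members of $\CF$ containing $e$, and each endpoint re-attached to $m(\gamma_s(e,v_0))$ with $\gamma_s$ the largest member containing $v_0$ but not $e$---is correct and coincides with what the paper denotes $\K_\CF\Gamma$; your deduction of the two properties from this description is sound. One terminological slip: you write $\CC_\CF\Gamma$, but in the paper that is the element of $\CH_\Gamma\hookrightarrow\hat\CH_-\otimes\CH$, a quotient in which the extracted subgraphs have been split off as separate tensor factors and no global colouring $\c$ is defined. The space carrying $\c$ and subject to the two conditions is $\hat\CT_\Gamma$, and the relevant operators are the $\hat\CC_\gamma$ introduced a few lines later. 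Your actual argument describes this latter object, so the substance is right; only the symbol is off. You also need not invoke \eqref{e:moveLeg} or \eqref{e:IBP} here, since $\hat\CC_\gamma$ is defined directly on $\hat\CT_\Gamma$ with the re-attachment to the highest vertex built in, rather than via a quotient.
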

Writing $\bar \CV^c \subset \bar \CV$ for those vertices $v$ with at least one edge  
$e$ incident to $v$ such that $\c(e) > 0$, we also
impose that $\Labn(v) = 0$ for $v \not \in \bar \CV^c$. 
We view $\Gamma$ itself as an element of $\hat \CT_\Gamma$ by setting $\c \equiv 0$.
Note that this data defines a map $v \mapsto v_\star$ from $\bar \CV^c$ to $\bar \CV^c$ 
such that $v \mapsto v_{\star}(\hat \gamma)$ for $\hat \gamma$
the connected component of $\c^{-1}(j)$ with the lowest possible value of $j$ containing
$v$.

For $\gamma \subset \Gamma$ as above, we then define maps $\hat \CC_\gamma$
on $\hat \CT_\Gamma$ similarly to \eqref{e:contraction}. This time however,
we set $\hat \CC_\gamma \bar \Gamma = 0$ unless the following conditions are met.
\begin{claim}
\item The graph $\tau^{-1}(\gamma) \subset \bar \Gamma$ is connected. 
\item For every edge $e$ adjacent to $\tau^{-1}(\gamma)$, one has 
$\c(e) \le \inf_{\hat e \in \hat \CE} \c(\hat e)$.
\end{claim}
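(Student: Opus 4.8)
The assertion to establish is that the two stated conditions make $\hat \CC_\gamma$ a well-defined operator $\hat \CT_\Gamma \to \hat \CT_\Gamma$ — so that, in particular, the resummed forest identities of Proposition~\ref{prop:forest} and Lemma~\ref{lem:resum} lift from $\CH$ to the depth-decorated space $\hat \CT_\Gamma$ — and that these are precisely the conditions under which this works. The plan is first to pin down the candidate formula: when (i) and (ii) hold I would set $\hat \gamma \eqdef \tau^{-1}(\gamma)$ and $j_\gamma \eqdef 1 + \min_{\hat e \in \hat \CE}\c(\hat e)$, and define $\hat \CC_\gamma \bar \Gamma$ by copying \eqref{e:contraction} verbatim for the combinatorial prefactor $(-1)^{|\out \bar \ell|}\bar \ell!^{-1}\binom{\kk}{\bar \kk}\one_{\deg(\hat \gamma, \bar \kk + \pi\bar \ell)\le 0}$, but recording the ``contraction'' of $\hat \gamma$ not by collapsing it (the bijection $\tau$ must be preserved) but by raising $\c$ by one on the edges of $\hat \gamma$, setting the labels $\Labn$ on the vertices of $\hat \gamma$ to $\bar \kk + \pi\bar \ell$ as in \eqref{e:coprod}, and moving the residue $\kk - \bar \kk$ onto the highest vertex $v_\star(\hat \gamma)$.

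The main work is then the routine but slightly fiddly verification that this candidate again satisfies the axioms defining membership in $\hat \CT_\Gamma$. Condition (i) is what guarantees that $\hat \gamma$ forms a single connected component of the new $\c^{-1}(j_\gamma)$; condition (ii) guarantees that no other edge is promoted to level $j_\gamma$ (so all depth-components of the result are connected) and that every edge adjacent to $\hat \gamma$ keeps a depth strictly below $j_\gamma$, which is exactly what is needed for the ``highest vertex has an incident edge of strictly smaller depth'' axiom and for the ``all other incident edges have depth $\ge j_\gamma$'' axiom to hold for the new component; the axioms for the other, untouched depth-components are inherited from $\bar \Gamma$, and the support condition on $\Labn$ follows since every vertex of $\hat \gamma$ now has an incident edge of positive depth. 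Taking $\c \equiv 0$ (so $j_\gamma = 1$) reduces the formula to \eqref{e:contraction}, which is the compatibility needed in order to transport the forest formula. One should also check, along the lines of Lemma~\ref{lem:commute}, that the $\hat \CC_\gamma$ pairwise commute on $\hat \CT_\Gamma$ under the obvious vertex-disjointness hypothesis, and that this commutation is what would break if the restriction in (ii) were dropped.

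Finally one must justify the ``$\hat \CC_\gamma \bar \Gamma = 0$'' convention in the remaining cases. If $\tau^{-1}(\gamma)$ is disconnected, no single new depth-component can be formed; moreover such a term corresponds in the forest sum to contracting $\gamma$ after one of its pieces has already been absorbed into a strictly more nested contraction, and by exactly the mechanism behind the convention $\CC_\gamma(\hat \Gamma, \kk) = 0$ for disconnected $\hat \gamma$ it contributes nothing. If instead some edge adjacent to $\tau^{-1}(\gamma)$ already carries depth greater than $\min \c$, then $\gamma$ would be contracted ``across'' an existing, more deeply nested contraction, $j_\gamma$ would be wrongly assigned, and the nesting structure recorded by $\c$ would be violated; setting the operator to zero there is what keeps $\hat \CC_\gamma$ valued in $\hat \CT_\Gamma$ and keeps the family $\{\hat \CC_\gamma\}$ commuting. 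I expect the genuinely delicate point to be the ``highest vertex'' axiom: the distinguished vertex $v_\star(\hat \gamma)$ absorbs the residual labels and is merged again under subsequent contractions, so one has to argue — using that the fixed total order on $\CV$ is, by construction, compatible with the nesting already present in $\bar \Gamma$ — that $v_\star(\hat \gamma)$ really is the vertex through which $\hat \gamma$ attaches to its parent at depth $< j_\gamma$, and that this remains stable under all the nested contractions the forest formula performs.
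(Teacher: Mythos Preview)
These two bullets are not a statement the paper proves; they are part of the \emph{definition} of $\hat\CC_\gamma$. The paper simply declares $\hat\CC_\gamma\bar\Gamma=0$ whenever either condition fails, and otherwise prescribes a formula. There is no accompanying proof to compare against, and the `claim' environment here is being used as an itemised list, not as a theorem-like assertion.

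Your reading of \emph{why} the conditions are imposed --- so that the result lands back in $\hat\CT_\Gamma$ and so that the $\hat\CC_\gamma$ commute along a forest --- is essentially right as motivation. Where your reconstruction departs from the paper is in the recipe for $\hat\CC_\gamma$ when the conditions do hold. You propose to leave the underlying graph untouched and only increment $\c$ on the edges of $\hat\gamma$ while adjusting vertex labels. The paper does strictly more: in addition to increasing $\c$ on $\hat\gamma$, it \emph{reconnects} every edge of $\bar\Gamma$ adjacent to $\hat\gamma$ to the highest vertex $\hat v$ of $\hat\gamma$ (this is the content of the displayed example immediately following the definition). That rewiring is what mimics the quotient $\Gamma/(\bar\Gamma,\ell)$ on a graph that still carries all the original vertices, and it is precisely how the object $\K_\CF\Gamma$ used throughout the Hepp-sector estimates is built. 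Without the reconnection step, the map $v\mapsto v_\star$, the sets $\CV_\gamma^\star$, and the quantities $\scale_\bT^\CF$, $\inte_\bT^\CF$, $\exte_\bT^\CF$ would all be wrong, and the remainder of the proof of Theorem~\ref{theo:BPHZ} would not go through.

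So there is no missing ``proof'', but your version of the operator is not the paper's, and the difference is load-bearing for everything that follows.
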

We also restrict the sum over labels $\ell$ supported on edges with $\c(e) = \inf_{\hat e \in \hat \CE} \c(\hat e)$.
In order to remain in $\hat \CT_\Gamma$, instead
of extracting $\hat \gamma = \tau^{-1}(\gamma)$, we reconnect the edges of $\bar \Gamma$ 
adjacent to $\hat \gamma$ to the highest vertex $\hat v$ of $\hat \gamma$
and we increase $\c(e)$ by $1$ on all edges $e$ of $\hat \gamma$.
We similarly define elements $\hat \CR_\M \Gamma$ as above with every instance of $\CC_\gamma$
replaced by $\hat \CC_\gamma$.
We also view $\Gamma$ itself as an element of $\hat \CT_\Gamma$ by setting both $\c$  and 
$\Labn$ to $0$. 

Let us illustrate this by taking for $\Gamma$ the diagram of Figure~\ref{fig:subgraph}
and for $\gamma$ the triangle shaded in grey. In this case, assuming that the order on
our vertices is such that the first vertex is the leftmost one and that the degree of 
$\gamma$ is above $-1$ so that no node-decorations are needed, we have
\begin{equ}
\hat \CC_\gamma
\begin{tikzpicture}[style={thick},baseline=-0.1cm]
\node[dot] (l) at (0,0) {};
\node[dot] (r) at (3,0) {};
\node[dot] (ul) at (0.5,1) {};
\node[dot] (ur) at (2.5,1) {};
\node[dot] (d) at (1.5,-1) {};
\node[dot] (c) at (1.5,0) {};
\draw[->] (l) -- (ul);
\draw[->] (ul) -- (ur);
\draw[->] (ur) -- (r); 
\draw[->] (l) -- (c);
\draw[->] (ul) -- (c);
\draw[->] (l) -- (d);
\draw[->] (c) -- (d);
\draw[->] (d) -- (r);
\draw[thick,red] (l) -- ++(180:0.5);
\draw[thick,red] (r) -- ++(0:0.5);
\draw[thick,red] (d) -- ++(-90:0.5);
\draw[line width=0.5cm,draw opacity=0.15, line cap=round] (l) -- (ul) -- (c) -- (l);
\end{tikzpicture} 
=
\begin{tikzpicture}[style={thick},baseline=-0.1cm]
\node[dot,boundary] (l) at (0,0) {};
\node[dot] (r) at (3,0) {};
\node[dot] (ul) at (0,1) {};
\node[dot] (ur) at (2.5,1) {};
\node[dot] (d) at (1.5,-1) {};
\node[dot] (c) at (1,1) {};
\draw[->] (l) -- (ul);
\draw[->] (l) -- (ur);
\draw[->] (ur) -- (r); 
\draw[->] (l) -- (c);
\draw[->] (ul) -- (c);
\draw[->] (l) to[bend right=20] (d);
\draw[->] (l) to[bend left=20] (d);
\draw[->] (d) -- (r);
\draw[thick,red] (l) -- ++(180:0.5);
\draw[thick,red] (r) -- ++(0:0.5);
\draw[thick,red] (d) -- ++(-90:0.5);
\draw[line width=0.5cm,draw opacity=0.15, line cap=round] (l) -- (ul) -- (c) -- (l);
\end{tikzpicture} 
\end{equ}
with $\c(v)$ equal to $1$ in the shaded region of the diagram on the right. 
The green node then denotes the 
element $v_\star$ for all the nodes $v$ in that region.
This time, it follows in virtually the same way as the proof of Proposition~\ref{prop:coass} 
that if $\gamma_1$ and $\gamma_2$ are either vertex disjoint
or such that one is included in the other, then the operators
$\hat \CC_{\gamma_1}$ and $\hat \CC_{\gamma_2}$ commute.
In particular, we can simply write 
\begin{equ}[e:defRM]
\hat \CR_\M \Gamma
= \Big(\prod_{\gamma \in \delta(\M)}(\id - \hat \CC_{\gamma}) \prod_{\bar\gamma \in \underline\M} 
(-\hat \CC_{\bar\gamma})\Big)\Gamma \;,
\end{equ}
without having to worry about the order of the operations as in \eqref{e:defRGamma}.

For every $K \in \CK^-_\infty$ and every test function $\phi$, we then have a 
linear map $\CW^K \colon \hat \CT_\Gamma \to \CC^\infty(\S^{\CV_\star})$
given by
\begin{equs}
\bigl(\CW^K \bar \Gamma\bigr) (x) &= 
\prod_{e \in \bar \CE_\star} K_{\Labhom(e)}(x_{\tau(e_+)} - x_{\tau(e_-)})
\prod_{v \in \bar \CV_\star} (x_{\tau(v)} - x_{\tau(v_\star)})^{\Labn(v)}\\
&\qquad \times \bigl(D_1^{\ell_1}\cdots D_k^{\ell_k} \phi\bigr)(x_{v_1},\ldots,x_{v_k})\;,
\end{equs}
where $\tau \colon \bar \CV \cup \bar \CE \to \CV \cup \CE$ is the bijection between
edges and vertices of $\bar \Gamma$ and those of $\Gamma$,
$v_i$ are the vertices to which the $k$ legs of $\Gamma$ are attached,
and $\ell_i$ are the corresponding multiindices as in \eqref{e:evalBis}.
With this notation, our definitions show that, for every partition $\CP$
of $\CF_\Gamma^-$ into forest intervals, one has
\begin{equ}
\bigl(\Pi^K_\BPHZ \Gamma\bigr)(\phi) =
\sum_{\M \in \CP} \int_{\S^{\CV_\star}} \bigl(\CW^K\hat \CR_{\M} \Gamma\bigr)(x)\,dx\;.
\end{equ}
We bound this rather brutally by
\begin{equs}[e:terms]
\bigl|\bigl(\Pi^K_\BPHZ \Gamma\bigr)(\phi)\bigr| &\le
\sum_{\bT} \sum_{\M \in \CP_\bT} \int_{D_\bT} \bigl|\bigl(\CW^K\hat \CR_{\M} \Gamma\bigr)(x)\bigr|\,dx \\
&\le
\sum_{\bT} \sum_{\M \in \CP_\bT} 
\sup_{x \in D_\bT} \bigl|\bigl(\CW^K\hat \CR_{\M} \Gamma\bigr)(x)\bigr|\prod_{u \in T} 2^{-d\bn_u}\;.
\end{equs}
At this stage, we would like to make a smart choice for the partition
$\CP_\bT$ which allows us to obtain a summable bound for this expression. 
In order to do this, we would like to guarantee that a cancellation
$(\id - \hat \CC_\gamma)$ appears for all of the subgraphs $\gamma$ that are
such that the length of all adjacent edges (as measured by the quantity
$|x_{\tau(e_+)} - x_{\tau(e_-)}|$) is much greater than the diameter of $\gamma$
(measured in the same way).

In order to achieve this, we first note that by Proposition~\ref{prop:full} and 
\eqref{e:forestFormulaFull} below, we can restrict ourselves in 
\eqref{e:terms} to the case where $\CP_\bT$ is a partition of the subset $\hat \CF_\Gamma^- \subset \CF_\Gamma^-$
of all forests containing only subgraphs that are full in $\Gamma$. 
(Recall that a subgraph $\bar\gamma\subset \Gamma$ is full in $\Gamma$ 
if it is induced by a subset of the vertices of $\Gamma$ in the sense that it consists of all edges
of $\Gamma$ connecting two vertices of the subset in question.)
We then consider the following construction. For 
any forest $\CF \in \hat \CF_\Gamma^-$, write $\K_\CF \Gamma$ for the Feynman 
diagram obtained by performing the contractions of $\hat \CC_\CF \Gamma$.
(So that $\hat \CC_\CF \Gamma$ is a linear combination of terms obtained from
$\K_\CF \Gamma$ by adding node-labels $\Labn$ and the corresponding derivatives on
incident edges.) As above, write $\tau$ for the corresponding bijection between
edges and vertices of $\K_\CF \Gamma$ and those of $\Gamma$. 
Given a Hepp sector $\bT = (T,\bn)$ for $\Gamma$ and an edge $e$ of $\Gamma$, we then write
$\scale_\bT^\CF(e) = \bn(v_e)$, where $v_e = \tau(\tau^{-1}(e)_-) \wedge \tau(\tau^{-1}(e)_+)$
is the common ancestor in $T$ of the two vertices incident to $e$, but when viewed as
an edge of $\K_\CF\Gamma$. (Since we only consider forests consisting of full
subgraphs, $\tau^{-1}(e)_-$ and $\tau^{-1}(e)_+$ are distinct, so this is well-defined.)
Given $\gamma \in \CF$, we then set 
\begin{equ}
\inte_\bT^\CF(\gamma) = \inf_{e \in \CE_\gamma^\CF}\scale_\bT^\CF(e)\;,\qquad
\exte_\bT^\CF(\gamma) = \sup_{e \in \d\CE_\gamma^\CF}\scale_\bT^\CF(e)\;,
\end{equ}
where $\CE_\gamma^\CF$ denotes the edges belonging to $\gamma$,
but \textit{not} to any of the children of $\gamma$ in $\CF$, while $\d\CE_\gamma^\CF$
denotes the edges adjacent to $\gamma$ and belonging to the parent $\CA(\gamma)$ of $\gamma$ in $\CF$ 
(with the convention that if $\gamma$ has no parent, then $\CA(\gamma) = \Gamma$).
With these notations, we then make the following definition.

\begin{definition}
Fix a Hepp sector $\bT$.
Given a forest $\CF\in \hat \CF_\Gamma^-$, we say that $\gamma \in \CF$ is \textit{safe in $\CF$} if  
 $\exte_\bT^\CF(\gamma) \ge \inte_\bT^\CF(\gamma)$ and
 that it is \textit{unsafe in $\CF$} otherwise.
Given a forest $\CF$ and a subgraph $\gamma \in \CG_\Gamma^-$,
we say that $\gamma$ is \textit{safe / unsafe for $\CF$} if $\CF \cup \{\gamma\} \in \hat \CF_\Gamma^-$
and $\gamma$ is safe / unsafe in $\CF \cup \{\gamma\}$. Finally, we say that a forest $\CF$ is safe
if every $\gamma \in \CF$ is safe in $\CF$.
\end{definition}

The following remark is then crucial.

\begin{lemma}\label{lem:decompSafe}
Let $\CF_s \in \hat \CF_\Gamma^-$ be a safe forest and write $\CF_u$ for the collection
of all $\gamma \in \CG_\Gamma^-$ that are unsafe for $\CF_s$. 
Then, one has $\CF_s \cup \CF_u \in \hat \CF_\Gamma^-$ and furthermore every $\gamma$ in $\CF_s$ / $\CF_u$ is
safe / unsafe in $\CF_s \cup \CF_u$.
\end{lemma}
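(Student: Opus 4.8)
The plan is to use that ``$\gamma$ is unsafe in $\CF$'' is a condition of \emph{strict scale separation}. Indeed, if $\gamma$ is unsafe in $\CF$ then in the contracted diagram $\K_\CF\Gamma$ (children of $\gamma$ contracted) every edge of $\CE_\gamma^\CF$ has $\scale_\bT$ at least $\inte_\bT^\CF(\gamma)$, while every edge of $\d\CE_\gamma^\CF$ has $\scale_\bT$ at most $\exte_\bT^\CF(\gamma)<\inte_\bT^\CF(\gamma)$. By Remark~\ref{rem:ultrametric} this forces any two vertices of $\gamma$ to lie within $\bT$-distance $2^{-\inte_\bT^\CF(\gamma)}$ of each other, whereas any vertex joined to $\gamma$ by an edge of $\d\CE_\gamma^\CF$ is strictly farther away; hence, by ultrametricity, contracting $\gamma$ to its highest vertex removes exactly the edges of $\CE_\gamma^\CF$ and leaves $\scale_\bT$ of every other edge of $\K_\CF\Gamma$ unchanged. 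I will call this the \emph{scale-neutrality} of contracting an unsafe subgraph; it drives all three assertions.

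\textbf{The union is an admissible forest.} Each $\gamma\in\CF_u$ is full, lies in $\CG_\Gamma^-$, and is compatible with every element of $\CF_s$ — all of this is built into the requirement $\CF_s\cup\{\gamma\}\in\hat\CF_\Gamma^-$ in the definition of ``unsafe for $\CF_s$'' — and $\CF_u$ is finite and disjoint from $\CF_s$ since a safe forest contains no unsafe element. So it remains to show any two $\gamma_1,\gamma_2\in\CF_u$ are nested or vertex-disjoint, which I would prove by contradiction. If they overlap, then since $\gamma_1,\gamma_2$ are connected and full there is an edge $e$ joining $\gamma_1\cap\gamma_2$ to $\gamma_1\setminus\gamma_2$ and an edge $e'$ joining $\gamma_1\cap\gamma_2$ to $\gamma_2\setminus\gamma_1$. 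A short check using compatibility of $\gamma_1,\gamma_2$ with $\CF_s$ shows $e\in\CE_{\gamma_1}^{\CF_s\cup\{\gamma_1\}}$ and $e\in\d\CE_{\gamma_2}^{\CF_s\cup\{\gamma_2\}}$, and symmetrically $e'\in\CE_{\gamma_2}^{\CF_s\cup\{\gamma_2\}}$ and $e'\in\d\CE_{\gamma_1}^{\CF_s\cup\{\gamma_1\}}$; one also checks, again using compatibility, that the contractions coming from $\CF_s$ act identically on the endpoints of $e$ (resp.\ $e'$) in the two ambient graphs $\K_{\CF_s\cup\{\gamma_1\}}\Gamma$ and $\K_{\CF_s\cup\{\gamma_2\}}\Gamma$, so that $\scale_\bT(e)$ and $\scale_\bT(e')$ are unambiguous. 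Then $\scale_\bT(e)\ge\inte_\bT(\gamma_1)>\exte_\bT(\gamma_1)\ge\scale_\bT(e')$ and, by symmetry, $\scale_\bT(e')\ge\inte_\bT(\gamma_2)>\exte_\bT(\gamma_2)\ge\scale_\bT(e)$, a contradiction. Hence $\CF_s\cup\CF_u\in\hat\CF_\Gamma^-$.

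\textbf{Invariance of safe/unsafe status.} The key step is: if $\CF\in\hat\CF_\Gamma^-$, if $\gamma'$ is unsafe in $\CF\cup\{\gamma'\}$, and if $\delta$ satisfies $\CF\cup\{\delta\}\in\hat\CF_\Gamma^-$ and $\CF\cup\{\delta,\gamma'\}\in\hat\CF_\Gamma^-$, then $\inte_\bT(\delta)$ and $\exte_\bT(\delta)$ are the same whether computed in $\CF\cup\{\delta\}$ or in $\CF\cup\{\delta,\gamma'\}$. I would prove this by a case analysis on how $\gamma'$ sits relative to $\delta$. If $\gamma'$ is vertex-disjoint from $\delta$, or nested strictly inside a child of $\delta$, or contains the current parent of $\delta$, then by scale-neutrality nothing incident to $\delta$ changes. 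If $\gamma'$ becomes a new child of $\delta$, then $\CE_\delta$ loses exactly the edges internal to $\gamma'$ (each of $\scale_\bT\ge\inte_\bT(\gamma')$) while all other edges of $\CE_\delta$ keep their scale; since the parent of $\gamma'$ is then $\delta$ itself, some edge of $\CE_\delta$ adjacent to $\gamma'$ has $\scale_\bT\le\exte_\bT(\gamma')<\inte_\bT(\gamma')$, so the infimum defining $\inte_\bT(\delta)$ is attained off $\gamma'$ and is unchanged ($\exte_\bT(\delta)$ is unaffected). If $\gamma'$ becomes the new parent of $\delta$, i.e.\ $\delta\subsetneq\gamma'\subsetneq\CA(\delta)$, then $\exte_\bT(\delta)$ becomes a supremum over the smaller set of edges adjacent to $\delta$ inside $\gamma'$; but the old maximiser cannot lie outside $\gamma'$, since it would then be adjacent to $\gamma'$ in $\CA(\gamma')=\CA(\delta)$, hence of $\scale_\bT\le\exte_\bT(\gamma')$, strictly less than the scale of any edge of $\CE_{\gamma'}$ joining $\delta$ to $\gamma'\setminus\delta$ (such an edge exists and has $\scale_\bT\ge\inte_\bT(\gamma')>\exte_\bT(\gamma')$), contradicting maximality; thus $\exte_\bT(\delta)$ is unchanged ($\inte_\bT(\delta)$ is unaffected). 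Granting this, I would enumerate $\CF_u=\{\gamma'_1,\dots,\gamma'_m\}$ and add its elements to $\CF_s$ one at a time: each partial union lies in $\hat\CF_\Gamma^-$ by the previous paragraph, the invariance applies at each step (since $\gamma'_{j+1}$, unsafe for $\CF_s$, is still unsafe after the earlier additions, which did not move its scales), and at the end every $\gamma\in\CF_s$ has unchanged $\inte_\bT,\exte_\bT$ and hence stays safe, while every $\gamma\in\CF_u$ has the same $\inte_\bT,\exte_\bT$ as in $\CF_s\cup\{\gamma\}$ and hence stays unsafe.

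The main obstacle is the bookkeeping behind the two middle paragraphs: one must track exactly how each contraction — coming from $\CF_s$ or from an added subgraph — alters the identity of each edge, and in particular deduce the scale-neutrality of contracting unsafe subgraphs from the ``highest vertex'' convention together with the ultrametric structure of the Hepp sector. The overlap argument in the forest step and the new-parent argument in the invariance step are the two places where a shared edge must be shown to have the same scale in two different ambient contracted graphs, and getting these consistency statements right is the bulk of the work.
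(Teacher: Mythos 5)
Your proposal is correct and follows essentially the same strategy as the paper: the same contradiction argument for overlapping unsafe subgraphs (the paper phrases the two opposing inequalities via common ancestors of vertex sets in $T$ rather than directly via edge scales, but it is the same mechanism), and the same one-at-a-time addition of unsafe subgraphs with a case analysis on whether the added graph sits below, above, or beside a given element, using the ultrametric ``scale-neutrality'' of contracting an unsafe subgraph. The consistency checks you flag as the remaining bookkeeping are exactly the points the paper also treats somewhat tersely, so nothing essential is missing.
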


\begin{proof}
Fix $\CF_s$ and write again $\tau$ for the corresponding bijection between
edges and vertices of $\K_{\CF_s} \Gamma$ and those of $\Gamma$. For each $\gamma \in \CF_s$, 
write $\CV_\gamma^{\CF_s} \subset \CV$ for the set of vertices of the form $\tau(\tau^{-1}(e)_\pm)$
for $e \in \CE_\gamma^{\CF_s}$, as well as $v_{\star,\gamma}^{\CF_s} \in \CV_\gamma^{\CF_s}$ for the 
highest one of these vertices. (This is the vertex that edges outside of $\gamma$ were reconnected to  by
the operation $\K_{\CF_s}$.)
We also write $\d\CV_\gamma^{\CF_s} \subset \CV$ for all vertices 
of the form $\tau(\tau^{-1}(e)_\pm)$
for $e \in \d\CE_\gamma^{\CF_s}$ that are \textit{not} in $\CV_\gamma^{\CF_s}$.

With this notation, $\inte_\bT^\CF(\gamma) = \bn((\CV_\gamma^{\CF_s})^\uparrow)$ and there exists a vertex $w \in \d\CV_\gamma^{\CF_s}$
for $\CA(\gamma)$ the parent of $\gamma$ in $\CF_s$ (with the convention as above) such that
$\exte_\bT^\CF(\gamma) = \bn(v_{\star,\gamma}^{\CF_s} \wedge w)$.
Since both $(\CV_\gamma^{\CF_s})^\uparrow$ and $v_{\star,\gamma}^{\CF_s} \wedge w$ lie on the path connecting
the root of $T$ to $v_{\star,\gamma}$, it follows from the definition of a safe forest that
one necessarily has $v_{\star,\gamma}^{\CF_s} \wedge w > (\CV_\gamma^{\CF_s})^\uparrow$.

Let now $\bar\gamma \in \CG_\Gamma^- \setminus \CF_s$ be such that $\CF_s \cup \{\bar \gamma\} \in \hat \CF_\Gamma^-$
and set $\CV_{\bar\gamma} = \CV_{\bar\gamma}^{\CF_s \cup \{\bar \gamma\}}$
as well as $\d\CV_{\bar\gamma} = \d\CV_{\bar\gamma}^{\CF_s \cup \{\bar \gamma\}}$.
It follows from the definitions that $\bar \gamma \in \CF_u$ if and only if 
none of the descendants of $\CV_{\bar\gamma}^\uparrow$ in $T$ belongs to $\d\CV_{\bar\gamma}$.
As a consequence of this characterisation, 
any two graphs $\gamma_1, \gamma_2 \in \CF_u$ are either vertex-disjoint,
or one of them is included in the other one. Indeed, assume by contradiction that neither is included in the
other one and that their intersection $\gamma_\cap$ contains at least one vertex. 
Writing $\hat \gamma_\cap$ for one of the connected components of $\gamma_\cap$,
there exist edges $e_i$ in $\gamma_i$ that are adjacent to $\hat \gamma_\cap$: otherwise, since the $\gamma_i$
are connected, one of them would be contained in $\hat \gamma_\cap$. Write $v_i$ for the vertex of $e_i$ that does
not belong to $\hat \gamma_\cap$. Such a vertex exists since otherwise it would not be the case
that $\hat \gamma_\cap$ is full in $\gamma^\uparrow = \CA(\gamma_1) = \CA(\gamma_2)$.
Since $\gamma_1$ is unsafe, it follows that $v_2$ is not a descendent 
of $(\CV_{\hat \gamma_\cap} \cup \{v_1\})^\uparrow$,
so that in particular, for every vertex $v \in \hat \gamma_\cap$, one has
$v_1 \wedge v > v_2\wedge v$. The same argument with the roles of $\gamma_1$ and $\gamma_2$
reversed then leads to a contradiction.

This shows that $\CF_s \cup \CF_u$ is indeed again a forest so that
it remains to show the last statement.
We will show a slightly stronger statement namely that, given an arbitrary forest 
$\CF$, the property of $\gamma \in \CF$
being safe or unsafe does not change under the operation of adding to $\CF$ a graph 
$\bar \gamma$ that is unsafe for $\CF$.
Given the definitions, there are three potential cases that could
affect the ``safety'' of $\gamma$: either $\bar\gamma \subset  \gamma$, or $\gamma \subset \bar\gamma$,
or $\bar \gamma \subset \CA(\gamma)$ and there exists an edge $e$ adjacent to both $\gamma$ and $\bar \gamma$.
We consider these three cases separately and we write $\bar \CF = \CF \cup \{\bar \gamma\}$.

In the case $\bar\gamma \subset  \gamma$, it follows from the ultrametric property
and the fact that $\bar \gamma$ is unsafe that $\inte_\bT^{\bar \CF}(\gamma) = \inte_\bT^\CF(\gamma)$
whence the desired property follows.
In the case $\gamma \subset \bar\gamma$, it is $\exte_\bT^{\CF}(\gamma)$ which could 
potentially change since $\d\CE_\gamma^\CF$ becomes smaller when adding $\bar \gamma$.
Note however that by the ultrametric property, combined with the fact that $\bar \gamma$ is
unsafe, the edges $e$ in $\d\CE_\gamma^\CF \setminus \d\CE_\gamma^{\bar\CF}$
satisfy $\scale_\bT^\CF(e) = \scale_\bT^{\bar \CF}(e)$. Furthermore, again as a consequence of
$\bar \gamma$ being unsafe, one has $\scale_\bT^{\bar \CF}(e) < \scale_\bT^{\bar \CF}(\bar e)$
for every edge $\bar e$ in $\bar \gamma$ which is not in $\gamma$, so in particular
for $\bar e \in \d\CE_\gamma^{\bar\CF}$. This shows again that $\exte_\bT^{\CF}(\gamma) = \exte_\bT^{\bar \CF}(\gamma)$
as required. The last case can be dealt with in a very similar way, thus concluding the proof.
\end{proof}

As a corollary of the proof, we see that the definition of the notion of ``safe forest''
as well as the construction of $\CF_u$ given a safe forest $\CF_s$ only depend on 
the topology of the tree $T$ and not on the 
specific scale assignment $\bn$. 
It also follows that, given an arbitrary $\CF \in \hat\CF_\Gamma^-$, there exists a unique way
of writing $\CF = \CF_s \cup \CF_u$ with $\CF_s$ a safe forest and $\CF_u$ being unsafe for $\CF_s$
(and equivalently for $\CF$). In particular, writing $\CF_\Gamma^{(s)}(T)$ for the collection of 
safe forests for the tree $T$, 
the collection $\CP_\T = \{[\CF_s, \CF_s \cup \CF_u]\,:\, \CF_s \in \CF_\Gamma^{(s)}(T)\}$
where, for any $\CF_s$, the forest $\CF_u$ is defined as in Lemma~\ref{lem:decompSafe}, forms a partition of 
$\hat\CF_\Gamma^-$ into forest intervals.
It then follows from \eqref{e:terms} that
\begin{equ}
\bigl|\bigl(\Pi^K_\BPHZ \Gamma\bigr)(\phi)\bigr| \le
\sum_{T} \sum_{\CF_s \in \CF_\Gamma^{(s)}(T)} 
\sum_{\bn}
\sup_{x \in D_\bT} \bigl|\bigl(\CW^K\hat \CR_{[\CF_s,\CF_s\cup \CF_u]} \Gamma\bigr)(x)\bigr|\prod_{v \in T} 2^{-d\bn_v}\;,
\end{equ}
where $\bn$ runs over all monotone integer labels for $T$ and the construction of
$\CF_u$ given $\CF_s$ and $T$ is as above. We note that the first two sums are finite, so
that as in the proof of Proposition~\ref{prop:WeinFancy} it is sufficient, 
for any given choice of $T$ and safe forest $\CF_s$, to find a  collection real-valued 
function $\{\eta_i\}_{i \in I}$ (for some \textit{finite} index set $I$) on the interior vertices of $T$ such that
\begin{equ}[e:wantedboundeta]
\sum_{\bn}
\sup_{x \in D_\bT} \bigl|\bigl(\CW^K\hat \CR_{[\CF_s,\CF_s\cup \CF_u]} \Gamma\bigr)(x)\bigr|\prod_{v \in T} 2^{-d\bn_v}
\le \sum_{i \in I} \sum_{\bn}\prod_{v \in T} 2^{-\eta_i(v)\bn_v}\;,
\end{equ}
and such that
\begin{equ}[e:wantedeta]
\sum_{w \ge v}\eta_i(w) > 0 \;,\qquad \forall v \in T\;,\quad \forall i \in I\;,
\end{equ}
which then 
guarantees that the above expression converges.

\begin{wrapfigure}{R}{5cm}
\begin{center}
\vspace{-.5em}
\begin{tikzpicture}[style=thick,scale=0.85]
	\draw[fill=blue!5]  plot[smooth cycle, tension=.7] coordinates {(-1.5,0) (-1.3,0.5) (-0.5,0) (2,0.5) (1.1,-1) (0,-0.2) (-1,-.8)};
	\draw[fill=blue!15]  plot[smooth cycle, tension=.7] coordinates {(2,.5) (2,1.3) (3,1)};
	\draw[fill=blue!15]  plot[smooth cycle, tension=.7] coordinates {(-.5,0) (-1,1.3) (0,1) (0.3,1.2) (0.3,0.5)};
	\draw[fill=blue!30]  plot[smooth cycle, tension=.7] coordinates {(0.3,1.2) (-0.1,1.4) (0.1,1.6) (0.4,1.3)};
\node[dot,red] (l) at (2,.5) {};
\node[dot,red] (l) at (-.5,0) {};
\node[dot,red] (l) at (0.3,1.2) {};
\end{tikzpicture}
\end{center}
\vspace{-1em}
\caption{Structure of $\K_\CF\Gamma$.}\label{fig:collapsedGraph}
\vspace{0.2em}
\end{wrapfigure}
Before we turn to the construction of the $\eta_i$, let us examine in a bit more detail the 
structure of the graph $\K_\CF \Gamma = (\CV_\CF,\CE_\CF)$. 
Writing $\tau$ for the bijection between $\K_\CF\Gamma$ and $\Gamma$,
every $\gamma \in \CF$ yields a subgraph $\K(\gamma) = (\CV_\gamma, \CE_\gamma)$ of $\K_\CF \Gamma$
whose edge set is given by the preimage under $\tau$ of the edge set of 
$\gamma \setminus \bigcup \CC(\gamma)$, where $\CC(\gamma)$ denotes the 
set of all children of $\gamma$ in $\CF$. Furthermore, $\K(\gamma)$ is connected by exactly one vertex 
to $\K(\bar \gamma)$, for $\bar \gamma \in \CC(\gamma) \cup \{\CA(\gamma)\}$, and it is 
disconnected from $\K(\bar \gamma)$ for all other elements $\gamma \in \CF$. 
This is also the case if $\gamma$ is a root of $\CF$, so that $\CA(\gamma) = \Gamma$ by our usual convention, 
if we set $\K(\Gamma)$ to be
the preimage in $\K_\CF\Gamma$ of the complement of all roots of $\CF$.
We henceforth write $v_\star(\gamma)$ for the unique vertex connecting $\K(\gamma)$ to $\K(\CA(\gamma))$
and we write $\CV_\gamma^\star = \CV_\gamma \setminus \{v_\star(\gamma)\}$,
so that one has a partition $\CV_\CF = \CV_\Gamma \sqcup \bigsqcup_{\gamma \in \CF}\CV_\gamma^\star$.

In this way, the 
tree structure of $\CF$ is reflected in the topology of $\K_\CF \Gamma$, as illustrated in 
Figure~\ref{fig:collapsedGraph}, where each $\K(\gamma)$ is stylised by a coloured shape, with
parents having lighter shades than their children and connecting vertices drawn in red.
Recall that we also fixed a total order on the vertices of $\Gamma$ (and therefore those of $\K_{\CF}\Gamma$)
and that the construction of $\K_{\CF}\Gamma$ implies that the corresponding order on
$\{v_\star(\gamma)\}_{\gamma \in \CF}$ is compatible with the partial order on $\CF$ given by inclusion.
For $e \in \CE_\CF$, write $M_e \subset \{+,-\}$ for those ends such that 
$\tau(e)_\bullet \neq \tau(e_\bullet)$ for $\bullet \in M_e$ and set
\begin{equ}
\CE_\CF^{\bullet} = \{(e,\bullet)\,:\, e\in \CE_\CF^m,\, \bullet \in M_e\}\;.
\end{equ}
Then, by the construction of $\K_\CF\Gamma$, for every $\bullet \in M_e$ there exists a 
unique $\gamma_\bullet(e) \in \CF$ and vertex $e_\circ \in \CV_\CF$ such that 
\begin{equ}[e:propse]
e_\bullet = v_\star(\gamma_\bullet(e))\;,\quad e_\circ = \tau^{-1}(\tau(e)_\bullet) \in \CV_{\gamma_\bullet(e)}\;, 
\quad e \in \CE_{\CA(\gamma_\bullet(e))}\;.
\end{equ}
Given $\ell\colon \CE_\CF^\bullet \to \N^d$, we then define a canonical basis element 
$\CD_\CF^\ell\Gamma \in \hat \CT_\Gamma$ by
\begin{equ}
\CD_\CF^\ell\Gamma = (\K_\CF \Gamma, \Labhom^{(\ell)}, \pi\ell)\;,
\end{equ}
where $\Labhom^{(\ell)}$ is the edge-labelling given by $\Labhom^{(\ell)}(e) = \Labhom(\tau(e)) + \sum_{\bullet \in M_e}\ell(e,\bullet)$,
with $\Labhom$ the original edge-labelling of $\Gamma$, and 
$\pi\ell$ is the node-labelling given by
$\pi\ell(v) = \sum\{\ell(e,\bullet)\,:\, e_\circ = v\}$.
Given $\gamma \in \CF$ and $\ell$ as above, we also set 
$\ell(\gamma) = \sum\{|\ell(e,\bullet)|\,:\, \gamma_\bullet(e) = \gamma\}$.

We now return to the bound \eqref{e:wantedboundeta} and first consider the 
special case when $\CF_s$ is a safe forest such that $\CF_u = \emptyset$.
By \eqref{e:contraction} and \eqref{e:defRM}, $\hat \CR_{\CF_s} \Gamma$ can then be written as
\begin{equ}[e:hatRF]
\hat \CR_{\CF_s} \Gamma = (-1)^{|\CF_s|}\sum_{\ell\colon \CE_{\CF_s}^\bullet \to \N^d} {(-1)^{\ell_{\out}}\over \ell!}  \CD_{\CF_s}^\ell\Gamma\;,
\end{equ}
where $\ell_{\out} = \sum\{|\ell(e,\bullet)|\,:\, \bullet = -\}$ and
the sum in \eqref{e:hatRF} is restricted to those choices of $\ell$ such that, for every $\gamma \in \CF_s$,
one has $\deg\gamma + \ell(\gamma) \le 0$.

In this case, we take as the index set $I$ appearing in \eqref{e:wantedeta} 
all those functions $\ell$ appearing in the sum \eqref{e:hatRF} (recall that the sum is restricted to
finitely many such functions) and we set
\begin{equ}
\eta_{\ell} (u) = d + \sum_{e \in \CE_{\CF_s}} \Labhom^{(\ell)}(e)\one_{e^\uparrow}(u)
+ \sum_{(e,\bullet)\in \CE_{\CF_s}^\bullet} |\ell(e)| \one_{(e,\bullet)^\uparrow} (u)\;,
\end{equ}
where, for $e \in \CE_{\CF_s}$, $e^\uparrow$ denotes the node of $T$ given by 
$\tau(e_-)\wedge \tau(e_+)$ and, for $(e,\bullet)\in \CE_{\CF_s}^\bullet$, $(e,\bullet)^\uparrow$ denotes the 
node $\tau(e_\circ) \wedge \tau(e_\bullet)$.

It follows from the definition of $\CW^K$ that this choice does indeed satisfy \eqref{e:wantedboundeta}.
We now claim that as a consequence of the fact that $\CF_s$ 
is such that $\CF_u = \emptyset$, it also satisfies \eqref{e:wantedeta}.
Assume by contradiction that there exists a node $u$ of $T$ and a labelling $\ell$ such that 
$a \eqdef \sum_{v \ge u} \eta_\ell(v) \le 0$.
Write $\CV_0 \subset \CV_{\CF_s}$ for the vertices $v$ such that $\tau(v) \ge u$ in $T$
and $\Gamma_0 = (\CE_0,\CV_0) \subset \K_{\CF_s}\Gamma$ for the corresponding subgraph. 
In general, $\Gamma_0$ does not need to be connected, so we write $\Gamma_0^{(i)} = (\CE_0^{(i)},\CV_0^{(i)})$ for its
connected components.
We then set
\begin{equ}
a_i \eqdef |\CV_{0}^{(i)}| - 1 + \sum_{e \in \CE_0^{(i)}} \Labhom^{(\ell)}(e)
+ \sum_{(e,\bullet)\in \CE_{\CF_s}^\bullet} |\ell(e)| \one_{\{e_\bullet, e_\circ\} \subset \CV_0^{(i)}}\;,
\end{equ}
so that $\sum_i a_i \le a$, with equality if $\Gamma_0$ happens to be connected.
Since $a\le 0$, there exists $i$ such that $a_i \le 0$. Furthermore, $i$ can be chosen
such that $|\CV_{0}^{(i)}| \ge 2$, since $|\CV_0| \ge 2$ and we would otherwise have 
$a = |\CV_0| - 1 \ge 1$.

Set
$\CV_{0,\gamma} = \CV_0\cap \CV_\gamma$ and let $\CF_s^{(i)} \subset \CF_s\cup \{\Gamma\}$ be the
subtree consisting of those $\gamma$ such that either $\CE_\gamma \cap \CE_0^{(i)} \neq \emptyset$
or $v_\star(\gamma) \in \CV_0^{(i)}$ (or both). 
We also break $a_i$
into contributions coming from each $\gamma \in \CF_s^{(i)}$ by setting
\begin{equ}[e:defagamma]
a_{i,\gamma} \eqdef |\CV_{0,\gamma}| - 1 + \sum_{e \in \CE_{\gamma} \cap \CE_0} \Labhom^{(\ell)}(e)
+  \sum_{(e,\bullet)\in \CE_{\CF_s}^\bullet} \one_{\gamma_\bullet(e) = \gamma} |\ell(e)| \one_{\{e_\bullet, e_\circ\} \subset \CV_0^{(i)}}\;.
\end{equ}
We claim that $\sum_\gamma a_{i,\gamma} = a_i$: recalling that one always has $\Gamma \in \CF_s^{(i)}$ by definition,
the only part which is not immediate is
that $\sum_\gamma(|\CV_{0,\gamma}| - 1) = |\CV_{0}^{(i)}| - 1$. This is
a consequence of the fact that in the sum $\sum_\gamma |\CV_{0,\gamma}|$, each ``connecting vertex''
is counted double. Since $\CF_s^{(i)}$ is a tree, the number of these  equals
$|\CF_s^{(i)}|-1$, whence the claim follows.

We introduce the following terminology. An element $\gamma \in \CF_s \cup \{\Gamma\}$ is said to be ``full''
if $\CE_\gamma \cap \CE_0^{(i)} = \CE_\gamma$, ``empty'' if  
$\CE_\gamma \cap \CE_0^{(i)} = \emptyset$,
and ``normal'' otherwise. We also set $a_{i,\gamma} = 0$ for all empty $\gamma$ with 
$\CV_{0,\gamma} = \emptyset$.
Recall furthermore the definition of $\deg \gamma$ for $\gamma \in \CF_s$ given in \eqref{e:degreeSubgraph}
and the definition of $\ell(\gamma)$ given above. With this terminology, we then have the following.

\begin{lemma}\label{lem:propsgraph}
A full subgraph $\gamma$ cannot have an empty parent 
and one has
\begin{equs}[2]
	a_{i,\gamma} &= \deg \gamma + \ell(\gamma) - \sum_{\bar \gamma \in \CC(\gamma)} (\deg \bar \gamma + \ell(\bar \gamma)) &\quad& \text{if $\gamma$ is full,} \\
	a_{i,\gamma} &= 0 && \text{if $\gamma$ is empty,} \label{e:wantedagamma}\\
	a_{i,\gamma} &>  - \sum_{\bar \gamma \in \CC_\star(\gamma)} (\deg \bar \gamma + \ell(\bar \gamma))&& \text{if $\gamma$ is normal,}
\end{equs}
where $\CC_\star(\gamma)$ consists of those children $\bar \gamma$ 
of $\gamma$ such that $v_\star(\bar \gamma) \in \CV_0^{(i)}$.
\end{lemma}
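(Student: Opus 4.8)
The plan is to unfold all the relevant definitions and exploit the block structure of $\K_{\CF_s}\Gamma$ recorded just before the lemma. I would first collect the facts used throughout. For $\gamma\in\CF_s$ the piece $\K(\gamma)$ is connected and is attached to $\K(\CA(\gamma))$ and to each $\K(\bar\gamma)$, $\bar\gamma\in\CC(\gamma)$, at the single cut-vertex $v_\star(\gamma)$, resp.\ $v_\star(\bar\gamma)$. The half-edges $(e,\bullet)$ with $\gamma_\bullet(e)=\gamma$ are precisely the slid boundary half-edges of $\gamma$, so that $\ell(\gamma)=\sum\{\,|\ell(e,\bullet)|:\gamma_\bullet(e)=\gamma\,\}$, each such half-edge has $e_\circ\in\CV_\gamma\setminus\{v_\star(\gamma)\}$, and every slid end of an edge of $\CE_\gamma$ lands in a child of $\gamma$, whence $\sum_{e\in\CE_\gamma}\sum_{\bullet\in M_e}|\ell(e,\bullet)|=\sum_{\bar\gamma\in\CC(\gamma)}\ell(\bar\gamma)$. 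Partitioning the vertices and edges of $\gamma$ according to the immediate child of $\gamma$ they belong to (if any), and noting that each cut-vertex $v_\star(\bar\gamma)$ is counted once in $\CV_\gamma$ and once among the vertices of $\bar\gamma$, the definition \eqref{e:degreeSubgraph} gives the purely combinatorial identity $\deg\gamma=\bigl(\sum_{e\in\CE_\gamma}\deg\Labhom(e)+d(|\CV_\gamma|-1)\bigr)+\sum_{\bar\gamma\in\CC(\gamma)}\deg\bar\gamma$. Finally, since $\ell$ enters only through the edge-labelling $\Labhom^{(\ell)}$, one has $\deg\Labhom^{(\ell)}(e)=\deg\Labhom(e)-\sum_{\bullet\in M_e}|\ell(e,\bullet)|$.

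I would treat the empty case first. If no edge of $\K(\gamma)$ lies in $\Gamma_0^{(i)}$, then two vertices of $\K(\gamma)$ both lying in $\Gamma_0^{(i)}$ would be joined by a path in $\Gamma_0^{(i)}$ which, in order to leave and re-enter the block $\K(\gamma)$, would have to revisit one of its cut-vertices — impossible — so $|\CV_{0,\gamma}|\le1$. Since membership of $\gamma$ in $\CF_s^{(i)}$ forces $v_\star(\gamma)\in\CV_0^{(i)}$ (otherwise $\CV_{0,\gamma}=\emptyset$ and $a_{i,\gamma}=0$ by convention), we get $\CV_{0,\gamma}=\{v_\star(\gamma)\}$, so the vertex term of $a_{i,\gamma}$ vanishes; and since $\CV_\gamma\cap\CV_0^{(i)}=\{v_\star(\gamma)\}$ while each boundary half-edge of $\gamma$ has $e_\circ\neq v_\star(\gamma)$, one has $e_\circ\notin\CV_0^{(i)}$, so the remaining sum in $a_{i,\gamma}$ vanishes too, giving $a_{i,\gamma}=0$. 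That a full $\gamma$ has no empty parent follows from the safety of $\CF_s$: being full means $(\CV_\gamma)^\uparrow\ge u$ in $T$, and $\exte_\bT^{\CF_s}(\gamma)\ge\inte_\bT^{\CF_s}(\gamma)=\bn_{(\CV_\gamma)^\uparrow}$ produces, both nodes lying on the branch from the root to $v_\star(\gamma)$ and $\bn$ being strictly monotone, a boundary edge $e$ of $\gamma$ with $e^\uparrow\ge(\CV_\gamma)^\uparrow\ge u$; both ends of $e$ are then $\ge u$, one of them being $v_\star(\gamma)\in\CV_\gamma\subset\CV_0^{(i)}$, so $e\in\CE_{\CA(\gamma)}\cap\CE_0^{(i)}$ and $\CA(\gamma)$ is not empty.

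For a full $\gamma$ one has $\CV_{0,\gamma}=\CV_\gamma$ and $\CE_\gamma\subset\CE_0^{(i)}$, and every boundary half-edge of $\gamma$ has $e_\bullet=v_\star(\gamma)$ and $e_\circ$ both in $\CV_\gamma=\CV_{0,\gamma}$, so the last sum in the definition of $a_{i,\gamma}$ equals $\ell(\gamma)$. Substituting $\deg\Labhom^{(\ell)}(e)=\deg\Labhom(e)-\sum_{\bullet\in M_e}|\ell(e,\bullet)|$, using $\sum_{e\in\CE_\gamma}\sum_{\bullet\in M_e}|\ell(e,\bullet)|=\sum_{\bar\gamma\in\CC(\gamma)}\ell(\bar\gamma)$ together with the identity for $\deg\gamma$ recalled above, the expression for $a_{i,\gamma}$ collapses to $\deg\gamma+\ell(\gamma)-\sum_{\bar\gamma\in\CC(\gamma)}(\deg\bar\gamma+\ell(\bar\gamma))$.

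The normal case, where $\emptyset\neq\CE_\gamma\cap\CE_0^{(i)}\subsetneq\CE_\gamma$, is the crux. The same bookkeeping, now retaining only the part of $\K(\gamma)$ inside $\Gamma_0^{(i)}$ and re-expanding exactly the children in $\CC_\star(\gamma)$, identifies $a_{i,\gamma}+\sum_{\bar\gamma\in\CC_\star(\gamma)}(\deg\bar\gamma+\ell(\bar\gamma))$ with $\deg H_{i,\gamma}$, where $H_{i,\gamma}$ is the subgraph of $\Gamma$ induced by the vertices underlying $\CV_{0,\gamma}$ together with the children in $\CC_\star(\gamma)$; here the contributions of the edge-relabellings $\Labhom^{(\ell)}$ and of the node-labels $\pi\ell$ cancel in pairs, using $\deg\bar\gamma+\ell(\bar\gamma)=\deg(\bar\gamma,\pi\ell)$. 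Because $\gamma$ is normal, $H_{i,\gamma}$ is a proper subgraph of $\gamma$, hence a proper full subgraph of $\Gamma$. For each connected component $H$ of $H_{i,\gamma}$ one then checks that $\CF_s\cup\{H\}\in\hat\CF_\Gamma^-$ and that every edge of $\gamma$ adjacent to $H$ but not contained in it has scale strictly below $\bn_u$ (otherwise, having an end at scale $\ge\bn_u$ in component $i$, it would connect two components of $H_{i,\gamma}$), whereas every edge of $H$ has scale $\ge\bn_u$; thus $H$ is unsafe for $\CF_s$, and the standing assumption $\CF_u=\emptyset$ forces $\deg H>0$. Summing over components gives $\deg H_{i,\gamma}>0$, and since $\deg\bar\gamma+\ell(\bar\gamma)\le0$ for $\bar\gamma\in\CC_\star(\gamma)$ by the constraint in \eqref{e:hatRF}, this is exactly the asserted strict inequality. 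The step I expect to be most delicate is precisely this last one — pinning down $H_{i,\gamma}$ and verifying that each of its components is a legitimate element of $\CG_\Gamma^-$ compatible with $\CF_s$, that it is genuinely unsafe, and that the cancellation of relabellings against node-labels yields the exact identity $a_{i,\gamma}+\sum_{\bar\gamma\in\CC_\star(\gamma)}(\deg\bar\gamma+\ell(\bar\gamma))=\deg H_{i,\gamma}$; the rest is routine bookkeeping of the type already carried out in the proof of Proposition~\ref{prop:WeinFancy} and in the full case above.
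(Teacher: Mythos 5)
Your proof follows essentially the same route as the paper's: the same case analysis, the same bookkeeping for the full and empty cases, and for the normal case the same auxiliary subgraph (the paper's $\hat\gamma$, your $H_{i,\gamma}$) shown to be unsafe for $\CF_s$ so that the standing assumption $\CF_u = \emptyset$ forces its degree to be positive. The one discrepancy is that your claimed exact identity $a_{i,\gamma}+\sum_{\bar \gamma \in \CC_\star(\gamma)}(\deg\bar\gamma+\ell(\bar\gamma)) = \deg H_{i,\gamma}$ is in general only an inequality $\ge$ (half-edges $(e,\bullet)$ with $\gamma_\bullet(e)=\gamma$ and $\{e_\bullet,e_\circ\}\subset \CV_0^{(i)}$ contribute positively to the left-hand side with nothing to cancel them, and half-edges attached to children with $e \notin \CE_0^{(i)}$ are subtracted only on the right), but since the discrepancy goes in the favourable direction the stated strict lower bound on $a_{i,\gamma}$ is unaffected.
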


Before we proceed to prove
Lemma~\ref{lem:propsgraph}, let us see how this leads to a contradiction. 
By \eqref{e:hatRF}, one has $\deg \bar \gamma + \ell(\bar \gamma) \le 0$ for every $\gamma \in \CF_s$
and a fortiori $\deg \bar \gamma < 0$. Furthermore, since $|\CV_0^{(i)}| \ge 2$, there exists
at least one subgraph $\gamma$ which is either full or normal. Since full subgraphs can only have parents
that are either full or normal and since $\Gamma$ itself cannot be full (since legs are never contained
in $\CE_0^{(i)}$), we have at least one normal subgraph.
Since each of the negative terms $\deg \gamma + \ell(\gamma)$ appearing in the right hand side of 
the bound of $a_{i,\gamma}$ for $\gamma$ full is compensated by a corresponding term in its parent,
and since we use the strict inequality appearing for normal $\gamma$ at least once, we conclude that
one has indeed $\sum_\gamma a_{i,\gamma} > 0$ as required.

\begin{proof}[of Lemma~\ref{lem:propsgraph}]
Let us first show that the bounds \eqref{e:wantedagamma} hold. If $\gamma$ is empty, one has
either $\gamma \not \in \CF_s^{(i)}$ in which case $\CV_{0,\gamma} = \emptyset$
and $a_{i,\gamma} = 0$ by definition, or $\CV_{0,\gamma} = v_{\star}(\gamma)$ in which case
$a_{i,\gamma} = 0$ by \eqref{e:defagamma}. If $\gamma$ is full, then it follows immediately
from the definition of $\deg\gamma$ that one would have
$a_{i,\gamma} = \deg \gamma  - \sum_{\bar \gamma \in \CC(\gamma)} \deg \bar \gamma$ if it weren't
for the presence of the labels $\ell$. If $\gamma$ is full then, whenever 
$(e,\bullet)$ is such that $\gamma_\bullet(e) = \gamma$, one also 
has $\{e_\bullet,e_\circ\} \subset \CV_0^{(i)}$ by \eqref{e:propse} and the definition of being full.
Similarly, one has $e \in \CE_\gamma \cap \CE_0$ whenever $\gamma_\bullet(e) \in \CC(\gamma)$.
The first identity in \eqref{e:wantedagamma} then follows from the 
fact that each edge with $\gamma_\bullet(e) = \gamma$ contributes $|\ell(e)|$ to the last term in 
\eqref{e:defagamma} while each edge with $\gamma_\bullet(e) \in \CC(\gamma)$ contributes $-|\ell(e)|$
to the penultimate term.

Regarding the last identity in \eqref{e:wantedagamma}, given a normal subgraph $\gamma$, write
$\hat \gamma$ for the subgraph of $\Gamma$ with edge set given by 
\begin{equ}[e:defgammahat]
\hat \CE = \tau(\CE_\gamma \cap \CE_0^{(i)}) \cup \bigcup_{\bar \gamma \in \CC_\star(\gamma)} \CE(\bar \gamma) \;. 
\end{equ}
In exactly the same way as for a full subgraph, one then has
$a_{i,\gamma} \ge \deg \hat \gamma - \sum_{\bar \gamma \in \CC_\star(\gamma)} (\deg \bar \gamma + \ell(\bar \gamma))$.
The reason why this is an inequality and not an equality is that we may have additional positive
contributions coming from those $\ell(e)$ with $\gamma_m(e) = \gamma$ and such that $e_\circ \in \CV_0^{(i)}$,
while we do not have any negative contributions from those 
 $\ell(e)$ with $\gamma_m(e) \in \CC_\star(\gamma)$ but $e \not \in \CE_0^{(i)}$.
The claim then follows from the fact that one necessarily has $\deg\hat \gamma > 0$ by the assumption
that $\CF_u = \emptyset$. Indeed, it follows from its definition and the construction of the 
Hepp sector $T$ that the subgraph $\Gamma_0$ satisfies that $\scale_\bT^{\CF_s}(e) > \scale_\bT^{\CF_s}(e)$ for
every edge $e \in \CE_0$ and every edge $\bar e$ adjacent to $\Gamma_0$ in $\K_{\CF_s}\Gamma$, so that one
would have $\hat \gamma \in \CF_u$ otherwise.

It remains to show that if $\gamma$ is a full subgraph, then it cannot have empty parents. This follows in
essentially the same way as above, noting that if it were the case that $\gamma$ has an empty parent, then
it would be unsafe in $\CF_s$, in direct contradiction with the fact that $\CF_s$ is a safe forest.
\end{proof}

In order to complete the proof of Theorem~\ref{theo:BPHZ}, it remains to consider the general case when
$\CF_u \neq \emptyset$. In this case, setting $\M = [\CF_s, \CF_s\cup \CF_u]$, we have
\begin{equ}[e:defRMgen]
\hat \CR_\M \Gamma
= (-1)^{|\CF_s|}\sum_{\ell\colon \CE_{\CF_s}^m \to \N^d} {(-1)^{\ell_{\out}}\over \ell!} \Big(\prod_{\gamma \in \CF_u}(\id - \hat \CC_{\gamma})\Big) \CD_{\CF_s}^\ell\Gamma\;,
\end{equ}
with the sum over $\ell$ restricted in the same ways as before. Again, we 
bound each term in this sum separately, so that our index set $I$ consists again of the subset of 
functions $\ell\colon \CE_{\CF_s}^m \to \N^d$ such that $\deg\gamma + \ell(\gamma) < 0$ for every
$\gamma \in \CF_s$, but this time each of these summands is still comprised of several terms
generated by the action of the operators $\hat \CC_{\gamma}$ for the ``unsafe'' graphs $\gamma$.

For any $\gamma \in \CF_u$, we define a subgraph $\K(\gamma)$ of $\K_{\CF_s}\Gamma$ as before,
with the children of $\gamma$ being those in $\CF_s \cup \{\gamma\}$ \textit{not} in all of $\CF_s\cup \CF_u$.
The definition of $\gamma$ being ``unsafe'' then guarantees that 
there exists a vertex $\gamma^\uparrow$ in $T$ such that $\tau(\CV_\gamma) = \{v \in \CV\,:\, v \ge \gamma^\uparrow\}$.
We furthermore define
\begin{equ}
\gamma^{\uuparrow} = \sup \{e^\uparrow\,:\, e \in \CE_{\CA(\gamma)}\;\&\; e \sim \K(\gamma)\}\;,
\end{equ}
with ``$\sim$'' meaning ``adjacent to'',
which is well-defined since all of the elements appearing under the $\sup$ lie on the path joining
$\gamma^\uparrow$ to the root of $T$. In particular, one has $\gamma^\uparrow > \gamma^{\uparrow\uparrow}$.
We also set $N(\gamma) = 1 + \lfloor-\deg\gamma\rfloor$ with the 
convention  that $N(\gamma) = 0$ for $\gamma \not \in \CF_u$.

We claim that this time, if we set 
\begin{equ}[e:fulleta]
\eta_{\ell} (u) = d + \sum_{e \in \CE_{\CF_s}} \Labhom^{(\ell)}(e)\one_{e^\uparrow}(u)
+ \sum_{e\in \CE_{\CF_s}^m} |\ell(e)| \one_{e_m^\uparrow} (u) +
\sum_{\gamma \in \CF_u} N(\gamma) \bigl(\one_{\gamma^\uparrow}(u) - \one_{\gamma^\uuparrow}(u)\bigr) \;,
\end{equ}
then $\eta_\ell$ does indeed satisfy the required properties, which then concludes the proof.
As before, we assume by contradiction that there is $u$ such that 
$a = \sum_{v \ge u}\eta_\ell(v) \le 0$ and we define, for each connected component $\Gamma_0^{(i)}$
of $\Gamma_0$,
\begin{equ}
a_i \eqdef |\CV_{0}^{(i)}| - 1 + \sum_{e \in \CE_0^{(i)}} \Labhom^{(\ell)}(e)
+ \sum_{e\in \CE_{\CF_s}^m} |\ell(e)| \one_{\{e_\bullet, e_\circ\} \subset \CV_0^{(i)}}
+ \sum_{\gamma \in \CF_u} N(\gamma)\one_{\K(\gamma) = \Gamma_0^{(i)} \cap \K(\CA(\gamma))}\;.
\end{equ}
It is less obvious than before to see that $\sum a_i \le a$ because of the 
presence of the last term. Given $\gamma \in \CF_u$, there are two possibilities regarding the 
corresponding term in \eqref{e:fulleta}.
If $\gamma^\uparrow < u$ in $T$, then it
does not contribute to $a$ at all. Otherwise, $\tau^{-1}(\gamma)$ is included
in $\Gamma_0$ and 
we distinguish two cases. In the first case, one has $\K(\gamma) = \K(\CA(\gamma)) \cap \Gamma_0$.
In this case, since the inclusion $\gamma \subset \CA(\gamma)$ is strict,
there is at least one edge in $\K(\CA(\gamma))$ adjacent to $\K(\gamma)$. 
Since this edge is also adjacent to $\Gamma_0$, it follows that in this case $\gamma^\uuparrow < u$
so that we have indeed a contribution $N(\gamma)$ to $a$.
In the remaining case, the corresponding term may or may not
contribute to $a$, but if it does, then its contribution is necessarily positive, so we can
discard it and still have $\sum a_i \le a$ as required.

As before, we then write $a_i = \sum_{\gamma \in \CF_s^{(i)}} a_{\gamma,i}$ with
\begin{equs}
a_{i,\gamma} &\eqdef |\CV_{0,\gamma}| - 1 + \sum_{e \in \CE_{\gamma} \cap \CE_0} \Labhom^{(\ell)}(e)
+  \sum_{e\in \CE_{\CF_s}^m} \one_{\gamma_m(e) = \gamma} |\ell(e)| \one_{\{e_\bullet, e_\circ\} \subset \CV_0^{(i)}}\\
&\qquad + \sum_{\bar\gamma \in \CF_u} N(\bar\gamma)\one_{\K(\bar\gamma) = \Gamma_0^{(i)} \cap \K(\gamma)}\;.\label{e:defagammafull}
\end{equs}
We claim that the statement of Lemma~\ref{lem:propsgraph} still holds in this case. 
Indeed, the only case that requires a slightly different argument is that when $\gamma$ is ``normal''.
In this case, defining again $\hat \gamma$ as in \eqref{e:defgammahat}, we have
\begin{equ}
a_{i,\gamma} \ge \deg \hat \gamma + N(\hat\gamma) - 
\sum_{\bar \gamma \in \CC_\star(\gamma)} (\deg \bar \gamma + \ell(\bar \gamma))\;,
\end{equ}
since the last term in \eqref{e:defagammafull} contributes precisely when $\hat \gamma \in \CF_u$
and then only the term with $\bar \gamma = \hat \gamma$ is selected by the indicator function.
The remainder of the argument, including the fact that this then yields a contradiction with the
assumption that $a \le 0$, is then identical to before since one always has $\deg \hat \gamma + N(\hat\gamma) > 0$.

In order to complete the proof of our main theorem, it thus remains to show that the choice
of $\eta_\ell$ given in \eqref{e:fulleta} allows to bound from above the contribution of the Hepp
sector indexed by $T$, in the sense that the bound \eqref{e:wantedboundeta} holds.
The only non-trivial part of this is the presence of a term
\begin{equ}
N(\gamma) \bigl(\one_{\gamma^\uparrow}(u) - \one_{\gamma^\uuparrow}(u)\bigr)
\end{equ} 
for each factor of $(1-\hat \CC_\gamma)$ in \eqref{e:defRMgen}. This will be a consequence
of the following bound.

\begin{lemma}\label{lem:Taylor}
Let $K_i \colon \S\to\R$ be kernels satisfying the bound \eqref{e:propKer}
with $\deg\Labhom = -\alpha_i < 0$ for $i\in I$ with $I$ a finite index set, and write $I_\star = I \sqcup \{\star\}$.
Let furthermore $x_i, y_i \in \S$ such that 
$|x_i - x_j| \le \delta < \Delta  \le |x_i - y_j|$ for all $i,j \in I_\star$ and let $N \ge 0$ be an integer.
Then, one has the bound
\begin{equs}
\Bigl|\prod_{i \in I} K_i(x_i-y_i)
- \sum_{|\ell| < N}{1\over \ell!} \prod_{i \in I} (x_i-x_\star)^{\ell_i}&\bigl(D^{\ell_i}K_i\bigr)(x_\star-y_i)\Bigr|\label{e:Taylor}\\
&\qquad \lesssim \delta^{N} \Delta^{-N}\prod_{i \in I} |y_i-x_i|^{-\alpha_i}\;.
\end{equs}
\end{lemma}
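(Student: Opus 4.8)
The plan is to read the left-hand side of \eqref{e:Taylor} as the remainder in a multivariate Taylor expansion. Introduce $\Phi\colon \S^{I}\to\R$ by $\Phi\bigl((z_i)_{i\in I}\bigr) = \prod_{i\in I}K_i(z_i)$, and expand it around the base point $(z_i^0)_{i\in I}$ with $z_i^0 = x_\star - y_i$, with increment $h_i = (x_i - y_i) - z_i^0 = x_i - x_\star$. Since $\Phi$ is a product, $\partial^{\ell}\Phi(z) = \prod_i (D^{\ell_i}K_i)(z_i)$ for every collection $\ell = (\ell_i)_{i\in I}$ of $d$-dimensional multiindices (with $|\ell| = \sum_i|\ell_i|$, $\ell! = \prod_i \ell_i!$, $h^{\ell} = \prod_i h_i^{\ell_i}$), so the degree-$(<N)$ Taylor polynomial of $\Phi$ at $(z_i^0)$ evaluated at $(x_i - y_i)$ is exactly the sum subtracted in \eqref{e:Taylor}. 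The integral form of the Taylor remainder then identifies the left-hand side of \eqref{e:Taylor} with $N\sum_{|\ell|=N}\frac{1}{\ell!}\prod_i(x_i-x_\star)^{\ell_i}\int_0^1(1-t)^{N-1}\prod_i(D^{\ell_i}K_i)\bigl(x_\star - y_i + t(x_i-x_\star)\bigr)\,dt$, and it suffices to bound this.

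First I would dispose of the degenerate regime $\delta \ge \Delta/2$. Here $\delta^N\Delta^{-N}\ge 2^{-N}$, so it is enough to bound the two terms on the left of \eqref{e:Taylor} \emph{separately} by $\prod_i|y_i-x_i|^{-\alpha_i}$ up to a constant. For $\prod_i K_i(x_i-y_i)$ this is immediate from \eqref{e:propKer}. For the Taylor polynomial, use $|(x_i-x_\star)^{\ell_i}|\le\delta^{|\ell_i|}$ and $|(D^{\ell_i}K_i)(x_\star-y_i)|\le C|x_\star-y_i|^{-\alpha_i-|\ell_i|}$, together with the elementary comparison $|x_\star-y_i|\le|x_i-y_i|+\delta\le 2|x_i-y_i|$ and its mirror image (both valid since $\delta<\Delta\le\min(|x_i-y_i|,|x_\star-y_i|)$), which gives $|x_\star-y_i|\asymp|y_i-x_i|$, and then $\delta^{|\ell_i|}|x_\star-y_i|^{-|\ell_i|}\le(\delta/\Delta)^{|\ell_i|}\le 1$; summing over the finitely many admissible $\ell$ yields the claim in this case.

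We may thus assume $\delta\le\Delta/2$, and bound the remainder displayed above. The key geometric observation is that for every $t\in[0,1]$ and $i\in I$ the point $x_\star-y_i+t(x_i-x_\star)$ lies within $\delta\le\tfrac12\Delta\le\tfrac12|x_\star-y_i|$ of $x_\star-y_i$, hence has norm in $[\tfrac12|x_\star-y_i|,2|x_\star-y_i|]$, which by the comparison of the previous paragraph is $\asymp|y_i-x_i|$. Applying \eqref{e:propKer} to $D^{\ell_i}K_i$ — this needs no caveat, since the factor $\one_{|\cdot|\le1}$ only makes its left side smaller — each factor is bounded by a constant times $|y_i-x_i|^{-\alpha_i-|\ell_i|}$, and since $|y_i-x_i|\ge\Delta$ one has $|y_i-x_i|^{-|\ell_i|}\le\Delta^{-|\ell_i|}$. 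Combined with $|(x_i-x_\star)^{\ell_i}|\le\delta^{|\ell_i|}$, each factor contributes $\lesssim\delta^{|\ell_i|}\Delta^{-|\ell_i|}|y_i-x_i|^{-\alpha_i}$; taking the product over $i\in I$ and using $\sum_i|\ell_i|=|\ell|=N$, every term of the remainder is $\lesssim\delta^N\Delta^{-N}\prod_i|y_i-x_i|^{-\alpha_i}$. The $t$-integral is harmless and there are only finitely many $\ell$ with $|\ell|=N$ (a number depending only on $N$ and $|I|$), so summing gives \eqref{e:Taylor}.

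There is nothing deep in this argument, and I expect the only point requiring genuine care to be the interplay between the two regimes of $\delta$: when $\delta$ is not much smaller than $\Delta$ the Taylor remainder is not actually small, so one must instead observe that the target bound is itself $\gtrsim$ the crude bound on each of the two left-hand terms, which is why the case split is unavoidable. Relatedly, throughout the estimate one must keep the distances $|x_\star-y_i|$, $|x_i-y_i|$ and $|x_\star-y_i+t(x_i-x_\star)|$ mutually comparable and extract the factor $\Delta^{-N}$ in the correct direction, i.e.\ from $|y_i-x_i|\ge\Delta$ rather than the reverse inequality.
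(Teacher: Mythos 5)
Your proof is correct and follows essentially the same route as the paper: Taylor-expand the product $\prod_i K_i$ as a function on $\S^I$ around the base point determined by $x_\star$, split into the regime where $\delta$ is comparable to $\Delta$ (where both sides of \eqref{e:Taylor} are bounded crudely and separately) and the regime $\delta\lesssim\Delta$ (where the order-$N$ remainder is estimated using \eqref{e:propKer} and the comparability of $|x_\star-y_i|$, $|x_i-y_i|$ and the intermediate points). The only cosmetic difference is that you use the classical integral form of the remainder along the segment, whereas the paper invokes the measure-valued remainder representation of \cite[Prop.~A.1]{Regularity}; both yield the same estimate here.
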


\begin{proof}
The proof is a straightforward application of Taylor's theorem to the function
$x \mapsto \prod_{i \in I} K_i(x_i)$ defined on $\S^I$.
For example, the version given in \cite[Prop.~A.1]{Regularity}
shows that for every $\tilde\ell\colon I \to \N^d$ with $|\tilde \ell| = N$, 
there exist measures $\CQ_{\tilde \ell}$ on $\S^I$ with total variation 
${1\over \tilde \ell!}\prod_i |(x_i - x_\star)^{\tilde \ell_i}| \lesssim \delta^N$
and support in the ball of radius $K\delta$ around $(x_\star,\ldots,x_\star)$
(for some $K$ depending only on $|I|$ and $d$) such that 
\begin{equs}
\prod_{i \in I} K_i(x_i-y_i)
- \sum_{|\ell| < N}&{1\over \ell!} \prod_{i \in I} (x_i-x_\star)^{\ell_i} \bigl(D^{\ell_i}K_i\bigr)(x_\star-y_i)\label{e:remainder}\\
& =\sum_{|\tilde \ell| = N}\int \prod_{i \in I} \bigl(D^{\tilde\ell_i}K_i\bigr)(z_i-y_i)\,\CQ_{\tilde \ell}(dz)
\end{equs}
If $\Delta > (K+1)\delta$, then the claim follows at once from the fact that
\begin{equ}
\bigl|\bigl(D^{\tilde\ell_i}K_i\bigr)(z_i-y_i)\bigr| \lesssim |z_i-y_i|^{-\alpha_i - |\tilde \ell_i|}
\lesssim \Delta^{ - |\tilde \ell_i|} |x_i-y_i|^{-\alpha_i}\;.
\end{equ}
If $\Delta \le (K+1)\delta$ on the other hand, each term in the left hand side of \eqref{e:Taylor}
already satisfies the required bound individually.
\end{proof}

It now remains to note that each occurence of $(1-\hat \CC_\gamma)$ in \eqref{e:defRMgen}
produces precisely one factor of the type considered in Lemma~\ref{lem:Taylor}, with 
the set $I$ consisting of the edges in $\CA(\gamma)$ adjacent to $\gamma$, 
$\delta = 2^{-\bn(\gamma^\uparrow)}$ and $\Delta = 2^{-\bn(\gamma^\uuparrow)}$.
The additional factor $\delta^{N(\gamma)} \Delta^{-N(\Gamma)}$ produced in this way
precisely corresponds to the additional term $N(\gamma) \bigl(\one_{\gamma^\uparrow}(u) 
- \one_{\gamma^\uuparrow}(u)\bigr)$ in our definition of $\eta$.
The only potential problem that could arise is when some edges are involved in 
the renormalisation of more than one different subgraph.
The explicit formula \eqref{e:remainder} however shows that this is not a problem.
The proof of Theorem~\ref{theo:BPHZ} is complete.

\subsection{Properties of the BPHZ valuation}
\label{sec:props}

In this section, we collect a few properties of the BPHZ valuation $\Pi^K_\BPHZ$. 
In order to formulate the main tool for this, we first introduce a ``gluing operator''
$\Contr \colon \hat\CT_- \to \hat\CT_-$ such that $\Contr \Gamma$ is the connected vacuum diagram
obtained by identifying all the marked vertices of $\Gamma$, for example
\begin{equ}
\Contr \Bigl(\<markedtriang>\<markedloop>\Bigr) = \<markedboth>\;,
\end{equ}
where the marked vertices are indicated in green. It follows from the definition 
\eqref{e:PiK} that the linear map $\Pi_-^K$ satisfies the identity
\begin{equ}[e:contr]
\Pi_-^K \Contr\tau = \Pi_-^K\tau\;,\qquad \tau \in \hat\CT_-\;.
\end{equ}
We claim that the same also holds for $\Pi_-^K \hat \CA \pi$, where $\pi \colon \hat \CT_- \to \CH_-$ is the
canonical projection.

\begin{lemma}\label{lem:prodglue}
One has $\Pi_-^K \hat \CA \pi\Contr\tau = \Pi_-^K\hat \CA\pi\tau$ for all $\tau \in \hat \CT_-$.
\end{lemma}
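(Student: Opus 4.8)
The plan is to prove that the character $g = \Pi_-^K \hat\CA \pi$ of $\hat\CT_-$ — it is multiplicative because $\pi$ is an algebra morphism, $\hat\CA$ is multiplicative, and $\Pi_-^K$ is a character — satisfies $g\Contr = g$. Since $\Contr$ sends a product $\tau_1 \bullet \cdots \bullet \tau_n$ of connected diagrams to the connected diagram obtained by iterated pairwise gluing of their distinguished vertices, and since $\Contr(\tau_1\bullet\cdots\bullet\tau_n) = \Contr\bigl(\Contr(\tau_1\bullet\cdots\bullet\tau_{n-1})\bullet\tau_n\bigr)$, an easy induction on $n$ (using multiplicativity of $g$) reduces the claim to the two‑component case: it suffices to prove $g\bigl(\Contr(\tau_1\bullet\tau_2)\bigr) = g(\tau_1)\,g(\tau_2)$ for all connected $\tau_1,\tau_2 \in \hat\CT_-$. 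This I would prove by strong induction on the total number of edges $|\tau_1|+|\tau_2|$, writing $\Gamma = \Contr(\tau_1\bullet\tau_2)$ and using that $\Contr$ preserves the degree, so $\deg\Gamma = \deg\tau_1+\deg\tau_2$.

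If $\deg\Gamma>0$ then $\pi\Gamma=0$, so the left side vanishes, while some $\deg\tau_i>0$ forces $\pi\tau_i=0$ and the right side vanishes; hence we may assume $\deg\Gamma\le 0$ and, by symmetry of the statement, that $\deg\tau_2\le 0$. Applying $\Pi_-^K$ to the defining relation \eqref{e:twisted} for $\Gamma$ and using multiplicativity of $\Pi_-^K$ gives $\Psi\,\Deltam\Gamma=0$, where $\Psi = \Pi_-^K\hat\CA\otimes\Pi_-^K$ is an algebra morphism. The crux is the identity $\Psi\,\Deltam\Gamma = g(\Gamma) - g(\tau_1\bullet\tau_2) + \Psi\,\Deltam(\tau_1\bullet\tau_2)$. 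To prove it, note that the subgraphs of $\Gamma$ and of $\tau_1\bullet\tau_2$ are both parametrised by pairs $(\sigma_1,\sigma_2)$ with $\sigma_i\subseteq\tau_i$, and the two resulting subgraphs agree except that, when both $\sigma_i$ are incident to the gluing vertex $v_\star$, their $v_\star$‑components $\gamma_1,\gamma_2$ are joined in $\Gamma$ into one component $C=\Contr(\gamma_1\bullet\gamma_2)$. Since the first tensor factor in \eqref{e:coprod} lies in $\CH_-$, one may restrict the labels $\bar\ell$ to vanish on the half‑edges at $v_\star$ (otherwise that factor is $0$ by Remark~\ref{rem:vanish}, using that the node decoration of $\Gamma$ at $v_\star$ vanishes, which we may assume since otherwise both sides of the claim vanish by the same remark). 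With this restriction, for each term the two cofactors differ only by a $\Contr$, hence have equal $\Pi_-^K$ by \eqref{e:contr}, and the two factors differ only by the replacement of $\gamma_1,\gamma_2$ with $C$, hence have equal $\Pi_-^K\hat\CA$ by the induction hypothesis applied to $(\gamma_1,\gamma_2)$. The only remaining discrepancy is that a pair $(\sigma_1,\sigma_2)$ contributes to $\Deltam\Gamma$ as soon as $C$ (of degree $\deg\gamma_1+\deg\gamma_2$) is divergent, which is weaker than $\gamma_1$ and $\gamma_2$ being separately divergent; but on such ``extra'' terms the factor equals $g(C)=g(\gamma_1)g(\gamma_2)=0$ by the induction hypothesis, since $\pi\gamma_i=0$ whenever $\deg\gamma_i>0$. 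Thus $\Psi\,\Deltam\Gamma$ and $\Psi\,\Deltam(\tau_1\bullet\tau_2)$ agree term by term over all pairs except $(\tau_1,\tau_2)$ itself, which contributes $\Psi(\Gamma\otimes\one)=g(\Gamma)$ on one side and $\Psi((\tau_1\bullet\tau_2)\otimes\one)=g(\tau_1\bullet\tau_2)$ on the other, giving the displayed identity. Finally $\Psi\,\Deltam(\tau_1\bullet\tau_2)=(\Psi\,\Deltam\tau_1)(\Psi\,\Deltam\tau_2)$ by multiplicativity of $\Deltam$ and $\Psi$, and $\Psi\,\Deltam\tau_2=\Pi_-^K\CM(\hat\CA\otimes\id)\Deltam\tau_2=0$ by \eqref{e:twisted}; combined with $\Psi\,\Deltam\Gamma=0$ this yields $g(\Gamma)=g(\tau_1\bullet\tau_2)=g(\tau_1)g(\tau_2)$, closing the induction. (Note this argument also covers the subcase $\deg\tau_1>0,\ \deg\Gamma\le 0$, where both sides are then $0$.)

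The main obstacle is the bookkeeping in the crux identity: one must check that under the pairing $(\sigma_1,\sigma_2)$ the decorations $\bar\kk$, the labels $\bar\ell$, the signs $(-1)^{|\out\bar\ell|}$ and the multinomial coefficients of \eqref{e:coprod} all match up, using that this data and the divergence condition are ``local'' to the two sides of $v_\star$ and therefore split multiplicatively; and that, with $\bar\ell$ vanishing at $v_\star$, contracting a subgraph containing $v_\star$ reconnects the remaining edges at $v_\star$ without introducing any spurious derivatives, so that the quotient cofactors for $\Gamma$ and for $\tau_1\bullet\tau_2$ are genuinely related by $\Contr$. Everything else is a routine combination of the multiplicativity of $\Pi_-^K$, $\hat\CA$ and $\Deltam$, the defining relation \eqref{e:twisted}, and the already established identity \eqref{e:contr}.
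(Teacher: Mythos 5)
Your proof is correct and follows essentially the same route as the paper's: reduction to two glued connected components, induction on the number of edges, the defining relation \eqref{e:twisted} of the twisted antipode, and a term-by-term matching of the two coproduct expansions via the pair parametrisation of subgraphs, using \eqref{e:contr} on the right tensor factors and the induction hypothesis on the merged left factors. The only difference is organisational: you package the paper's explicit case analysis and telescoping into the single identity $\Psi\,\Deltam\Gamma - g(\Gamma) = \Psi\,\Deltam(\tau_1\bullet\tau_2) - g(\tau_1\bullet\tau_2)$ combined with the vanishing of $\Psi\,\Deltam$ on connected divergent diagrams.
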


\begin{proof}
By induction on the number of connected components and since $\Pi_-^K$, $\hat \CA$ and $\pi$ are
all multiplicative, it suffices to show that, for
every element $\tau$ of the form $\tau = \gamma_1 \gamma_2$ where the $\gamma_i$ are connected
and non-empty, one has the identity
\begin{equ}
\Pi_-^K \hat \CA \pi \Contr \tau = \Pi_-^K \hat \CA \pi \gamma_1\cdot \Pi_-^K \hat \CA \pi \gamma_2
= \Pi_-^K \bigl(\hat \CA \pi \gamma_1\cdot \hat \CA \pi \gamma_2\bigr)\;.
\end{equ}
In particular, one has $\Pi_-^K \hat \CA \tau = 0$ for every $\tau$ with $\deg\tau \le 0$ of the form
$\Contr(\gamma_1 \gamma_2)$, as soon as one of the factors has strictly positive degree.

We will use the fact that, as a consequence of \eqref{e:twisted} combined with the definition of
$\Deltam$, one has for connected $\sigma = (\Gamma,v_\star,\kk)$ with $\deg\sigma \le 0$ the identity
\begin{equ}[e:idenAhat]
\hat \CA \sigma = - \sigma - 
 \sum_{\bar \Gamma \subset \Gamma \atop \bar \Gamma \not \in \{\emptyset, \Gamma\}} 
 \CM(\hat \CA\pi \otimes \id) \CX_{\bar \Gamma}\sigma
 \;,
\end{equ}
where we made use of the operators
\begin{equ}
\CX_{\bar \Gamma}\sigma = 
\sum_{\bar \ell\colon \d\bar \Gamma \to \N^d \atop \bar \kk\colon \bar \CV \to \N^d}
{(-1)^{|\out \bar \ell|} \over \bar \ell!}\binom{\kk}{\bar \kk} (\bar \Gamma,\star, \bar \kk + \pi\bar \ell) \otimes (\Gamma,v_\star,\kk - \bar\kk) / (\bar \Gamma, \bar \ell)
\end{equ}
and $\star$ denotes some arbitrary choice of distinguished vertex.
(Here, $\CX$ stands for ``extract''.) Note that the nonvanishing terms in \eqref{e:idenAhat}
are always such that
the degree of $\bar\Gamma$ (not counting node-decorations) is negative.

The proof of the lemma now goes by induction on the number of edges of $\tau = \gamma_1\gamma_2$. 
In the base case, each of the $\gamma_i$ has one edge and there are two non-trivial
cases. In the first case, $\deg \gamma_i \le 0$ for both values of $i$. In this case,
it follows from the above formula that, since $v_\star$ is the vertex in $\Contr \tau$
at which  both edges are connected and since $\hat \CA \gamma_1 = - \gamma_i$, one has
\begin{equ}
\hat \CA \Contr \tau = - \Contr\tau + 2\tau\;,
\end{equ}
so that the claim follows from \eqref{e:contr}, combined with the fact that $\hat \CA \pi \gamma_i = -\gamma_i$.
In the second case, one has $\deg \gamma_1 \le 0$ and $\deg \gamma_2 > 0$, but $\deg \gamma_1 + \deg \gamma_2 \le 0$
so that $\pi \Contr\tau = \Contr\tau$. In this case, the only subgraph of $\Contr \tau$ of negative degree is
$\gamma_1$, so that 
\begin{equ}
\hat \CA \Contr \tau = - \Contr\tau + \tau\;,
\end{equ}
thus yielding $\Pi_-^K \hat \CA \Contr \tau = 0$ as required.

We now write $\Gamma$ for the graph associated to $\Contr\tau$ and $\Gamma_i \subset \Gamma$ for the 
subgraphs associated to each of the factors $\gamma_i$. 
Writing $\CU_\Gamma$ for the set of all non-empty proper subgraphs of
$\Gamma$, we then have a natural bijection
\begin{equs}
\CU_\Gamma = \CU_{\Gamma_1}
&
\sqcup \{\bar \gamma_1 \sqcup \Gamma_2\,:\, \bar \gamma_1 \in \CU_{\Gamma_1}\}
\sqcup \CU_{\Gamma_2} \sqcup \{\bar \gamma_2 \sqcup \Gamma_1\,:\, \bar \gamma_2 \in \CU_{\Gamma_2}\}\\
&\sqcup \{\bar \gamma_1 \sqcup \bar \gamma_2\,:\, \bar \gamma_1 \in \CU_{\Gamma_1},\; \bar \gamma_2 \in \CU_{\Gamma_2}\} \sqcup \{\Gamma_1,\Gamma_2\}\;.\label{e:bijection}
\end{equs}
Take now an element of the form $\bar \gamma_1 \sqcup \Gamma_2$ from the first set above.
As before, there are no edges in $\Gamma$ adjacent to $\Gamma_1$ other than those incident to $v_\star$.
Furthermore, $\bar \gamma_1 \sqcup \Gamma_2$ has strictly less edges than $\Gamma$, so we can apply our induction
hypothesis, yielding
\begin{equs}
\Pi_-^K\CM(\hat \CA\pi \otimes \id) \CX_{\bar \gamma_1 \sqcup \Gamma_2}\Contr\tau
&= \Pi_-^K\CM(\hat \CA\pi \CM_{\gamma_2} \otimes \id) \CX_{\bar \gamma_1}\gamma_1 \\
&= \Pi_-^K\bigl(\hat \CA\pi\gamma_2\cdot \CM(\hat \CA\pi \otimes \id) \CX_{\bar \gamma_1}\gamma_1\bigr)\;,
\end{equs}
where $\CM_{\gamma_2}\colon \gamma \mapsto \Contr(\gamma\cdot\gamma_2)$.
In a similar way, we obtain the identities
\begin{equs}
\Pi_-^K\CM(\hat \CA\pi \otimes \id) \CX_{\bar \gamma_1}\Contr\tau
&= \Pi_-^K\bigl(\gamma_2\cdot \CM(\hat \CA\pi \otimes \id) \CX_{\bar \gamma_1}\gamma_1\bigr)\;, \\
\Pi_-^K\CM(\hat \CA\pi \otimes \id) \CX_{\bar \gamma_1\sqcup \bar \gamma_2}\Contr\tau
&= \Pi_-^K\bigl(\CM(\hat \CA\pi \otimes \id) \CX_{\bar \gamma_1}\gamma_1 \cdot \CM(\hat \CA\pi \otimes \id) \CX_{\bar \gamma_2}\gamma_2\bigr)\;,\\
\Pi_-^K\CM(\hat \CA\pi \otimes \id) \CX_{\gamma_1}\Contr\tau
&= \Pi_-^K\bigl(\gamma_2\cdot \hat \CA \pi\gamma_1\bigr)\;,
\end{equs}
as well as the corresponding identities with $1$ and $2$ exchanged.
Inserting these identities into \eqref{e:idenAhat} (with the sum broken up according
to \eqref{e:bijection}), we obtain
\begin{equs}
\Pi_-^K \hat \CA \Contr \tau &= - \Pi_-^K \Contr \tau 
- \sum_{\bar \gamma_1 \in \CU_{\Gamma_1}}
\Pi_-^K\bigl((\gamma_2 +\hat \CA\pi\gamma_2) \cdot \CM(\hat \CA\pi \otimes \id) \CX_{\bar \gamma_1}\gamma_1\bigr) \\
&\qquad - \sum_{\bar \gamma_2 \in \CU_{\Gamma_2}}
\Pi_-^K\bigl((\gamma_1 +\hat \CA\pi\gamma_1) \cdot \CM(\hat \CA\pi \otimes \id) \CX_{\bar \gamma_2}\gamma_2\bigr) \\
&\qquad- \sum_{\bar \gamma_1 \in \CU_{\Gamma_1}}\sum_{\bar \gamma_2 \in \CU_{\Gamma_2}} \Pi_-^K\bigl(\CM(\hat \CA\pi \otimes \id) \CX_{\bar \gamma_1}\gamma_1 \cdot \CM(\hat \CA\pi \otimes \id) \CX_{\bar \gamma_2}\gamma_2\bigr) \\
&\qquad - \Pi_-^K\bigl(\gamma_2\cdot \hat \CA \pi\gamma_1\bigr) - \Pi_-^K\bigl(\gamma_1\cdot \hat \CA \pi\gamma_2\bigr)\;.
\end{equs}
At this stage, we differentiate again between the case in which $\deg\gamma_i \le 0$ for both $i$ and the
case in which one of the two has positive degree. (The case in which both have positive degree is
again trivial.) In the former case, $\pi\gamma_i = \gamma_i$ and one has
\begin{equ}
\gamma_2 +\hat \CA\pi\gamma_2 = -\sum_{\bar \gamma_2 \in \CU_{\Gamma_2}} \CM(\hat \CA\pi \otimes \id) \CX_{\bar \gamma_2}\gamma_2\;.
\end{equ}
In particular, the second and third terms are the same as the fourth, but with opposite sign and one has
\begin{equs}
\Pi_-^K \hat \CA \Contr \tau &= 
 \sum_{\bar \gamma_1 \in \CU_{\Gamma_1}}\sum_{\bar \gamma_2 \in \CU_{\Gamma_2}} \Pi_-^K\bigl(\CM(\hat \CA\pi \otimes \id) \CX_{\bar \gamma_1}\gamma_1 \cdot \CM(\hat \CA\pi \otimes \id) \CX_{\bar \gamma_2}\gamma_2\bigr) \\
&\qquad - \Pi_-^K (\gamma_1\cdot \gamma_2)  - \Pi_-^K\bigl(\gamma_2\cdot \hat \CA\gamma_1\bigr) - \Pi_-^K\bigl(\gamma_1\cdot \hat \CA \gamma_2\bigr)\\
&= \Pi_-^K \bigl((\hat \CA\gamma_1+\gamma_1)\cdot(\hat \CA\gamma_2+\gamma_2)\bigr) \\
&\qquad - \Pi_-^K (\gamma_1\cdot \gamma_2)  - \Pi_-^K\bigl(\gamma_2\cdot \hat \CA\gamma_1\bigr) - \Pi_-^K\bigl(\gamma_1\cdot \hat \CA \gamma_2\bigr) \\
&= \Pi_-^K \bigl(\hat \CA\gamma_1 \cdot \hat \CA\gamma_2\bigr)\;,
\end{equs}
as claimed. Consider now the case $\deg\gamma_1 > 0$. Then, the two terms containing $\hat \CA \pi\gamma_1$
vanish and we obtain similarly
\begin{equs}
\Pi_-^K \hat \CA \Contr \tau &= - \Pi_-^K \Contr \tau 
- \sum_{\bar \gamma_1 \in \CU_{\Gamma_1}}
\Pi_-^K\bigl(\gamma_1 \cdot \CM(\hat \CA\pi \otimes \id) \CX_{\bar \gamma_2}\gamma_2\bigr) 
- \Pi_-^K\bigl(\gamma_1\cdot \hat \CA \gamma_2\bigr)\\
&=- \Pi_-^K (\gamma_1\cdot\gamma_2) + \Pi_-^K \bigl(\gamma_1 \cdot (\hat \CA\gamma_2+\gamma_2)\bigr)
- \Pi_-^K\bigl(\gamma_1\cdot \hat \CA \gamma_2\bigr) = 0\;,
\end{equs}
as claimed, thus concluding the proof.
\end{proof}

As a consequence of this result, we have the following. Recall that $\CS_k^{(c)}$ is the space of 
translation invariant compactly supported (modulo translations) distributions in $k$ variables.
Given $x \in \S^k$, $y \in \S^\ell$, we also write $x\sqcup y = (x_1,\ldots,x_k,y_1,\ldots,y_\ell) \in \S^{k+\ell}$.  
For any $k,\ell \ge 1$, we 
then have a bilinear ``convolution operator'' $\star\colon \CS_k^{(c)} \times \CS_\ell^{(c)} \to \CS_{k+\ell-2}^{(c)}$
obtained by setting
\begin{equ}
\bigl(\eta \star \zeta\bigr)(x \sqcup y) = \int_\S 
\eta(x \sqcup z)\zeta(z \sqcup y)\,dz\;,\quad x \in \S^{k-1}, y \in \S^{\ell-1}\;,
\end{equ}
whenever $\eta$ and $\zeta$ are represented by continuous functions. It is straightforward to see 
that this extends continuously to all of $\CS_k^{(c)} \times \CS_\ell^{(c)}$, and that it coincides with
the usual convolution in the special case $k = \ell = 2$.

Similarly, we have a convolution operator
$\star\colon \CH_k \times \CH_\ell \to \CH_{k+\ell-2}$ obtained in the following way.
Let
$\Gamma \in \CT_k$ and $\bar \Gamma \in \CT_\ell$ be Feynman diagrams such that the label of the 
$k$th leg of $\Gamma$ and the first leg of $\bar \Gamma$ are both given by $\delta$. 
We then define $\Gamma \star \bar \Gamma \in \CT_{k+\ell-2}$ to be the Feynman diagram
with $k+\ell-2$ legs obtained by removing the $k$th leg of $\Gamma$ as well as the first leg
of $\bar \Gamma$, and identifying the two vertices these legs were connected to.
(We also need to relabel the legs of $\bar \Gamma$ accordingly.)
This operation extends to all of $\CH_k \times \CH_\ell$ by noting that given a Feynman diagram
$\Gamma \in \CT_k$, there always exists $\Gamma_n \in \CT_k$ with 
$\Gamma_n =  \Gamma$ in $\CH_k$ which is a linear combination of diagrams with label
$\delta$ on the $n$th leg: if the $n$th leg of $\Gamma$ has label $\delta^{(m)}$ with $m \neq 0$, 
one obtains $\Gamma_n$ by performing $|m|$ ``integrations by parts'' using \eqref{e:IBP}.
We then define in general $\Gamma \star \bar \Gamma$ by setting
$\Gamma \star \bar \Gamma \eqdef \Gamma_k \star \bar \Gamma_0$ and we can check that this is
indeed well-defined in $\CH_{k+\ell-2}$. We then have the following consequence of
Lemma~\ref{lem:prodglue}.

\begin{proposition}\label{prop:convolution}
The BPHZ valuation satisfies $\Pi_\BPHZ(\Gamma \star \bar \Gamma)
= \Pi_\BPHZ\Gamma \star \Pi_\BPHZ\bar \Gamma$.
\end{proposition}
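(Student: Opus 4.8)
The plan is to reduce the stated identity to two facts: that the \emph{unrenormalised} canonical valuation $\Pi^K$ already intertwines the two convolution products when $K$ is smooth, and that the renormalisation map $M^{\CR(K)}$ does so at the purely algebraic level. Continuity in $K$ then upgrades the statement to all of $\CK^-_0$.

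First, since $\Pi^K_\BPHZ$ depends continuously on $K$ by Theorem~\ref{theo:BPHZ} and the convolution $\star \colon \CS_p^{(c)} \times \CS_q^{(c)} \to \CS_{p+q-2}^{(c)}$ is jointly continuous, it suffices to prove the identity for smooth kernel assignments $K \in \CK^-_\infty$. For such $K$, write $g = \CR(K) = \Pi_-^K \hat\CA$, so that $\Pi^K_\BPHZ = \Pi^K \circ M^g$ by definition. I would then observe that $\Pi^K$ itself satisfies
\begin{equ}
\Pi^K(A \star B) = (\Pi^K A) \star (\Pi^K B)\;,\qquad A \in \CH_p,\ B \in \CH_q\;.
\end{equ}
Indeed, choosing representatives of $A$ and $B$ with a $\delta$-leg at the site to be glued, the integral \eqref{e:evalBis} attached to $A \star B$ runs over $\S^{\CV_\star}$ with a single shared gluing vertex; since $K$ is smooth all integrands are bounded and compactly supported, so Fubini applies, and integrating out the gluing variable last exhibits $\Pi^K(A\star B)$ as the convolution (in that variable) of $\Pi^K A$ and $\Pi^K B$, which is precisely the definition of $\star$ on $\CS^{(c)}$. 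It therefore remains to prove the algebraic identity
\begin{equ}
M^g(\Gamma \star \bar\Gamma) = (M^g\Gamma) \star (M^g\bar\Gamma)
\end{equ}
in $\CH_{k+\ell-2}$.

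To prove this I would expand both sides using the definition \eqref{e:coaction} of $\Delta$ and the multiplicativity of the character $g$. Write $v_0$ for the vertex of $\Gamma \star \bar\Gamma$ obtained by identifying the vertex carrying the $k$th leg of $\Gamma$ with the one carrying the first leg of $\bar\Gamma$. The key combinatorial observation is that a subgraph $\tilde\gamma \subset \Gamma \star \bar\Gamma$ is uniquely determined by the pair $(\gamma_1,\gamma_2)$ of its intersections with $\Gamma$ and with $\bar\Gamma$, the components of $\gamma_1$ and $\gamma_2$ through $v_0$ being merged into a single component of $\tilde\gamma$; accordingly $\d\tilde\gamma$, and hence the labels $\ell$ summed over in \eqref{e:coaction}, split as $\ell = (\ell_1,\ell_2)$ with $\ell_1$ supported on the half-edges adjacent to $\gamma_1$ in $\Gamma$ and $\ell_2$ on those adjacent to $\gamma_2$ in $\bar\Gamma$, once the two deleted legs are set aside. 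Since $g$ is multiplicative, the contribution of $\tilde\gamma$ factors over connected components, the only component coupling the two sides being the one straddling $v_0$, which is precisely $\Contr$ applied to the product of the corresponding components $\sigma_1 \subset \gamma_1$ and $\sigma_2 \subset \gamma_2$. For that component Lemma~\ref{lem:prodglue} gives $g\bigl(\Contr(\sigma_1\sigma_2)\bigr) = g(\sigma_1)\,g(\sigma_2)$, and in particular (again by Lemma~\ref{lem:prodglue}, as recorded in its proof) this vanishes as soon as one of $\sigma_1,\sigma_2$ has strictly positive degree, matching the constraint $(\gamma_i,\pi\ell_i)\in\CT_-$ imposed on the sums defining $M^g\Gamma$ and $M^g\bar\Gamma$. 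Matching the prefactors $(-1)^{|\out\ell|}/\ell!$ and the quotient diagrams, using that $(\Gamma\star\bar\Gamma)/(\tilde\gamma,\ell) = \bigl(\Gamma/(\gamma_1,\ell_1)\bigr) \star \bigl(\bar\Gamma/(\gamma_2,\ell_2)\bigr)$, then yields the claimed identity.

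The point requiring genuine care — and where I expect the argument to be most delicate — is the bookkeeping around the two deleted legs. In the expansion of $M^g\Gamma$ one also sums over labels $\ell_1$ that are non-trivial on the half-edge of the $k$th leg incident to its vertex; these produce a diagram $\Gamma/(\gamma_1,\ell_1)$ whose $k$th leg carries a label $\delta^{(m)}$ with $m \ne 0$, and $\star$ on $\CH$ then redistributes this derivative onto the remaining edges incident to that vertex by integration by parts via \eqref{e:IBP}. One has to check that, once this redistribution is carried out at $v_0$ in the glued diagram and combined with the mirror contribution coming from the first leg of $\bar\Gamma$, one recovers exactly the labels supported on $\d\tilde\gamma$ at $v_0$ in the expansion of $M^g(\Gamma\star\bar\Gamma)$, with the correct combinatorial weights. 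This is a Chu--Vandermonde type identity of the same flavour as those already used in the proofs of Propositions~\ref{prop:Delta} and~\ref{prop:coass}, and I would handle it in the same way. Finally, having established the identity for $K \in \CK^-_\infty$, one extends it to all $K \in \CK^-_0$ by density and the continuity statements invoked at the start.
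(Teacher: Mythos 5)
Your proposal is correct and follows essentially the same route as the paper: reduce to the multiplicativity of the canonical valuation under $\star$ for smooth $K$, then establish the algebraic intertwining of the renormalisation map with $\star$ via the bijection between subgraphs of $\Gamma\star\bar\Gamma$ and pairs of subgraphs, with Lemma~\ref{lem:prodglue} supplying the key factorisation $g(\Contr(\sigma_1\sigma_2))=g(\sigma_1)g(\sigma_2)$ for the component straddling the gluing vertex. The paper packages the combinatorial bijection as the single identity $(\Contr\otimes\id)\hat\Delta(\Gamma\star\bar\Gamma)=(\Contr\CM\otimes\CM^\star)(\id\otimes\tau\otimes\id)(\hat\Delta\Gamma\otimes\hat\Delta\bar\Gamma)$ and is terser about the leg-label bookkeeping you flag, but the content is the same.
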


\begin{proof}
Write $\CM^\star \colon \CH \otimes \CH \to \CH$ for the convolution operator
introduced above and note that the canonical valuation $\Pi$ (we suppress the dependence on $K$)
does satisfy the property of the statement. It therefore suffices to show that one has the identity
\begin{equ}[e:wanted]
(\Pi_-\hat \CA \otimes \id)\Delta \CM^\star
=
\CM^\star \bigl((\Pi_-\hat \CA \otimes \id)\Delta \otimes
(\Pi_-\hat \CA \otimes \id)\Delta\bigr)
\end{equ}
between maps $\CH \otimes \CH \to \CH$.

Suppose that $\Gamma\in\CT_k$ and $\bar \Gamma \in \CT_\ell$,
write $v$ for the vertex of $\Gamma$ adjacent to the $k$th leg, and let $\bar v$ be the 
vertex of $\bar \Gamma$ adjacent to its first leg.
Fix furthermore an arbitrary map $\sigma\colon (\CV_\star \sqcup \bar \CV_\star)/\{v,\bar v\} \to \N$
which is injective and such that $\sigma(v) = 0$.
Since internal edges of $\Gamma \star \bar \Gamma$ are in bijection with the disjoint
union of the internal edges of $\Gamma$ and those of $\bar \Gamma$, we
have an obvious bijection between subgraphs $\gamma$ of $\Gamma \star \bar \Gamma$ 
and pairs $(\gamma_1,\gamma_2)$ of subgraphs of $\Gamma$ and $\bar \Gamma$.
We also have a natural choice of distinguished vertex for each connected subgraph of 
$\Gamma$, $\bar \Gamma$ or $\Gamma \star \bar \Gamma$ by choosing the vertex with the lowest
value of $\sigma$. 
If we then write $\hat \Delta \tau \in \hat \CT_- \otimes \CH$ for the right hand side of \eqref{e:coaction}
with this choice of distinguished vertices, then we see that 
\begin{equ}
(\Contr \otimes \id)\hat \Delta (\Gamma \star \bar \Gamma) = (\Contr\CM \otimes \CM^\star)(\id \otimes \tau \otimes\id) (\hat\Delta \Gamma \otimes \hat \Delta \bar \Gamma)\;,
\end{equ}
where $\tau \colon \hat \CT_- \otimes \CH \to \CH \otimes \hat \CT_-$ is the map that exchanges 
the two factors. Applying $\Pi_-\hat \CA \pi$ to both sides and making use of 
Lemma~\ref{lem:prodglue}, the required identity \eqref{e:wanted} follows at once.
\end{proof}

\begin{wrapfigure}{R}{5cm}
\begin{center}
\vspace{-1em}
\begin{tikzpicture}[style=thick,scale=0.85]
	\draw[fill=blue!5]  plot[smooth cycle, tension=.7] coordinates {(-1.5,0) (-1.3,1.5) (-0.5,1) (2,1.5) (1.1,-1) (0,-0.2) (-1,-.8)};
	\draw[fill=green!5]  plot[smooth cycle, tension=.7] coordinates {(2,1.5) (2,2.3) (3,2)};
\node[dot,boundary] (l) at (2,1.5) {};
\draw (l) -- ++(35:0.3);
\draw (l) -- ++(80:0.3);
\draw (l) -- ++(-120:0.5);
\draw (l) -- ++(-160:0.5);
\node[dot] (l1) at (1.1,-1) {};
\node[dot] (l2) at (-1.5,0) {};
\draw[red] (l1) -- ++(-50:0.4);
\draw (l1) -- ++(90:0.5);
\draw (l1) -- ++(120:0.5);
\draw[red] (l2) -- ++(180:0.4);
\draw (l2) -- ++(60:0.5);
\draw (l2) -- ++(0:0.5);
\draw (l2) -- ++(-50:0.5);
\node at (0.2,0.4) {$\Gamma_0$};
\node at (2.4,1.9) {$\gamma$};
\end{tikzpicture}
\end{center}
\vspace{-1em}
\caption{Generalised self-loop.}\label{fig:collapse}
\vspace{0.5em}
\end{wrapfigure}
Consider the situation of a Feynman diagram $\Gamma$ containing a vertex $v$
and a subgraph $\gamma$ which is a ``generalised self-loop at $v$'' in
the sense that
\begin{claim}
\item The vertex $v$ is the only vertex of $\gamma$ that is adjacent to any edge
not in $\gamma$.
\item No leg of $\Gamma$ is adjacent to any vertex of $\gamma$, except possibly for 
$v$. 
\end{claim}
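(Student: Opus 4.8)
The plan is to verify the two incidence conditions (i)--(ii) directly; both reduce to the single observation that $\gamma$ is pinched off from the rest of $\Gamma$ at the vertex $v$. I would first fix notation: write $\CV_\gamma$ and $\CE_\gamma$ for the vertices and edges of $\gamma$, let $\Gamma_0$ be the full subgraph of $\Gamma$ on the internal vertices not in $\CV_\gamma\setminus\{v\}$, and record the defining feature of the configuration under consideration, namely that $\gamma$ is connected and that deleting $v$ (together with its incident half-edges) from $\Gamma$ leaves the vertices of $\CV_\gamma\setminus\{v\}$ in one connected component containing no other vertex of $\Gamma$. Equivalently, $\gamma$ is a full connected subgraph for which $v$ is a cut vertex in this strong sense, which is precisely the picture of Figure~\ref{fig:collapse} (and is automatic if $\gamma$ was attached to $\Gamma_0$ at $v$, as the figure suggests).

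For (i), I would argue by contradiction: if some $u\in\CV_\gamma\setminus\{v\}$ were incident to an internal edge $e\notin\CE_\gamma$, then, $\gamma$ being full, $e$ would have to join $u$ to a vertex outside $\CV_\gamma$; since $e$ has neither endpoint equal to $v$, deleting $v$ would not separate $u$ from that vertex, contradicting the cut-vertex property. Hence $v$ is the only vertex of $\gamma$ adjacent to an edge not in $\gamma$. For (ii), recall that a leg $([j]\to v_j)$ of $\Gamma$ is adjacent to exactly one internal vertex, its attachment vertex $v_j$, while $[j]$ is a leg-vertex and hence not a vertex of $\gamma$ (and $v\in\CV_\star$ is not a leg-vertex either). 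If $v_j\in\CV_\gamma\setminus\{v\}$ for some $j$, then again deleting $v$ would fail to separate $v_j$ from $[j]$, which is excluded by the same property; so every leg incident to a vertex of $\gamma$ is in fact incident to $v$, which is exactly (ii).

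The step I expect to be the main obstacle is the bookkeeping around the vertex $v$ itself. Once $\gamma$ is incorporated into $\Gamma$, $v$ will in general be incident both to edges and legs lying in $\gamma$ and to edges and legs lying in $\Gamma_0$, so one must be careful that the phrase ``not in $\gamma$'' in (i) and the exception ``except possibly for $v$'' in (ii) are read with the conventions of the paper (a subgraph is determined by its edge set, and an edge is ``adjacent'' to $\gamma$ when it is not an edge of $\gamma$ but shares a vertex with one) so that $v$ really is the sole exception in both. I would also dispose of the degenerate cases separately: $\gamma$ may be a genuine self-loop at $v$ (so $\CV_\gamma=\{v\}$), $v$ may itself be a leg-vertex of $\Gamma_0$, or $\Gamma_0$ may carry several legs at $v$; in each of these (i)--(ii) hold either vacuously or by inspection. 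Once (i)--(ii) are established, $\gamma$ meets $\Gamma_0$ only at $v$, exactly as in Figure~\ref{fig:collapse}, which is the structural input the remainder of the discussion depends on.
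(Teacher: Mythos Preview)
You have misread the role of these two items. In the paper, the \verb|claim| environment is being used as an itemised list, and conditions (i)--(ii) are the \emph{definition} of what it means for $\gamma$ to be a ``generalised self-loop at $v$''. They are hypotheses being introduced, not conclusions to be derived: the paper says ``Consider the situation of a Feynman diagram $\Gamma$ containing a vertex $v$ and a subgraph $\gamma$ which is a generalised self-loop at $v$ \emph{in the sense that} \ldots'', and then lists (i) and (ii). The subsequent discussion (collapsing $\gamma$ to $v$ to obtain $\Gamma_0$, and the identity $\Pi_\BPHZ\Gamma = c_\gamma \Pi_\BPHZ\Gamma_0$) takes (i)--(ii) as given.

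Your proposal inverts this: you posit a separate ``cut-vertex'' property and then attempt to deduce (i)--(ii) from it. But there is no prior structure in the paper from which (i)--(ii) are meant to follow; your ``defining feature of the configuration'' is something you have introduced yourself, and it is essentially a paraphrase of (i)--(ii) anyway. So there is no proof to give here, and the paper gives none.
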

We then obtain a new diagram $\Gamma_0$ by collapsing all of $\gamma$ onto the vertex $v$,
as illustrated in Figure~\ref{fig:collapse}, where the vertex $v$ is indicated 
in green and legs are drawn in red.

As a consequence of Proposition~\ref{prop:convolution}, we conclude that 
in such a situation there exists a constant $c_\gamma \in \R$ such that 
\begin{equ}
\Pi_\BPHZ \Gamma = c_\gamma \Pi_\BPHZ \Gamma_0\;,  
\end{equ}
and that furthermore $c_\gamma = 0$ as soon as $\deg\gamma \le 0$ as a consequence of Proposition~\ref{prop:poly}.
One particularly important special case is that of actual self-loops, where
$\gamma$ consists of a single edge connecting $v$ to itself,
thus showing that
$\Pi_\BPHZ \Gamma = 0$ for every $\Gamma$ containing self-loops since
the degree of a self-loop of type $\Labhom$ is given by $\deg\Labhom$, which is always negative.

Finally, it would also appear natural to restrict the sums in \eqref{e:coaction} and \eqref{e:coprod} to
subgraphs $\bar \Gamma$ that are c-full in $\Gamma$ (in the sense that each connected component of $\bar \Gamma$ 
is a full subgraph of $\Gamma$), especially in view of the proof of the BPHZ theorem where
we saw that the ``dangerous'' connected subgraphs are always the full ones. We can then perform the exact same
steps as before, including the construction of a corresponding twisted antipode and the verification of
the forest formula. Writing $\hat \CF_\Gamma^-$ for the subset of $\CF_\Gamma^-$ consisting of
forests $\CF$ such that each $\gamma \in \CF$ is a full subgraph of its parent $\CA(\gamma)$ 
(as usual with the convention that the parent of the maximal elements is $\Gamma$ itself), it is therefore
natural in view of \eqref{e:forest} to define a valuation
\begin{equ}[e:forestFormulaFull]
\Pi_\BPHZ^\full \Gamma = (\Pi_-\otimes \Pi)\sum_{\CF \in \hat \CF_\Gamma^-} (-1)^{|\CF|} \CC_\CF \Gamma\;,
\end{equ}
where $\Pi$ and $\Pi_-$ are the canonical valuations associated to some $K \in \CK^-_\infty$.
It turns out that, maybe not so surprisingly in view of Proposition~\ref{prop:poly}, this actually 
yields the exact same valuation:

\begin{proposition}\label{prop:full}
One has $\Pi_\BPHZ^\full = \Pi_\BPHZ$.
\end{proposition}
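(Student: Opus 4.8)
The plan is to build, step by step, the ``c-full'' analogue of the whole construction of Sections~\ref{sec:BPHZ}--\ref{sec:twisted}, thereby obtaining the forest formula \eqref{e:forestFormulaFull} together with the fact that $\Pi_\BPHZ^\full$ is a consistent renormalisation procedure, and then to identify it with $\Pi_\BPHZ$ by showing that the extra (non-c-full) terms of $\Delta$ drop out after renormalisation. First I would introduce a restricted coaction $\Delta^\full\colon \CH\to\CH_-\otimes\CH$ and a restricted coproduct $\Delta^{\!-,\full}\colon \hat\CH_-\to\CH_-\otimes\hat\CH_-$, given by the same formulas \eqref{e:coaction} and \eqref{e:coprod} but with the sums over $\bar\Gamma$ restricted to subgraphs whose connected components are all full in $\Gamma$. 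One checks that these are well-defined on the relevant quotients and satisfy the coassociativity identities of Proposition~\ref{prop:coass}: the proofs of Propositions~\ref{prop:Delta} and~\ref{prop:coass} only ever manipulate edge-labels, node-decorations, positions of distinguished vertices and combinatorial factors (via \eqref{e:moveLeg}, \eqref{e:IBP}, \eqref{e:IBProot} and Chu--Vandermonde), none of which interferes with c-fullness; the one genuinely new input is that c-fullness is transitive under nested extraction and is preserved under the quotient operation, which is exactly what makes the natural bijection used in the proof of Proposition~\ref{prop:coass} restrict correctly.

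This yields a restricted Hopf algebra structure and a restricted twisted antipode $\hat\CA^\full\colon\CH_-\to\hat\CH_-$, characterised by $\CM(\hat\CA^\full\otimes\id)\Delta^{\!-,\full}\Gamma=0$ for every non-empty connected vacuum diagram with $\deg\Gamma\le0$ (existence, uniqueness and multiplicativity as in the lemma preceding the definition of BPHZ renormalisation, via the Neumann series \eqref{e:reprAhat} with the restricted reduced coproduct). Since $\CR^\full\Gamma\eqdef\sum_{\CF\in\hat\CF_\Gamma^-}(-1)^{|\CF|}\CC_\CF\Gamma$ only involves c-full subgraphs, it is of the form $(\CB^\full\otimes\id)\Delta^\full\Gamma$, and the verbatim pairing argument of Proposition~\ref{prop:forest} (split $\hat\CF_\Gamma^-$ according to whether it contains $\Gamma$, use $\CC_\Gamma\Gamma=\Gamma$) shows $\CR^\full\Gamma=0$ for connected vacuum $\Gamma$ with $\deg\Gamma\le0$, whence $\CB^\full=\hat\CA^\full$ and $\Pi_\BPHZ^\full=\Pi^K\circ(\Pi_-^K\hat\CA^\full\otimes\id)\Delta^\full$, as in \eqref{e:forestFormulaFull}. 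Moreover this valuation is consistent exactly as in Proposition~\ref{prop:consistent}: Properties~\ref{prop:conv} and~\ref{prop:IBP} are built into $\CH$ and $\CH_-$, Property~\ref{prop:restr} holds because $\Delta^\full$ still commutes with leg-amputation, and Property~\ref{prop:hom} because $\hat\CA^\full$ is a character. Running the computation in the proof of Proposition~\ref{prop:poly} with $\Delta,\Deltam,\hat\CA$ replaced by their restricted counterparts also gives, for $\Gamma$ connected with $k$ legs and $\deg\Gamma\le0$ and $\phi$ of the form considered there,
\[
\bigl(\Pi_\BPHZ^\full\Gamma\bigr)(\phi)=\binom{\Labn}{\bar\Labn}\,\Pi_-^K\CM\bigl(\hat\CA^\full\otimes\id\bigr)\Delta^{\!-,\full}\bigl(\bar\Gamma,v_\star,\pi(\Labn-\bar\Labn)\bigr)=0\,,
\]
so $\Pi_\BPHZ^\full$ has the same polynomial-vanishing property that characterises $\Pi_\BPHZ$.

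It then remains to show $\Pi_\BPHZ^\full=\Pi_\BPHZ$. Comparing the two forest formulas, this is equivalent to $\sum_{\CF\in\CF_\Gamma^-\setminus\hat\CF_\Gamma^-}(-1)^{|\CF|}(\Pi_-^K\otimes\Pi^K)\CC_\CF\Gamma=0$, i.e.\ the forests containing a subgraph that is not full in its parent contribute nothing after applying $\Pi_-^K\otimes\Pi^K$. The mechanism is the one behind the corollary to Proposition~\ref{prop:convolution}: if a component $\gamma_0$ of such a subgraph fails to be full, there is an edge $e$ of $\Gamma$ joining two vertices of $\gamma_0$ but lying outside the subgraph, and contracting $\gamma_0$ turns $e$ into a generalised self-loop; since both $\Pi^K$ and $\Pi_-^K$ of a diagram containing a self-loop factorise as the (derivative of the) kernel evaluated at $0$ times the value of the diagram with that self-loop removed, the contribution of a given $\CF$ coincides with that of the forest obtained by inserting the full hull $\widehat\gamma_0$ of $\gamma_0$ as the parent of $\gamma_0$, which enters with the opposite sign. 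Together with the identification above and the uniqueness part of Proposition~\ref{prop:poly} (applied to $\Pi_\BPHZ^\full$, now exhibited as a valuation of the required type), this gives $\Pi_\BPHZ^\full=\Pi_\BPHZ$.

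The step I expect to be the main obstacle is precisely this last cancellation: organising the non-c-full forests into cancelling pairs by a genuine involution rather than a partial matching, the difficulty being that passing to full hulls can re-introduce non-fullness further down the forest, so the naive ``insert the hull'' map is not self-inverse. I would handle this by induction on the number of edges of $\Gamma$, peeling off a maximal non-full element, using the self-loop factorisation of the valuation together with Remark~\ref{rem:vanish} (which kills the spurious node-decorations that could otherwise appear on the contracted self-loop bundle), and invoking the inductive hypothesis on the smaller diagram obtained after contracting it; everything else is a line-by-line transcription of the arguments of Propositions~\ref{prop:Delta}--\ref{prop:poly}.
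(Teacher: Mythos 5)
You correctly reduce the statement to the vanishing of $\sum_{\CF\in\CF_\Gamma^-\setminus\hat\CF_\Gamma^-}(-1)^{|\CF|}(\Pi_-\otimes\Pi)\CC_\CF\Gamma$, and you identify the right mechanism: inserting the closure $\gamma^\cl$ of a non-full element does not change the value of the contracted diagram (the extra edges become self-loops at the distinguished vertex) but flips the sign. However, the step you yourself flag as the main obstacle is a genuine gap, and your proposed fix (induction on the number of edges, ``peeling off'' a maximal non-full element) does not close it. The paper uses neither an involution nor such an induction; it partitions $\CF_\Gamma^-\setminus\hat\CF_\Gamma^-$ into \emph{cubes}: writing $\CF=\CF^\full\cup\CF^p$ with $\CF^p$ the part none of whose roots is full in $\Gamma$, $\CF^p_\mmax$ for those roots, $\overline{\CF^p}=\{\gamma^\cl:\gamma\in\CF^p_\mmax\}$ and $\CF^\full_\circ=\CF^\full\setminus\overline{\CF^p}$, the fibre of $\CF\mapsto(\CF^p,\CF^\full_\circ)$ is exactly $\{\CF^p\cup\CF^\full_\circ\cup\CB\,:\,\CB\subset\overline{\CF^p}\}$, and $\sum_{\CB}(-1)^{|\CB|}=0$. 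This reduces everything to showing that $(\Pi_-\otimes\Pi)\CC_{\CF^p\cup\CF^\full_\circ\cup\CB}\Gamma$ is independent of $\CB$, which is where your self-loop observation enters --- but in the precise form that $\hat\CC_\gamma\Gamma$ and $\hat\CC_\gamma\hat\CC_{\gamma^\cl}\Gamma$ consist of the same terms (same graphs, and the same edge- and node-labels after a Chu--Vandermonde resummation), differing only in the bookkeeping function $\c$, which is irrelevant because no further contraction in $\CF^p\cup\CF^\full_\circ\cup\CB$ acts on a subgraph strictly between some $\gamma$ and its $\gamma^\cl$. Your appeal to ``self-loop factorisation of the valuation'' skips the matching of the decorations $\ell$ and $\Labn$ generated by the two contraction orders; Remark~\ref{rem:vanish} does not cover this.

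A second problem is that your fallback via the uniqueness part of Proposition~\ref{prop:poly} is circular as stated: that characterisation applies to valuations of the form $(g\otimes\Pi^K)\Delta$ with the \emph{unrestricted} coaction, whereas you have only exhibited $\Pi_\BPHZ^\full$ as $(\Pi_-^K\hat\CA^\full\otimes\Pi^K)\Delta^\full$ with the restricted one; rewriting the latter in the former shape is essentially the content of the proposition being proved. Relatedly, the entire first half of your argument (restricted coproduct, restricted twisted antipode, restricted forest formula, consistency) is not needed: \eqref{e:forestFormulaFull} is the \emph{definition} of $\Pi_\BPHZ^\full$, and Proposition~\ref{prop:forest} already provides the unrestricted forest formula for $\Pi_\BPHZ$, so the proposition is exactly the cancellation statement above and nothing more.
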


\begin{proof}
In order to show that 
\begin{equ}
(\Pi_-\otimes \Pi)\sum_{\CF \in  \CF_\Gamma^-\setminus \hat\CF_\Gamma^-} (-1)^{|\CF|} \CC_\CF \Gamma = 0\;,
\end{equ}
we will partition $\CF_\Gamma^-\setminus \hat\CF_\Gamma^-$ into sets such that the 
above sum vanishes, when restricted to any of the sets in the partition. 
In order to formulate our construction, given $\gamma \in \CG_\Gamma^-$, we write
$\gamma^\cl \in \CG_\Gamma^-$ for the ``closure'' of $\gamma$ in $\Gamma$, i.e.\ the full
subgraph of $\Gamma$ with the same vertex set as $\gamma$.
For $\CF \in \CF_\Gamma^-\setminus \hat\CF_\Gamma^-$,
we then have a unique decomposition $\CF = \CF^\full \cup \CF^p$ such that 
each $\gamma \in \CF^\full$ is full in $\Gamma$, no element of $\CF^\full$ is 
contained in an element of $\CF^p$, and no root of
$\CF^p$ is full in $\Gamma$.

Write $\CF^p_\mmax$ for the set of roots of $\CF^p$ and set
\begin{equ}
\overline{\CF^p} = \{\gamma^\cl\,:\, \gamma \in \CF^p_\mmax\}\;.
\end{equ}
In general, one may have $\CF^\full \cap \overline{\CF^p} \neq \emptyset$, so
we also set $\CF^\full_\circ = \CF^\full \setminus \overline{\CF^p}$.
If we write $\NN\colon \CF \mapsto (\CF^p,\CF^\full_\circ)$, then we see that 
the preimage of $(\CF^p,\CF^\full_\circ)$ under $\NN$ consists of all forests of the form
$\CF^p \cup \CF^\full_\circ \cup \CB$, where $\CB$ is an arbitrary subset of $\overline{\CF^p}$.
Furthermore, $\CF_\Gamma^-\setminus \hat\CF_\Gamma^-$ consists precisely of those forests
$\CF$ such that $\CF^p \neq \emptyset$. 
Since $\sum_{\CB \subset \overline{\CF^p}} (-1)^{|\CB|} = 0$, it thus remains to show that 
the quantity
\begin{equ}[e:quantityB]
(\Pi_-\otimes \Pi) \CC_{\CF^p \cup \CF^\full_\circ \cup \CB} \Gamma
\end{equ}
is independent of $\CB \subset \overline{\CF^p}$.

To see that this is the case, consider the space $\hat \CT_\Gamma$ and
the operators $\hat \CC_\CF$ as in the proof of the BPHZ theorem and denote  
by $\hat \Pi\colon \hat \CT_\Gamma \to \CS$ the composition of $\Pi\colon \hat \CT \to \CS$ with the natural
injection $\hat \CT_\Gamma \hookrightarrow \hat \CT$. One then has for every forest $\CG$ the identity
\begin{equ}
(\Pi_-\otimes \Pi) \CC_{\CG} \Gamma = \hat \Pi \hat \CC_{\CG} \Gamma\;,\qquad \hat \CC_\CG \eqdef \prod_{\gamma \in \CG} \hat \CC_\gamma\;.
\end{equ}
(As already pointed out before, the order of the operations does not matter here.)
Let now $\gamma \in \CG_\Gamma^-$ and consider 
the elements $\hat \CC_\gamma \Gamma$ and $\hat \CC_\gamma\hat \CC_{\gamma^\cl}\Gamma$.
It follows from the definition of the operators $\hat \CC_\gamma$ that 
all the terms appearing in both expressions consist of the same graph where edges in $\Gamma \setminus \gamma^\cl$ 
adjacent to $\gamma^\cl$ are reconnected to the distinguished vertex $v_\star$ of $\gamma$ and the edges in
$\gamma^\cl$ that are not in $\gamma$ are turned into self-loops for $v_\star$. 

Regarding the edge and vertex-labels $\ell$ and $\Labn$ generated by these operations,
a straightforward application of the Chu-Vandermonde theorem shows that they yield
the exact same terms in both cases.
The only difference
is that the function $\c$ is equal to $1$ on $\gamma$ in the first case, while it equals $2$ on $\gamma$
and $1$ on edges of $\gamma^\cl$ that are not in $\gamma$  in the second case.
This however would only make a difference if we were to compose this with an operator
of the type $\hat \CC_{\bar \gamma}$ for some $\bar \gamma$ with $\gamma \subset \bar \gamma \subset \gamma^\cl$.
In our case however, we only use this in order to compare
$\CC_{\CF^p \cup \CF^\full_\circ \cup \CB}$ to $\CC_{\CF^p \cup \CF^\full_\circ}$, so that 
we consider the situation $\gamma \in \CF^p_\mmax$. Since these graphs are all vertex-disjoint, 
it follows that $(\prod_{\gamma \in \CF^p_\mmax} \hat \CC_\gamma)\Gamma$
and $(\prod_{\gamma \in \CF^p_\mmax \cup \CB} \hat \CC_\gamma)\Gamma$ only differ by the value of $\c$ 
in the way described above.

Our construction of the sets
$\CF_\circ^\full$ and $\CF^p$ then guarantees that this discrepancy is irrelevant when
further applying $\hat\CC_{\bar \gamma}$ for $\bar \gamma \in \CF_\circ^\full \cup (\CF^p \setminus \CF^p_\mmax)$,
so that \eqref{e:quantityB} is indeed independent of $\CB$ as claimed.
\end{proof}

\section{Large-scale behaviour}
\label{sec:largeScale}

We now consider the case of kernels $K_\Labhom$ that don't have compact support.
In order to encode their behaviour at infinity, we assign to each label $\Labhom \in \Lab$
a second degree $\deg_\infty\colon \Lab \to \R_- \cup \{-\infty\}$ with 
$\deg_\infty\delta^{(k)} = -\infty$ and satisfying this time the consistency condition
$\deg_\infty \Labhom^{(k)} = \deg_\infty \Labhom$.\footnote{It would have looked more natural to impose
the stronger condition $\deg_\infty \Labhom^{(k)} = \deg_\infty \Labhom - |k|$ as before. One may further
think that in this case one would be able to extend Theorem~\ref{theo:infinity} to all diagrams $\Gamma$,
not just those in $\CH_+$. This is wrong in general, although we expect it to be true after
performing a suitable form of positive renormalisation as in \cite{BHZalg,BPHZana}. This is not performed
here, and as a consequence we are unable to take advantage of the additional large-scale cancellations
that the stronger condition $\deg_\infty \Labhom^{(k)} = \deg_\infty \Labhom - |k|$ would offer.
} We furthermore
assume that we are given a collection of smooth kernels $R_\Labhom \colon \R^d \to \R$
for $\Labhom \in \Lab_\star$ satisfying the bounds
\begin{equ}[e:boundInfinity]
|D^k R_\Labhom(x)| \lesssim (2+|x|)^{\deg_\infty\Labhom}\;,
\end{equ}
for all multiindices $k$, uniformly over all $x \in \R^d$, and such that
\begin{equ}[e:compatR]
R_{\Labhom^{(k)}} = D^kR_\Labhom\;.
\end{equ}
Similarly to before, we extend this to $\Lab$ by using the convention $R_{\delta^{(m)}} \equiv 0$ and we write
$\CK^+_\infty$ for the set of all smooth \textit{compactly supported} kernel 
assignments $\Labhom \mapsto R_\Labhom$, as well as $\CK^+_0$ for its closure
under the system of seminorms defined by \eqref{e:boundInfinity}.

Consider then the formal expression \eqref{e:evalBis}, but with each instance of 
$K_\Labhom$ replaced by $G_\Labhom = K_\Labhom + R_\Labhom$. 
The aim of this section is to exhibit a sufficient condition on $\Gamma$ which 
guarantees that this expression can also be renormalised, using the same procedure as 
in the previous sections. The conditions we require in Theorem~\ref{theo:infinity} below
can be viewed as a large-scale
analogue to the conditions of Weinberg's theorem. They are required because, unlike
in \cite{BHZalg,BPHZana}, we do not perform any ``positive renormalisation'' in
the present article.

To formulate our main result, we introduce the following construction. Given a Feynman
diagram $\Gamma$ with at least one edge, consider a partition $\CP_\Gamma$ of its 
inner vertex set, i.e.\ elements of $\CP_\Gamma$ are non-empty subsets of $\CV_\star$ 
and $\bigcup \CP_\Gamma = \CV_\star$. We always consider the case where the partition
$\CP_\Gamma$ consists of at least two subsets, in other words $|\CP_\Gamma| \ge 2$.
Given such a partition, we then set
\begin{equ}
\deg_\infty \CP_\Gamma \eqdef \sum_{e \in \CE(\CP_\Gamma)}\Labhom(e) + d(|\CP_\Gamma|-1)\;,
\end{equ}
where $\CE(\CP_\Gamma)$ consists of all internal edges $e \in \CE_\star$ such that 
both ends $e_+$ and $e_-$ are contained in different elements of $\CP_\Gamma$.
Note the strong similarity to \eqref{e:degreeSubgraph}, which is of
course not a coincidence. We will call a partition $\CP_\Gamma$ ``tight'' if
there exists one single element $A \in \CP_\Gamma$ containing all of the vertices
$v_{i,\star} \in \CV_\star$ that are connected to legs of $\Gamma$.

Given $K$ and $R$ in $\CK_\infty^-$ and $\CK_\infty^+$ respectively, we furthermore 
define a valuation $\Pi^{K,R}$ by setting as in \eqref{e:evalBis}
\begin{equ}[e:PiKR]
\bigl(\Pi^{K,R} \Gamma\bigr)(\phi) = \int_{\S^{\CV_\star}} \prod_{e \in \CE_\star} G_{\Labhom(e)}(x_{e_+} - x_{e_-}) 
\bigl(D_1^{\ell_1}\cdots D_k^{\ell_k} \phi\bigr)(x_{v_1},\ldots,x_{v_k})\,dx\;,
\end{equ}
where we used again the notation $G_\Labhom = K_\Labhom + R_\Labhom$.
We then have the following result which is the analogue in this context of 
Proposition~\ref{prop:WeinFancy}. 

\begin{proposition}\label{prop:largeScale}
Let $\Gamma$ be such that every tight partition $\CP_\Gamma$ of its inner vertices
satisfies $\deg_\infty \CP_\Gamma < 0$. Then, the map 
$(K,R) \mapsto \Pi^{K,R}\Gamma$ extends continuously to all of $(K,R) \in \CK_\infty^-\times \CK_0^+$.
\end{proposition}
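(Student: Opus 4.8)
The plan is to rerun the multiscale (Hepp sector) argument of Proposition~\ref{prop:WeinFancy}, this time keeping track also of the ``large distance'' direction, which is no longer killed by the compact support of the kernels but is instead controlled by the hypothesis on tight partitions.

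First I would reduce to an estimate. Using $K_\Labhom=0$ outside the unit ball together with \eqref{e:propKer} and \eqref{e:boundInfinity}, one has the uniform bound $|G_\Labhom(x)|\lesssim(|K_\Labhom|_0+\|R_\Labhom\|_0)\,w_\Labhom(|x|)$, where $|K_\Labhom|_0$ and $\|R_\Labhom\|_0$ denote the best constants in \eqref{e:propKer} and \eqref{e:boundInfinity} for $k=0$ and $w_\Labhom(r)=r^{\deg\Labhom}$ for $r\le1$, $w_\Labhom(r)=r^{\deg_\infty\Labhom}$ for $r>1$. Since $\bigl(\Pi^{K,R}\Gamma\bigr)(\phi)$ is obtained by integrating $\prod_{e\in\CE_\star}G_{\Labhom(e)}(x_{e_+}-x_{e_-})$ against finitely many derivatives of $\phi$, it suffices to prove
\[
\bigl|\bigl(\Pi^{K,R}\Gamma\bigr)(\phi)\bigr|\;\lesssim_\Gamma\;\prod_{e\in\CE_\star}\bigl(|K_{\Labhom(e)}|_0+\|R_{\Labhom(e)}\|_0\bigr)\;\sup_{|k|\le N_\Gamma}\|D^k\phi\|_{L^\infty}
\]
for all $\phi$ supported in the unit ball: the left side is a polynomial of degree $|\CE_\star|$ in the kernels and $\CK^+_\infty$ is dense in $\CK^+_0$, so such a bound produces the continuous extension by the usual extension of bounded multilinear maps. (The small-scale part of this estimate is exactly the one treated in Proposition~\ref{prop:WeinFancy}, so I take for granted the standing assumption there, namely that $\Gamma$ carries no ultraviolet divergence, i.e.\ $\deg\bar\Gamma>0$ for every $\bar\Gamma\subset\Gamma$, which is in any case needed for continuity in the $K$-variable.)

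As in the proof of Proposition~\ref{prop:WeinFancy} I would cut $\S^{\CV_\star}$ into Hepp sectors $D_\bT$, $\bT=(T,\bn)$, the only change being that the labels $\bn_u$ are now allowed to be arbitrarily \emph{negative}. On $D_\bT$ the integrand is bounded, up to the prefactor above, by $\prod_{e\in\CE_\star}w_{\Labhom(e)}(2^{-\bn(e)})$ and $|D_\bT|\lesssim\prod_{u\in T}2^{-d\bn_u}$, exactly as before. The only input from $\phi$ is that, being supported in the unit ball, it forces all leg-vertices into a bounded set, whence $\bn_u\ge-C_0$ at every node $u$ that separates two leg-vertices of $\Gamma$; I call the remaining nodes -- those whose subtree contains either none or all of the leg-vertices -- \emph{unpinned}, and observe that they form an up-closed subset $U_\star\subset T$. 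Splitting the product over $e$ into the two regimes of $w_e$ turns $\int_{D_\bT}|\,\cdot\,|$ into a bound of the form $\prod_{u\in T}2^{-\eta_u\bn_u}$ with $\eta_u$ piecewise linear in $\bn_u$, of slope $d+\sum_{e\,:\,e^\uparrow=u}\deg\Labhom(e)$ on $\{\bn_u\ge0\}$ and $d+\sum_{e\,:\,e^\uparrow=u}\deg_\infty\Labhom(e)$ on $\{\bn_u<0\}$. Since a node with $\bn_u\ge0$ is automatically pinned, the $\bn_u\to+\infty$ and $\bn_u\to-\infty$ halves of the summation decouple; the former is performed ``from the leaves inwards'' as in \cite[Lem.~A.10]{KPZJeremy} and converges by the absence of ultraviolet divergences, while the finitely many negative-but-pinned values contribute only a bounded multiplicative constant.

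What remains -- and where I expect the genuine work to be -- is the summation over the unpinned nodes $U_\star$. Telescoping the geometric series over $\{\bn_u:u\in U_\star\}$ towards the root, one checks that it converges provided $\sum_{u\in U}\bigl(d+\sum_{e\,:\,e^\uparrow=u}\deg_\infty\Labhom(e)\bigr)<0$ for every up-closed $U\subset U_\star$. Now cutting $T$ at the nodes of such a $U$ disconnects it into exactly $|U|+1$ pieces whose leaf-sets form a partition $\CP_\Gamma(U)$ of $\CV_\star$; since no node of $U$ separates two leg-vertices, all leg-vertices lie in a single piece, so $\CP_\Gamma(U)$ is \emph{tight}; and an edge $e$ of $\Gamma$ joins two different pieces precisely when $e^\uparrow\in U$, so that $\sum_{u\in U}\bigl(d+\sum_{e\,:\,e^\uparrow=u}\deg_\infty\Labhom(e)\bigr)=\sum_{e\in\CE(\CP_\Gamma(U))}\deg_\infty\Labhom(e)+d(|\CP_\Gamma(U)|-1)=\deg_\infty\CP_\Gamma(U)$. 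Conversely, any tight partition is realised as some $\CP_\Gamma(U)$ by choosing $T$ to be a suitable caterpillar. Hence the required negativity for all up-closed $U$ is exactly the hypothesis that $\deg_\infty\CP_\Gamma<0$ for every tight partition $\CP_\Gamma$, which closes the argument. The delicate point is precisely this identification, together with the careful bookkeeping of which partial sums arise in the telescoping and of their exponents; everything else is a transcription of the proof of Proposition~\ref{prop:WeinFancy}.
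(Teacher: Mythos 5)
Your route is genuinely different from the paper's: the paper disposes of this proposition in one line, by observing that for smooth $K$ one has $\Pi^{K,R}=\Pi^{0,K+R}_\BPHZ$ (the twisted antipode acts trivially when the small-scale kernel assignment vanishes), so the statement is a special case of Theorem~\ref{theo:infinity}. A direct Hepp-sector proof along your lines is certainly viable -- it is essentially the large-scale half of the proof of Theorem~\ref{theo:infinity}, where the identification of initial segments of $T$ with partitions and the convergence of $\sum_\bn\prod_u 2^{-\eta_u\bn_u}$ under a sign condition on all initial segments is carried out carefully (by induction on the number of leaves; note that a naive ``telescoping towards the root'' does not work when some $\eta_u$ have the wrong sign). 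However, as written your argument has two genuine gaps.

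First, you smuggle in the hypothesis $\deg\bar\Gamma>0$ for all subgraphs, which is not part of the statement and is not needed: the first factor ranges over $\CK_\infty^-$, i.e.\ \emph{smooth} kernels, so there is no ultraviolet problem at all. Your reduction discards this by replacing $G_\Labhom$ near the origin by the possibly non-integrable bound $r^{\deg\Labhom}$; you should instead use $|G_\Labhom(x)|\lesssim 1$ for $|x|\le 1$ (with constant depending on $\|K_\Labhom\|_\infty$), after which the small-scale part of the sum converges with exponent $d>0$ unconditionally. As it stands you prove a strictly weaker statement than the one asked.

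Second, and more seriously, your bookkeeping of which scales are ``pinned'' is wrong. The constraint coming from $\supp\phi$ is that \emph{all} leg-vertices lie within $O(1)$ of each other, which forces $\bn_u\ge -C_0$ for every $u\ge L^\uparrow$, where $L^\uparrow$ is the least common ancestor in $T$ of the leg-vertices -- not merely for the nodes whose subtree separates two legs. In particular $L^\uparrow$ itself (whose subtree contains \emph{all} legs, hence is ``unpinned'' by your definition) is pinned, as are leg-free subtrees sitting below it. Since cutting $T$ at an initial segment containing $L^\uparparrow$ produces a partition that is \emph{not} tight, your criterion demands $\deg_\infty\CP_\Gamma<0$ for partitions the hypothesis says nothing about. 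Concretely, for the three-vertex chain $v_1\to v_2\to v_3$ with legs at $v_1,v_3$ (the paper's own example) and the tree topology in which $v_2,v_3$ are siblings, your $U_\star$ contains the root and your argument requires $d+\deg_\infty\Labhom_1<0$, which does not follow from the tightness hypothesis $d+\deg_\infty\Labhom_1+\deg_\infty\Labhom_2<0$; the correct observation is that the root is pinned for that topology, so no condition is needed there. Replacing ``separates two legs'' by ``is a descendant of $L^\uparrow$'' repairs this: the unconstrained nodes then form exactly the complement of the subtree rooted at $L^\uparrow$, every initial segment avoiding $L^\uparrow$ yields a tight partition, and your exponent identification $\sum_{u\in U}\eta_u=\deg_\infty\CP_\Gamma(U)$ goes through. (A small further remark: the typo ``$u\ge L^\uparrow$'' aside, you also do not need the converse direction that every tight partition is realised by some tree.)
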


\begin{proof}
This is a corollary of Theorem~\ref{theo:infinity} below: given \eqref{e:PiKR} and 
given that we restrict ourselves to $K \in \CK_\infty^-$, it suffices to note that 
$\Pi^{K,R} = \Pi^{0,K+R}_\BPHZ$.
\end{proof}

\begin{remark}
The reason why it is natural to restrict oneself to tight partitions can best be seen with the 
following very simple example. Consider the case
\begin{equ}
\Gamma = 
\begin{tikzpicture}[style={thick},baseline = -0.1cm]
\node[dot,label={[shift={(0,0.1)}]below:{\tiny$v_1$}}] (l) at (0,0) {};
\node[dot,label={[shift={(0,0.1)}]below:{\tiny$v_2$}}] (m) at (1,0) {};
\node[dot,label={[shift={(0,0.1)}]below:{\tiny$v_3$}}] (r) at (2,0) {};
\draw[->] (l) -- (m) node[midway,above=-0.1] {\tiny$\Labhom_1$};
\draw[->] (m) -- (r) node[midway,above=-0.1] {\tiny$\Labhom_2$};
\draw[thick,red] (l) -- ++(180:0.5) node[midway,above=-0.1] {\tiny$0$};
\draw[thick,red] (r) -- ++(0:0.5) node[midway,above=-0.1] {\tiny$0$};
\end{tikzpicture}\;.
\end{equ}
Writing $G_i = K_{\Labhom_i} + R_{\Labhom_i}$ and identifying functions
with distributions as usual, one then has $(\Pi^{K,R}\Gamma)(x,y) = (G_1 \star G_2)(y-x)$.
If the $G_i$ are smooth functions, then this is of course well-defined as soon as their combined
decay at infinity is integrable, which naturally leads to the condition $\deg_\infty \Labhom_1
+\deg_\infty \Labhom_1 < -d$, which corresponds indeed to the condition $\deg_\infty \CP_\Gamma < 0$ for
$\CP_\Gamma = \{\{v_1,v_3\},\{v_2\}\}$, the only tight partition of the inner vertices
of $\Gamma$. Considering instead \textit{all} partitions would lead to the condition
$\deg_\infty \Labhom_i < -d$ for $i =1,2$, which is much stronger than necessary.
\end{remark}

Note now the following two facts.
\begin{claim}
\item The condition of Proposition~\ref{prop:largeScale} is compatible with the definition of the space
$\CH$ in the sense that if it is satisfied for one of the summands in the left hand side of
\eqref{e:IBP}, then it is also satisfied for all the others, as an immediate consequence of
the fact that $\deg_\infty \Labhom^{(k)} = \deg_\infty \Labhom$. In particular, we have a well-defined
subspace $\CH_+ \subset \CH$ on which the condition of Proposition~\ref{prop:largeScale} holds and therefore
$\Pi^{K,R} \Gamma$ is well-defined for $(K,R) \in \CK_\infty^-\times \CK_0^+$.
\item If $\Gamma$ satisfies the assumption of Proposition~\ref{prop:largeScale}, then it is also 
satisfied for all of the Feynman
diagrams appearing in the second factor of the summands of $\Delta \Gamma$, 
so that $\CH_+$ is invariant
under the action of $\CG_-$ on $\CH$.
\end{claim}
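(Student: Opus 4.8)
The plan is to dispatch the two points in turn; both rest on the single observation that, for a partition $\CP_\Gamma$ of the internal vertices of a Feynman diagram, the quantity $\deg_\infty \CP_\Gamma$ depends only on (i) how the internal vertices are grouped, (ii) which internal edges run between two distinct blocks, and (iii) the values $\deg_\infty \Labhom(e)$ on those crossing edges — and that $\deg_\infty$ is left unchanged by the derivative operation $\Labhom \mapsto \Labhom^{(k)}$ in view of the assumed consistency relation $\deg_\infty \Labhom^{(k)} = \deg_\infty \Labhom$.

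\emph{First point.} Fix $\Gamma$, an internal vertex $v \in \CV_\star$ and an index $i$. Every summand of the relation $\sum_{e\sim v}\d_{(e,v)}^{\delta_i}\Gamma$ of \eqref{e:IBP} has the same underlying graph as $\Gamma$, the same internal vertices, the same leg-incident vertices, and the same crossing-edge set $\CE(\CP_\Gamma)$ for every partition $\CP_\Gamma$; only the $\Labhom$-label of one edge is bumped (up to sign). By the observation above, $\deg_\infty \CP_\Gamma$ is therefore the same for $\Gamma$ and for each summand, so $\Gamma$ satisfies the condition of Proposition~\ref{prop:largeScale} if and only if every summand does. Letting $\CT_+ \subset \CT$ be the span of the Feynman diagrams satisfying the condition and $\CT' \subset \CT$ the span of the remaining ones, we conclude that each defining relation of $\d\CT$ lies entirely in $\CT_+$ or entirely in $\CT'$, whence $\d\CT = (\d\CT \cap \CT_+) \oplus (\d\CT\cap\CT')$ and $\CH_+ \eqdef \CT_+/(\d\CT\cap\CT_+)$ is a well-defined subspace of $\CH$. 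Finally, for $(K,R) \in \CK^-_\infty \times \CK^+_0$ the integral \eqref{e:PiKR} converges on every diagram of $\CT_+$ by Proposition~\ref{prop:largeScale}; and since $G_\Labhom = K_\Labhom + R_\Labhom$ is smooth, the same integration-by-parts argument that gives $\Pi\,\d\CT = 0$ for smooth kernels shows that $\Pi^{K,R}$ annihilates $\d\CT \cap \CT_+$, the only new point being that the boundary terms at infinity vanish because of the decay \eqref{e:boundInfinity} of the $R_\Labhom$ together with the compact support of the test function. Hence $\Pi^{K,R}$ descends to a well-defined map on $\CH_+$.

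\emph{Second point.} By \eqref{e:coaction}, every Feynman diagram occurring in the second tensor factor of $\Delta\Gamma$ is of the form $\Gamma/(\bar\Gamma,\ell)$ with $\bar\Gamma \subset \Gamma$ a subgraph and $\ell\colon\d\bar\Gamma\to\N^d$. Assume $\Gamma$ satisfies the hypothesis of Proposition~\ref{prop:largeScale} and let $\CP'$ be an arbitrary tight partition of the internal vertices of $\Gamma/(\bar\Gamma,\ell)$. I would lift $\CP'$ to the partition $\CP$ of $\CV_\star(\Gamma)$ that sends each internal vertex of $\Gamma$ outside $\bar\Gamma$ to the $\CP'$-block containing it, and each vertex of $\bar\Gamma$ to the $\CP'$-block of the node onto which its connected component was contracted. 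Then $|\CP| = |\CP'| \ge 2$, each connected component of $\bar\Gamma$ lies in a single block of $\CP$, so no edge of $\bar\Gamma$ is a crossing edge of $\CP$; the crossing edges of $\CP$ are in a $\deg_\infty$-preserving bijection with those of $\CP'$ (contraction creates no new edges, and the derivative decorations produced by $\ell$ do not affect $\deg_\infty$); and every leg-incident vertex of $\Gamma$ lies in the block of $\CP$ lifting the block of $\CP'$ that contains all leg-incident vertices of $\Gamma/(\bar\Gamma,\ell)$. Thus $\CP$ is tight and $\deg_\infty \CP_\Gamma = \deg_\infty \CP'$, so $\deg_\infty\CP' = \deg_\infty\CP_\Gamma < 0$ by the hypothesis on $\Gamma$; as $\CP'$ was arbitrary, $\Gamma/(\bar\Gamma,\ell)$ again satisfies the hypothesis and hence lies in $\CT_+$. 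Consequently $M^g\Gamma = (g\otimes\id)\Delta\Gamma$ is a linear combination of elements of $\CT_+$ for every character $g$, so $\CH_+$ is stable under the $\CG_-$-action.

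There is no serious obstacle here — the statement is essentially bookkeeping, as the paper's phrasing suggests. The one step that requires a moment's care is the claim in the second point that passing to $\Gamma/(\bar\Gamma,\ell)$ really does induce a bijection between the crossing edges of $\CP$ on $\Gamma$ and those of $\CP'$ on the contracted diagram; but this is immediate once one observes that an edge internal to $\bar\Gamma$ can never be a crossing edge of $\CP$ and that contraction merely reconnects (and possibly relabels, without changing $\deg_\infty$) the remaining edges.
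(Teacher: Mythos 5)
Your proposal is correct and follows exactly the route the paper intends: the paper treats both items as immediate bookkeeping (the first via $\deg_\infty\Labhom^{(k)}=\deg_\infty\Labhom$, the second by the evident lifting of a tight partition of $\Gamma/(\bar\Gamma,\ell)$ to a tight partition of $\Gamma$ with the same crossing edges and the same $\deg_\infty$), and you have simply written out those checks carefully. The only detail you add beyond the paper — verifying that $\Pi^{K,R}$ annihilates $\d\CT\cap\CT_+$ via integration by parts with vanishing boundary terms — is a correct and worthwhile precision.
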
 
This suggests that if we define a BPHZ renormalised valuation on $\CH_+$ by
\begin{equ}[e:fullBPHZ]
\Pi_\BPHZ^{K,R} = \bigl(\Pi^K_- \hat \CA \otimes \Pi^{K,R}\bigr)\Delta \;,
\end{equ}
then it should be possible to extend it to kernel assignments exhibiting self-similar 
behaviour both at the origin and at infinity. This is indeed the case, as demonstrated by the
main theorem of this section. 

\begin{theorem}\label{theo:infinity}
The map $(K,R) \mapsto \Pi_\BPHZ^{K,R}\Gamma$ extends continuously to 
$(K,R) \in \CK_0^-\times \CK_0^+$ for all $\Gamma \in \CH_+$.
\end{theorem}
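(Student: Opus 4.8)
The plan is to follow the proof of Theorem~\ref{theo:BPHZ} as closely as possible, replacing its small-scale Hepp-sector analysis by one that resolves \emph{all} scales, and to identify the additional large-scale power-counting condition that then appears as being precisely the hypothesis $\Gamma\in\CH_+$. As in the compact-support case, it suffices by density of $\CK^-_\infty\times\CK^+_\infty$ in $\CK^-_0\times\CK^+_0$ to establish, for smooth compactly supported $K,R$ and every $\phi$ supported in the unit ball, a quantitative bound
\begin{equ}
\bigl|\bigl(\Pi^{K,R}_\BPHZ\Gamma\bigr)(\phi)\bigr| \le C_\Gamma\prod_{e\in\CE}\bigl(|K_{\Labhom(e)}|_{N_\Gamma}+\|R_{\Labhom(e)}\|_{N_\Gamma}\bigr)\sup_{|k|\le N_\Gamma}\|D^{(k)}\phi\|_{L^\infty}\;,
\end{equ}
where $\|R_\Labhom\|_N$ is the smallest constant for which \eqref{e:boundInfinity} holds for all $|k|\le N$; combined with the facts that $K\mapsto\Pi^K_-\hat\CA$ already extends continuously to $\CK^-_0$ (as established in the proof of Theorem~\ref{theo:BPHZ}) and that $\CH_+$ is stable under $\Delta$ (so that, by the remarks after Proposition~\ref{prop:largeScale}, $\Pi^{K,R}$ is only ever applied in \eqref{e:fullBPHZ} to diagrams satisfying the tight-partition condition), this yields the asserted continuous extension. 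The first step is to expand $G_\Labhom=K_\Labhom+R_\Labhom$ in every factor of \eqref{e:PiKR}, obtaining a finite linear combination of terms indexed by a colouring of the internal edges of $\Gamma$ into ``$K$-edges'' and ``$R$-edges''. Since the counterterms produced by $\hat\CA$ are only ever paired with $\Pi^K_-$, the forest formula (Proposition~\ref{prop:forest}), its resummed form (Lemma~\ref{lem:resum}) and the reduction to full subgraphs (Proposition~\ref{prop:full}) all apply verbatim to each coloured term, and the derivatives that contractions deposit on $R$-edges are harmless because $D^kR_\Labhom$ still obeys \eqref{e:boundInfinity} by \eqref{e:compatR}.

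The second step is to run the multiscale decomposition of Section~\ref{sec:analytical}: break $\S^{\CV_\star}$ into Hepp sectors $D_\bT$, $\bT=(T,\bn)$, but now allow the labels $\bn$ to take arbitrarily negative values; only the scale $\bn_{u_0}$ of the least common ancestor $u_0$ in $T$ of the leg-vertices of $\Gamma$ is bounded below by a fixed constant $-C$, since $\phi$ is supported in the unit ball and we work modulo translations. On $D_\bT$, a $K$-coloured edge $e$ contributes a factor bounded by $2^{-\bn(e)(\deg\Labhom(e)-|k|)}$ (with $k$ the number of derivatives landing on it, $|k|\le N_\Gamma$) and \emph{vanishes} unless $\bn(e)\ge-C$ by compact support, while an $R$-coloured edge contributes a factor bounded by $\min\bigl(1,2^{-\bn(e)\deg_\infty\Labhom(e)}\bigr)$ regardless of its derivatives, the second alternative being relevant at coarse scales. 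Using the \emph{same} safe-forest partition $\CP_\bT$ of $\CF^-_\Gamma$ as in the proof of Theorem~\ref{theo:BPHZ} (its construction depends only on $T$, not on $\bn$), one is reduced, for each tree $T$ and each safe forest, to exhibiting finitely many exponent functions $\eta_\ell$ on the interior nodes of $T$ for which the analogue of \eqref{e:wantedboundeta} holds and whose partial sums have the correct signs to make the geometric sum over $\bn$ converge.

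For the UV directions — sectors where some $\bn_u\to+\infty$, i.e.\ the nodes at or below $u_0$, where $\bn$ is bounded below — the analysis of Theorem~\ref{theo:BPHZ} applies word for word, the only role of $R$-coloured edges being to improve the power counting, so that the required $\sum_{w\ge u}\eta_\ell(w)>0$ follows exactly as there. The new point concerns the coarse directions, where $\bn_{u}\to-\infty$ for nodes $u$ above $u_0$: there, any surviving colouring must have all edges crossing between the clusters separated at scale $2^{-\bn_u}$ coloured by $R$ (a $K$-edge would be too long and kill the term), so they contribute $\deg_\infty$, while the UV contractions only relocalise tight divergent subgraphs and hence leave the inter-cluster counting untouched. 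Consequently the power counting governing the coarse sum is exactly the collection of quantities $\deg_\infty\CP$ over partitions $\CP$ of $\CV_\star$ into the tightly packed clusters, the cluster containing all leg-vertices being forced to stay near the origin by $\supp\phi$; these partitions are all \emph{tight}, so $\deg_\infty\CP<0$ for all of them by the hypothesis $\Gamma\in\CH_+$ (which, by the second remark after Proposition~\ref{prop:largeScale}, also holds for the diagram in the right leg of $\Delta\Gamma$ on which $\Pi^{K,R}$ acts). This is precisely what makes the geometric sum over the negative values of $\bn$ converge when summed ``from the root outwards'' down to $u_0$; combining with the ``from the leaves inwards'' summation below $u_0$, the Hepp-sector summation lemma \cite[Lem.~A.10]{KPZJeremy} delivers the bound above, completing the proof.

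The step I expect to be the main obstacle is this IR bookkeeping: one has to verify carefully that, after the UV contractions prescribed by an arbitrary safe forest, the coarse-scale exponent at a node above $u_0$ really does reduce to $\deg_\infty$ of a tight partition of the \emph{uncontracted} vertex set $\CV_\star$, with no deficit hidden inside a contracted subgraph, and — as in Lemma~\ref{lem:propsgraph} — that the ``normal'' mixed nodes do not spoil the positivity; organising all of these inequalities simultaneously is where the real work lies. It is also here that the absence of any positive renormalisation forces the restriction to $\CH_+$ rather than to all of $\CH$, as anticipated in the footnote following \eqref{e:compatR}.
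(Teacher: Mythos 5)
Your overall strategy --- expand $G=K+R$ into edge colourings and then run a \emph{single} Hepp-sector analysis over all scales, positive and negative --- is a genuinely different route from the paper's. The paper also starts from the colouring (it introduces the space $\tilde \CT$ of pairs $(\Gamma,\tilde\CE)$ and the map $\CX$, together with the compatibility $\tilde\Delta\CX=(\id\otimes\CX)\Delta$, which is the precise form of your remark that the forest formula applies to each coloured term), but it then \emph{decouples} the two regimes rather than treating them in one multiscale expansion: each $R$-edge of a coloured term is cut open into two legs labelled $\delta$, and the product of the $R$-kernels is absorbed into the test function as $\phi\otimes_\tau R$. The UV problem is then disposed of by a black-box application of Theorem~\ref{theo:BPHZ} to the cut diagram $\CU\tau$ tested against the localised pieces $(\phi\otimes_\tau R)\Psi_x$ of a partition of unity over the integer lattice, and the IR problem becomes an elementary summability question for $\sum_x |S_x|$, handled by a large-scale Hepp decomposition with $\bn\le 0$ and an induction over the number of leaves of $T$, using that $\deg_\infty\CP_{T_i}<0$ for every initial segment $T_i$. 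Tightness enters because the block containing the leg-vertices is pinned near the origin by $\supp\phi$ and the compact support of the $K$-edges, and \eqref{e:compatR} is exactly what makes the absorption into the test function harmless when contractions deposit derivatives on cut $R$-edges. This buys a proof in which the UV and IR estimates never interact.

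The gap in your proposal is the step you flag yourself: the IR bookkeeping after arbitrary safe-forest contractions is asserted, not carried out, and it is the entire new content of the theorem. Two concrete points would have to be supplied. First, Lemma~\ref{lem:Taylor} as stated only covers kernels obeying the small-scale bound \eqref{e:propKer}; when a contracted divergent subgraph has adjacent $R$-coloured edges living at coarse scales, you need a mixed Taylor estimate in which some of the factors obey \eqref{e:boundInfinity} instead, and you must check that the gain $(\delta/\Delta)^{N(\gamma)}$ survives in that setting. Second, the claim that every coarse-scale exponent reduces to $\deg_\infty$ of a \emph{tight} partition needs an argument in the style of Lemma~\ref{lem:propsgraph}: for initial segments of $T$ containing the least common ancestor $u_0$ of the leg-vertices the induced partition of $\CV_\star$ need not be tight, so you must show that such segments are already controlled by the UV positivity, and that the ``normal'' nodes mixing the two regimes spoil neither sign condition. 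None of this is impossible --- it is essentially the phase-space expansion of \cite{FMRS85} --- but as written your argument establishes the theorem only modulo its hardest step, which the paper's cutting construction avoids entirely.
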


\begin{proof}
Consider the space $\tilde \CT$ defined as the vector space generated by the set of 
pairs $(\Gamma, \tilde \CE)$, where $\Gamma$ is a Feynman diagram as before and 
$\tilde \CE \subset \CE_\star$ is a subset of its internal edges. 
We furthermore define a linear map 
$\CX \colon \CT \to \tilde \CT$ by
$\CX \Gamma = \sum_{\tilde \CE \subset \CE_\star} (\Gamma, \tilde \CE)$,
and we define a valuation on $\tilde \CT$ by setting
\begin{equs}
\bigl(\tilde \Pi^{K,R} (\Gamma, \tilde \CE)\bigr)(\phi) &= \int_{\S^{\CV_\star}} \prod_{e \in \CE_\star\setminus \tilde \CE} K_{\Labhom(e)}(x_{e_+} - x_{e_-}) \prod_{e \in \tilde \CE}
R_{\Labhom(e)}(x_{e_+} - x_{e_-}) \\
&\quad \times \bigl(D_1^{\ell_1}\cdots D_k^{\ell_k} \phi\bigr)(x_{v_1},\ldots,x_{v_k})\,dx\;,\label{e:tildePiKR}
\end{equs}
so that $\Pi^{K,R} = \tilde \Pi^{K,R} \CX$.
Similarly to before, we define $\d\tilde \CT$ by the analogue of \eqref{e:IBP}
and we set $\tilde \CH = \tilde \CT / \d\tilde \CT$, noting that $\tilde \Pi^{K,R}$ is well-defined
on $\tilde \CH$.

We also define a map $\tilde \Delta\colon \tilde \CH \to \CH_- \otimes \tilde \CH$
in the same way as \eqref{e:coaction}, but with the sum restricted to subgraphs $\gamma$ whose edge sets
are subsets of $\CE_\star \setminus \tilde \CE$. (This condition guarantees that $\tilde \CE$ 
can naturally be identified with a subset of the quotient graph $\Gamma / \gamma$.)
With this definition, one has the identity
\begin{equ}
\tilde \Delta \CX = (\id \otimes \CX)\Delta\;,
\end{equ}
as a consequence of the fact that the set of pairs $(\tilde \CE,\gamma)$ such that 
$\tilde \CE \subset \CE_\star$ and $\gamma$ is a subgraph of $\Gamma$ containing only edges
in $\CE_\star \setminus \CE$ is the same as the set of pairs such that 
$\gamma$ is an arbitrary subgraph of $\Gamma$ and $\tilde \CE$ is a subset of the edges of
$\Gamma / \gamma$. This in turn implies that one has the identity
\begin{equ}[e:BPHZglobal]
\Pi_\BPHZ^{K,R}\Gamma
= \bigl(\Pi^K_- \hat \CA \otimes \Pi^{K,R}\bigr)\Delta \Gamma
= \bigl(\Pi^K_- \hat \CA \otimes \tilde \Pi^{K,R}\bigr)\tilde \Delta \CX \Gamma\;.
\end{equ}
Let now $\tilde \CT_+$ be the subspace of $\tilde \CT$ consisting of pairs $(\Gamma, \tilde \CE)$ 
such that $\deg_\infty \CP < 0$ for every tight partition $\CP$ with $\CE(\CP) \subset \tilde \CE$.
Again, this defines a subspace $\tilde \CH_+ \subset \tilde\CH$ invariant under the action
of $\CG_-$ by $\tilde \Delta$ and $\CX$ maps $\CH_+$ (defined as in the statement of the theorem) 
into $\tilde \CH_+$, so that it
remains to show that $\bigl(\Pi^K_- \hat \CA \otimes \tilde \Pi^{K,R}\bigr)\tilde \Delta$
extends to kernels $(K,R) \in \CK_0^-\times \CK_0^+$ on all of $\tilde \CH_+$.

For this, we now fix $\tau = (\Gamma, \tilde \CE) \in \tilde \CT_+$ and we remark that for $R \in \CK_\infty^+$
we can interpret the factor $\prod_{e \in \tilde \CE}
R_{\Labhom(e)}(x_{e_+} - x_{e_-})$ in \eqref{e:tildePiKR} as being part of the test function.
More precisely, we set
\begin{equ}
\phi \otimes_\tau R = \phi(x_{1},\ldots,x_{k})\prod_{e \in \tilde \CE}
R_{\Labhom(e)}(x_{[e]_+} - x_{[e]_-})\;,
\end{equ}
where $[\cdot]_\pm \colon \tilde \CE \to \{k+1,\ldots,k+2|\tilde \CE|\}$ is an arbitrary but fixed numbering
of the half-edges of $\tilde \CE$.
We then have $(\tilde \Pi^{K,R} \tau)(\phi) = (\Pi^{K} \CU \tau)(\phi \otimes_\tau R)$,
where $\CU \tau \in \CT_+$ is the Feynman diagram obtained by cutting each of the edges $e \in \tilde \CE$
open, replacing them by two legs with label $\delta$ and numbers given by $[e]_\pm$. 
It is immediate from the definitions
and the condition \eqref{e:compatR} that this is compatible with the 
actions of $\tilde \Delta$ and $\Delta$ in the sense that one has 
\begin{equ}
((g \otimes \tilde \Pi^{K,R}) \tilde \Delta \tau)(\phi)  =
((g \otimes \Pi^{K}) \Delta \CU\tau)(\phi \otimes_\tau R)\;,\qquad \forall g \in \CG_-\;.  
\end{equ}
Inserting this into \eqref{e:BPHZglobal}, we conclude that
\begin{equ}
\bigl(\Pi_\BPHZ^{K,R}\Gamma\bigr)(\phi) = \sum_{\tilde \CE \subset \CE_\star} \bigl(\Pi_\BPHZ^K \CU(\Gamma,\tilde \CE)\bigr)(\phi \otimes_{(\Gamma,\tilde \CE)} R)\;,
\end{equ}
so that it remains to bound separately each of the terms in this sum. 

For this, we write $\S_d = \Z^d$ for the discrete analogue of our state space $\S = \R^d$,
we set $N = k+2|\tilde \CE|$, and we write
$1 = \sum_{x \in \S_d^N} \Psi_x$ for a partition of unity with the property that 
$\Psi_x(y) = \Psi_0(y-x)$ and that $\Psi_0$ is supported in a cube of sidelength $2$ centred 
at the origin, so that it remains to show that 
\begin{equ}
\sum_{x \in \S_d^N}S_x\;,\qquad S_x \eqdef \bigl(\Pi_\BPHZ^K \CU(\Gamma,\tilde \CE)\bigr)((\phi \otimes_{(\Gamma,\tilde \CE)} R)\Psi_x)\;,
\end{equ}
is absolutely summable. It then follows from Theorem~\ref{theo:BPHZ} that the summand in the
above expression is bounded by 
\begin{equ}
|S_x| \lesssim \prod_{e \in \tilde \CE} (1 + |x_{[e]_+} - x_{[e]_-}|)^{\deg_\infty\Labhom(e)}\;,
\end{equ}
for all $(K,R) \in \CK_0^-\times \CK_0^+$.
This expression is not summable in general, so we need to exploit the fact that there
are many terms that vanish. For instance, since the test function $\phi$ is compactly supported, there exists
$C$ such that $S_x = 0$ as soon as $|x_i| \ge C$ for some $i \le k$.
Similarly, since the kernels $K_\Labhom$ are compactly supported, there exists $C$ such that 
$S_x = 0$ as soon as there are two legs $[i]$ and $[j]$ of $\CU_\tau$ attached to the same connected component
and such that $|x_i-x_j| \ge C$.

Let now $\CP$ be the finest tight partition for $\Gamma$ with $\CE(\CP) \subset \tilde \CE$ and
let $L \in \CP$ denote the (unique) set which contains all the vertices adjacent to the legs of $\Gamma$.
We conclude from the above consideration that one has  
\begin{equ}[e:wantedBoundLS]
\sum_{x \in \S_d^N}|S_x| \lesssim \sum_{y \in \S_d^{\CP}} \one_{\{y_L = 0\}} \prod_{e \in \CE(\CP)} (1 + |y_{[e_+]} - y_{[e_-]}|)^{\deg_\infty\Labhom(e)}\;,
\end{equ}
where $[v] \in \CP$ denotes the element of $\CP$ containing the vertex $v$.
At this stage, the proof is virtually identical to that of Weinberg's theorem, with the difference
that we need to control the large-scale behaviour instead of the small-scale behaviour.
We define Hepp sectors $D_\bT \subset \S_d^{\CP}$ for $\bT = (T,\bn)$ in exactly the same way as before,
the difference being that this time no two elements can be at distance \textit{less} than $1$, so that 
we can restrict ourselves to scale assignments with $\bn_v \le 0$ for every inner vertex of $T$.
Also, in view of \eqref{e:wantedBoundLS}, the leaves of $T$ are this time given by elements of $\CP$.
In the same way as before, the number of elements of $D_\bT$ is of the order of 
$\prod_{u \in T} 2^{-d \bn_u}$ so that one has again a bound of the type
\begin{equ}[e:goodbound]
\sum_{x \in \S_d^N}|S_x| \lesssim  \sum_{\bT} \prod_{u \in T} 2^{-\bn_u\eta_u}\;,\quad \eta_u = d + \sum_{e \in \CE(\CP)}\one_{e^\uparrow}
\deg_\infty\Labhom(e)\;,
\end{equ}
where $e^\uparrow$ denotes the common ancestor in $T$ of the two elements of $\CP$ containing
the two endpoints of $e$.
Our assumption on $\Gamma$ now implies that for every initial segment $T_i$ of $T$\footnote{i.e.\ $T_i$
is such that if $u \in T_i$ and $v \le u$, then $v \in T_i$.},
one has $\sum_{u \in T_i} \eta_u < 0$. This is because one has
$\sum_{u \in T_i} \eta_u = \deg \CP_{T_i}$, where $\CP_{T_i}$ is the coarsest coarsening of $\CP$
such that for every edge $e \in \CE(\CP_{T_i})$, one has $e^\uparrow \not \in \CP_{T_i}$.

We claim that any such $\eta$ satisfies 
\begin{equ}
S(T,\eta) \eqdef \sum_\bn \prod_{u \in T} 2^{-\bn_u\eta_u} < \infty\;,
\end{equ}
where again the sum is restricted to negative $\bn$ that are monotone on $T$. 
This can be shown by induction over the number of leaves of $T$. If $T$ has only two leaves,
then this is a converging geometric series and the claim is trivial. 
Let now $T$ be a tree with $m \ge 3$ leaves and assume that the claim holds for all trees with $m-1$ leaves.
Pick an inner vertex $u$ of $T$ which has exactly two descendants (such a vertex always exists since
$T$ is binary) and write $\tilde T$ for the new tree obtained from $T$ by deleting $u$ and coalescing 
its two descendants into one single leaf. Write furthermore $u^\uparrow$ for the parent of $u$ in $T$,
which exists since $T$ has at least three leaves. The following example illustrates this construction:
\begin{equ}
T = 
\begin{tikzpicture}[baseline=1.3cm]
\node[dot] (r) at (0,3) {};
\node[dot] (0) at (-1,1) {};
\node[dot,label={[shift={(-0.1,-0.1)}]above right:{$u^\uparrow$}}] (1) at (1,2) {};
\node[dot] (00) at (-1.5,0) {};
\node[dot] (01) at (-0.5,0) {};
\node[dot] (10) at (0.25,0) {};
\node[dot,label={[shift={(-0.1,-0.1)}]above right:{$u$}}] (11) at (1.5,1) {};
\node[dot] (110) at (1,0) {};
\node[dot] (111) at (2,0) {};
\draw (r) -- (0);
\draw (0) -- (00);
\draw (0) -- (01);
\draw (r) -- (1);
\draw (1) -- (10);
\draw (1) -- (11);
\draw (11) -- (110);
\draw (11) -- (111);
\end{tikzpicture}\qquad \Rightarrow\qquad
\tilde T = 
\begin{tikzpicture}[baseline=1.3cm]
\node[dot] (r) at (0,3) {};
\node[dot] (0) at (-1,1) {};
\node[dot,label={[shift={(-0.1,-0.1)}]above right:{$u^\uparrow$}}] (1) at (1,2) {};
\node[dot] (00) at (-1.5,0) {};
\node[dot] (01) at (-0.5,0) {};
\node[dot] (10) at (0.5,0) {};
\node[dot] (11) at (1.5,0) {};
\draw (r) -- (0);
\draw (0) -- (00);
\draw (0) -- (01);
\draw (r) -- (1);
\draw (1) -- (10);
\draw (1) -- (11);
\end{tikzpicture}
\end{equ}
Since the condition on $\eta$ is open and since $S_\eta$ increases when increasing $\eta_u$,
we can assume without loss of generality that $\eta_u \neq 0$.
There are then two cases:
\begin{claim}
\item If $\eta_u < 0$, we have $\sum_{\bn_u > \bn_{u^\uparrow}} 2^{-\bn_u\eta_u} \approx 1$,
so that 
\begin{equ}[e:induction]
S(T,\eta) \approx S(\tilde T,\tilde \eta)\;,
\end{equ}
where $\tilde \eta_v$ is just the restriction of $\eta_v$ to the tree $\tilde T$.
Since initial segments of $\tilde T$ are also initial segments of $T$ and since $\tilde \eta = \eta$ on them,
we can make use of the induction hypothesis to conclude.
\item If $\eta_u > 0$, we have $\sum_{\bn_u > \bn_{u^\uparrow}} 2^{-\bn_u\eta_u} \approx 2^{-\bn_{u^\uparrow}\eta_u}$,
so that \eqref{e:induction} holds again, but this time 
$\tilde \eta_{u^\uparrow} = \eta_{u^\uparrow} + \eta_{u}$ and $\tilde \eta_v = \eta_v$ otherwise.
We conclude in the same way as before since the only ``dangerous'' case is that of initial
segments $\tilde T_i$ containing $u^\uparrow$, but these are in bijection with the initial
segment $T_i = \tilde T_i \cup \{u\}$ of $T$ such that $\sum_{v \in \tilde T_i} \tilde \eta_v
= \sum_{v \in T_i} \eta_v$, so that the induction hypothesis still holds.
\end{claim}
Applying this to \eqref{e:goodbound} completes the proof of the theorem.
\end{proof}

\begin{remark}
While the definition of $\Pi_\BPHZ^{K,R}$ is rather canonical, given kernel assignments $K$ and $R$,
the decomposition $G = K + R$ is \textit{not}. Using the fact that $\CG_-$ is a group, it is 
however not difficult to see that, for any two choices $(K,R), (\bar K, \bar R) \in \CK_0^-\times \CK_0^+$
such that
\begin{equ}
K_\Labhom + R_\Labhom = \bar K_\Labhom + \bar R_\Labhom\;,\qquad \forall \Labhom \in \Lab_\star\;,
\end{equ}
there exists an element $g \in \CG_-$ such that
$\Pi_\BPHZ^{\bar K,\bar R} = (g \otimes \Pi_\BPHZ^{K,R})\Delta$.
\end{remark}

\bibliographystyle{Martin}

\bibliography{refs}

\end{document}